%
\documentclass[runningheads,orivec]{llncs}
\sloppy
\usepackage{xspace}
\usepackage{hyperref}
\usepackage{mwe}
\usepackage{wrapfig}
\usepackage{paralist}

\usepackage{todonotes}

\newcommand{\nPN}{$\nu$PN\xspace}
\newcommand{\nPNs}{$\nu$PNs\xspace}

\sloppy
\usepackage{listings}
\lstset{breaklines=true}
\usepackage{comment}

\usepackage{amsfonts}
\usepackage{amssymb}
\usepackage{caption}
\usepackage{subcaption}
\usepackage{multirow}
\usepackage{thm-restate}
\usepackage{empheq}
\usepackage[inline, shortlabels]{enumitem}
\usepackage{cleveref}

\newcommand{\multiset}[1]{\boldsymbol{#1}}

\newcommand{\os}{\ensuremath{\mathfrak{E}}}
\newcommand{\tup}[1]{\langle#1\rangle}

\newcommand{\supp}{\mathit{supp}}

\newcommand{\sqleq}{\sqsubseteq}

\newcommand{\lruns}[3][\ell]{\mathit{Runs}_{#1}(#2,#3)}

\newcommand{\support}[1]{\texttt{Supp}(#1)}

\newcommand{\prefun}{{\ensuremath{\tt pre}}}
\newcommand{\postfun}{{\ensuremath{\tt post}}}

\usepackage{tikz}
\usetikzlibrary{patterns,shapes,arrows,calc,fit,arrows.meta,decorations.pathmorphing,arrows.meta,decorations.pathreplacing}
\usetikzlibrary{arrows,petri,automata,positioning}
\tikzstyle{place}=[circle,minimum height=6mm,draw]
\tikzstyle{transvert}=[rectangle,minimum width=2mm, minimum height=8mm,draw]
\tikzstyle{transhor}=[rectangle,minimum width=8mm, minimum height=2mm,draw]
\tikzset{>={Stealth[scale=1.2]}}
\tikzstyle{empty}=[circle,minimum height=6mm]

\tikzset{Rightarrow/.style={double equal sign distance,>={Implies},->},
myDouble/.style={Rightarrow,double},
dashDouble/.style={Rightarrow,double,dashed},
triple/.style={-,preaction={draw,Rightarrow}},
quadruple/.style={preaction={draw,Rightarrow,shorten >=0pt},shorten >=1pt,-,double,double
distance=0.2pt}}

\newcommand{\D}{\mathcal{D}}
\newcommand{\E}{\mathfrak{E}}
\newcommand{\F}{\mathcal{F}}

\newcommand{\K}{\mathcal{K}}
\newcommand{\M}{\mathcal{M}}
\newcommand{\N}{\mathcal{N}}

\newcommand{\T}{\mathcal{T}}

\newcommand{\X}{\mathcal{X}}


\newcommand{\fmset}[1]{\{\!\{#1\}\!\}}
\newcommand{\nestTok}{\T}

\renewcommand{\blacksquare}{\blacktriangle}
\newcommand{\nuPN}{$\nu$PN\xspace}

\newcommand*{\defeq}{\stackrel{\text{def}}{=}}
\newcommand{\var}{\text{Var}}
\newcommand{\out}{out_{\Upsilon}(t)}
\newcommand{\abs}[1]{\lvert#1\rvert}
\newcommand{\GnPN}{c-\nPN}
\newcommand{\GnPNs}{c-\nPNs}
\newcommand{\rnPN}{r-\nPN}

\tikzstyle{triangle}=[draw, regular polygon, regular polygon sides=3]
\tikzstyle{pentagon}=[draw, regular polygon, regular polygon sides=5,minimum height=6mm,]
\tikzset{>={Stealth[scale=1.2]}}

\makeatletter
\def\dasharrowfill@#1#2#3#4{%
        $\m@th
        \thickmuskip0mu
        \medmuskip\thickmuskip
        \thinmuskip\thickmuskip
        \relax
        #4#1\mkern2mu
        \xleaders\hbox{$#4\mkern2mu#2\mkern2mu$}\hfill
        \mkern2mu
        #3$%
}

\def\dashleftarrowfill@{\dasharrowfill@\leftarrow\relbar\relbar}
\def\dashrightarrowfill@{\dasharrowfill@\relbar\relbar\rightarrow}
\def\dashleftrightarrowfill@{\dasharrowfill@\leftarrow\relbar\rightarrow}
\def\dashLeftarrowfill@{\dasharrowfill@\Leftarrow\Relbar\Relbar}
\def\dashRightarrowfill@{\dasharrowfill@\Relbar\Relbar\Rightarrow}
\def\dashLeftrightarrowfill@{\dasharrowfill@\Leftarrow\Relbar\Rightarrow}

\providecommand*\xdashleftarrow[2][]{%
  \ext@arrow 0055{\dashleftarrowfill@}{#1}{#2}}
\providecommand*\xdashrightarrow[2][]{%
  \ext@arrow 0055{\dashrightarrowfill@}{#1}{#2}}
\providecommand*\xdashleftrightarrow[2][]{%
  \ext@arrow 0055{\dashleftrightarrowfill@}{#1}{#2}}
\providecommand*\xdashLeftarrow[2][]{%
  \ext@arrow 0055{\dashLeftarrowfill@}{#1}{#2}}
\providecommand*\xdashRightarrow[2][]{%
  \ext@arrow 0055{\dashRightarrowfill@}{#1}{#2}}
\providecommand*\xdashLeftrightarrow[2][]{%
  \ext@arrow 0055{\dashLeftrightarrowfill@}{#1}{#2}}
\makeatother

\newcommand{\Id}{\mathrm{Id}}

\newcommand{\dest}{\mathit{destroy}}
\newcommand{\cons}{\mathit{cons}}
\usepackage[T1]{fontenc}
%
\usepackage{graphicx}
%
%
\begin{document}
\title{Nets-within-Nets through the Lens of Data Nets}
%
%
\author{Francesco {Di Cosmo}\inst{1}\orcidID{0000-0002-5692-5681} \and
Soumodev {Mal}\inst{2}\orcidID{0000-0001-5054-5664} \and
Tephilla {Prince}\inst{3}\orcidID{0000-0002-1045-3033}}
\authorrunning{F. {Di Cosmo}, S. Mal and T. Prince}
%
\institute{Free University of Bozen-Bolzano, Italy\\ \email{frdicosmo@unibz.it} \and
Chennai Mathematical Institute, India\\
\email{soumodevmal@cmi.ac.in} \and
IIT Dharwad, India\\
\email{tephilla.prince.18@iitdh.ac.in}}
\maketitle              
\begin{abstract}
Elementary Object Systems (EOSs) are a model in the nets-within-nets (NWNs) paradigm, where tokens in turn can host standard Petri nets. We study the complexity of the reachability problem of EOSs when subjected to non-deterministic token losses.
It is known that this problem is equivalent to the coverability problem with no lossiness of conservative EOSs (cEOSs).
We precisely characterize cEOS coverability into the framework of data nets, whose tokens carry data from an infinite domain. Specifically, we show that cEOS coverability is equivalent to the coverability of an interesting fragment of data nets that extends beyond \nPN (featuring globally fresh name creation), yet remains less expressive than Unordered Data Nets (featuring lossy name creation as well as powerful forms of whole-place operations and broadcasts). 
This insight bridges two apparently orthogonal approaches to PN extensions, namely data nets and NWNs. At the same time, it enables us to analyze cEOS coverability taking advantage of known results on data nets.
As a byproduct, we immediately get that the complexity of cEOS coverability lies between \(\mathbf{F}_{\omega 2}\) and \(\mathbf{F}_{\omega^\omega}\), two classes beyond Primitive Recursive.

\keywords{Data nets,
Nets-within-Nets,
Coverability,
Hyper-Ackermannian problems,
Fast-growing complexity hierarchy}
\end{abstract}

\section{Introduction}
Recent works have studied the Nets Within Nets (NWN) paradigm~\cite{DBLP:conf/apn/Kohler-Bussmeier23a,OurPNSE24,DBLP:conf/ac/Valk03}, i.e., Petri Nets (PNs) whose tokens can in turn host PNs, as a model for the robustness of multiagent systems against agent breakdowns and, more generally, agent imperfections, modeled as token losses. These works focus on the reachability/coverability problems of Elementary Object Systems (EOS), i.e., NWNs where there is only one level of nesting. Other forms of NWNs can be found in~\cite{DBLP:conf/ac/Valk03,kohler-busmeier_survey_2014,DBLP:journals/topnoc/Kohler-BussmeierR23,DBLP:journals/fuin/Lomazova00,DBLP:conf/ershov/LomazovaS99}. Out of the several combinations of problem type (reachability and coverability), lossiness degree (none, finite, unbounded number of token losses), and level (at the outer, nested, or both levels), only reachability/coverability of EOSs under an unbounded amount of lossiness at both levels is decidable~\cite{OurPNSE24}.

The picture is moderately more optimistic when the constraint of conservativity is applied. In EOSs, each place can host tokens of a fixed type. In a conservative EOS (cEOS), if a transition consumes an object of a given type, at least one object of the same type must be produced, i.e., the set of types available in the net is conserved. On cEOSs, any non-zero amount of lossiness at both levels returns a reachability/coverability problem equivalent to perfect cEOS coverability. Instead, perfect cEOS reachability is known to be undecidable. Hence, the decidability boundary of reachability/coverability of lossy/perfect EOS/cEOS is fully charted. While lossy EOS reachability and perfect cEOS coverability are inter-reducible (see App.\ref{Sec:equivalence}) and decidable, their precise complexity class is unknown, besides an $F_{\omega2}$ lower bound~\cite{OurArxiv}.

The complexity of verification problems for NWNs have not been well-studied. In contrast, many results about the complexity of coverability for several data extensions of PNs~\cite{LazicNORW08}, whose tokens carry data from an infinite domain, are available in the literature (see Fig.1 in~\cite{LazicS16}).
For \nPN \cite{DBLP:journals/tcs/Rosa-VelardoF11}
the coverability problem is double-Ackermannian, \(F_{\omega 2}\)-complete~\cite{LazicS16}. 
For unordered data nets (UDNs), the coverability is hyper-Ackermannian, \(F_{\omega^\omega}\)-complete~\cite{Rosa-Velardo17}. Ordered data nets and ordered data Petri nets are both \(F_{\omega^{\omega^\omega}}\)-complete~\cite{Haddad2012}.

In this paper, we pave the way to the study of cEOS coverability and its equivalent forms (such as lossy EOS reachability). Instead of directly attacking this problem, for example by exploiting techniques for complexity over Well Structured Transition Systems~\cite{finkel_well-structured_2001,Rosa-Velardo17,MultiplyRecursive,schmitz2012algorithmic}, we bridge the nesting paradigm with the apparently orthogonal extension of Petri nets with data.
We show that cEOS coverability can be characterized as the coverability of a type of net with data, sitting in between \nPN and UDNs.

Our technical contributions are as follows:
\begin{enumerate}
    \item We introduce channel-$\nu$PN (\GnPN), a clean minimal extension of \nPNs suitable to characterize the complexity power of cEOS coverability.
     \item We show that cEOS coverability and \GnPN coverability are inter-reducible. As a consequence, we obtain inter-reducibility also between \GnPN coverability and lossy EOS reachability under an unbounded number of losses as well as cEOS reachability with any non-zero number of token losses. 
    \item On top of the $\F_{\omega2}$ lower bound from~\cite{OurArxiv}, we obtain a novel $\F_{\omega^\omega}$ upper bound from the literature on Unordered Data Nets (UDNs).
\end{enumerate}
Overall, we provide a novel connection between the nested and data paradigm, useful to study the complexity of perfect/imperfect EOS/cEOS reachability/coverability problems under the lens of data nets.

In Sec.~\ref{sec:prelims} we provide preliminaries on PNs, \nPNs, and cEOSs. In Sec.~\ref{sec:extend} we introduce \GnPNs and the equivalent fragment of rename $\nu$PNs (\rnPN). In Sec.~\ref{Sec:lowerbound} and Sec.~\ref{sec:toEOS}, we provide a reduction from \rnPN coverability to cEOS coverability. In Sec.~\ref{sec:fromEOS}, we provide a reduction from cEOS coverability to \GnPN coverability. In Sec.~\ref{sec:conclusions} we discuss the results and conclude.

\section{Preliminaries}\label{sec:prelims}
\paragraph*{Binary Relations and Multisets}
Given \(n \in \mathbb{N}\), we denote by \([n]\) the set \(\{1,2,\cdots,n\}\).
A \emph{multiset} $\multiset{m}$ on a set $D$ is a mapping $\multiset{m}:D\rightarrow \mathbb{N}$. The \emph{support} of $\multiset{m}$ is the set $\support{m} = \{i \mid \multiset{m}(i) > 0\}$. The multiset $\multiset{m}$ is finite if $\support{\multiset{m}}$ is finite. The family of all multisets over $D$ is denoted by $D^\oplus$. We denote a finite multiset $\multiset{m}$ by enumerating the elements $d\in\support{\multiset{m}}$ exactly $\multiset{m}(d)$ times in between $\{\{$ and $\}\}$, where the ordering is irrelevant. For example, the finite multiset ${\multiset{m}:\{p,q\}\longrightarrow \mathbb{N}}$ such that $\multiset{m}(p)=1$ and $\multiset{m}(q)=2$ is denoted by $\fmset{p,q,q}$.
The empty multiset $\fmset{}$ (with empty support) is also denoted by $\emptyset$. On the empty domain $D=\emptyset$ the only multiset is $\emptyset$; to stress this out we denote 
the empty multiset over the empty domain by $\varepsilon$.
Given two multisets $\multiset{m_1}$ and $\multiset{m_2}$ on $D$, we define the multiset $\multiset{m_1} + \multiset{m_2}$ and $\multiset{m_1} - \multiset{m_2}$ on $D$ as follows:
$(\multiset{m_1} + \multiset{m_2})(d) = \multiset{m_1}(d)  + \multiset{m_2}(d)$ and
$(\multiset{m_1} - \multiset{m_2})(d) = max(\multiset{m_1}(d)  - \multiset{m_2}(d),0)$.
Similarly, for a finite set $I$ of indices, $\sum_{i\in I} \fmset{d_i}$ denotes the multiset $\multiset{m}$ over $\bigcup_{i\in I}\{d_i\}$ such that $\multiset{m}(d)=|\{i\in I \mid d_i=d\}|$ for each $d\in D$. 
We write $\multiset{m_1} \sqleq \multiset{m_2}$ if, for each $d\in D$, we have $\multiset{m_1}(d)  \leq \multiset{m_2}(d)$.

\paragraph*{Petri Nets}

We denote a PN~\cite{murata89} as a tuple $N=(P,T,F)$, where $P$ is a finite place set, $T$ is a finite transition set, and $F$ is a flow function. We equivalently interpret $F$ via the functions ${\prefun}_N : T\rightarrow ( P\rightarrow \mathbb{N})$ where $\prefun_N(t)(p)=F(p,t)$ and ${\postfun}_N : T\rightarrow ( P\rightarrow \mathbb{N})$ where $\postfun_N(t)(p)=F(t,p)$. 
A transition $t\in T$ is enabled on a marking $\mu$ (finite multiset of places) if, for each place $p\in P$, we have $\prefun_N(t)(p)\leq \mu(p)$. Its firing results in the marking $\mu'$ such that $\mu'(p)=\mu(p)-\prefun_N(t)(p)+\postfun_N(t)(p)$, for each $p\in P$. We denote markings according to multiset notation. For example, as in Fig.~\ref{fig:PNInit}, the marking is $\fmset{p_1,p_1,p_1,p_1,p_2,p_3,p_4,p_5}$. 
If $P$ is ordered, we can denote a marking $m$ also as a vector $\tup{\multiset{m}(p)}_{p\in P}$.
We also work with the special \emph{empty PN} $\blacksquare=(\emptyset,\emptyset,\emptyset)$, whose only marking is $\varepsilon$. 

\begin{figure}[t]
    \centering
    \begin{subfigure}[b]{.49\textwidth}\centering
    \scalebox{0.7}{
    \begin{tikzpicture}

\node[place,label={[name=p2Lab]left:\scriptsize $p_2$},tokens=1](p2)at (.2,-0.5){};
\node[place,label={[name=p1Lab]left:\scriptsize $p_1$},tokens=4](p1)at (.2,.5){};

\node[transvert] (t)at (1.5,0){};
\node at (t){\scriptsize $t$};

\node[place,label={[name=p3Lab]right:\scriptsize $p_3$},tokens=1](p3)at (2.8,.75){};
\node[place,label={[name=p4Lab]right:\scriptsize $p_4$},tokens=1](p4)at (2.8,0){};
\node[place,label={[name=p5Lab] right:\scriptsize $p_5$},tokens=1](p5)at (2.8,-.75){};

\draw [->] (p1) to node [above, sloped]  (TextNode1) {\scriptsize $2$} (t.north west);

\draw [->] (p2) -- node[above,midway,sloped]{}(t.south west);
\draw [->] (t.north east) -- node[above,midway,sloped] {}(p3);

\draw [->] (t) -- node[above,midway,sloped,pos=.75] {}(p4);
\draw [->,bend right] (t.south east) --node[above,midway,sloped] {\scriptsize 2} (p5);

\end{tikzpicture}
    }
\caption{}
\label{fig:PNInit}
    \end{subfigure}
    \hfill
    \begin{subfigure}[b]{.49\textwidth}\centering
       \scalebox{0.7}{   
   \begin{tikzpicture}

\node[place,label={[name=p2Lab]left:\scriptsize $p_2$}](p2)at (.2,-0.5){};
\node[place,label={[name=p1Lab]left:\scriptsize $p_1$},tokens=2](p1)at (.2,.5){};

\node[transvert] (t)at (1.5,0){};
\node at (t){\scriptsize $t$};

\node[place,label={[name=p3Lab]right:\scriptsize $p_3$},tokens=2](p3)at (2.8,.75){};
\node[place,label={[name=p4Lab]right:\scriptsize $p_4$},tokens=2](p4)at (2.8,0){};
\node[place,label={[name=p5Lab] right:\scriptsize $p_5$},tokens=3](p5)at (2.8,-.75){};

\draw [->] (p1) to node [above, sloped]  (TextNode1) {\scriptsize $2$} (t.north west);

\draw [->] (p2) -- node[above,midway,sloped]{}(t.south west);
\draw [->] (t.north east) -- node[above,midway,sloped] {}(p3);

\draw [->] (t) -- node[above,midway,sloped,pos=.75] {}(p4);
\draw [->,bend right] (t.south east) --node[above,midway,sloped] {\scriptsize 2} (p5);

\end{tikzpicture}
    }
\caption{}
\label{fig:PNPost}
    \end{subfigure}
    \caption{A simple Petri net and the resulting marking on firing t.}
    \label{fig:pnex}
\end{figure}

\paragraph*{\texorpdfstring{\nPN}{nuPN}}

We recall $\nu$PN as introduced by~\cite{LazicS16}. Let $\Upsilon$ and $\X$ be disjoint infinite sets of variables. The variables in $\X$ (denoted by $x_i$ for some $i\in\mathbb{N}$) are called \textit{standard} variables, while those in $\Upsilon$ are called \textit{fresh}. 
Let $Vars\defeq \X \bigcup \Upsilon$.
\begin{definition}\label{dfn:nuPN}
A \nPN is a PN $\D=\tup{P,T,F}$ with the provision that $F:(P\times T) \bigcup (T \times P) \to Vars^{\oplus}$ is a flow function such that, for each $t\in T$, $\Upsilon\cap\prefun(t)=\emptyset$ and $\postfun(t)\setminus\Upsilon\subseteq\prefun(t)$, where $\prefun(t)=\bigcup_{p\in P} \supp(F(p,t))$ and $\postfun(t)=\bigcup_{p\in P} \supp(F(t,p))$.
\end{definition}

\begin{figure}[t]
    \centering
    \begin{subfigure}[b]{.49\textwidth}\centering
    \scalebox{0.7}{
    \begin{tikzpicture}

\node[place,label={[name=p2Lab]left:\scriptsize $p_2$}](p2)at (.2,-0.5){};
\node[place,label={[name=p1Lab]left:\scriptsize $p_1$}](p1)at (.2,.5){};

\node[transvert] (t)at (1.5,0){};
\node at (t){\scriptsize $t$};

\node[place,label={[name=p3Lab]right:\scriptsize $p_3$}](p3)at (2.8,.75){};
\node[place,label={[name=p4Lab]right:\scriptsize $p_4$}](p4)at (2.8,0){};
\node[place,label={[name=p5Lab] right:\scriptsize $p_5$}](p5)at (2.8,-.75){};

\draw [->] (p1) to node [above, sloped]  (TextNode1) {\scriptsize $x_1 x_2$} (t.north west);

\draw [->] (p2) -- node[above,midway,sloped] {\scriptsize $x_3$}(t.south west);
\draw [->] (t.north east) -- node[above,midway,sloped] {\scriptsize $x_2$}(p3);

\draw [->] (t) -- node[above,midway,sloped,pos=.75] {\scriptsize $x_3$}(p4);
\draw [->,bend right] (t.south east) --node[above,midway,sloped] {\scriptsize $2\nu_1\nu_2$} (p5);

\node at($(p1)+(0,.2)$){\tiny a a };
\node at($(p1)$){\tiny b b b};
\node at($(p1)-(0,.2)$){\tiny c c};

\node at($(p2)+(0,.2)$){\tiny a a};
\node at($(p2)$){\tiny b b};
\node at($(p2)-(0,.2)$){\tiny c c};

\node at($(p3)$){\tiny a};
\node at($(p4)$){\tiny b};
\node at($(p5)$){\tiny c};
\end{tikzpicture}
    }
\caption{}
\label{fig:NuPNInit}
    \end{subfigure}
    \hfill
    \begin{subfigure}[b]{.49\textwidth}\centering
       \scalebox{0.7}{   
   \begin{tikzpicture}

\node[place,label={[name=p2Lab]left:\scriptsize $p_2$}](p2)at (.2,-0.5){};
\node[place,label={[name=p1Lab]left:\scriptsize $p_1$}](p1)at (.2,.5){};

\node[transvert] (t)at (1.5,0){};
\node at (t){\scriptsize $t$};

\node[place,label={[name=p3Lab]right:\scriptsize $p_3$}](p3)at (2.8,.75){};
\node[place,label={[name=p4Lab]right:\scriptsize $p_4$}](p4)at (2.8,0){};
\node[place,label={[name=p5Lab] right:\scriptsize $p_5$}](p5)at (2.8,-.75){};

\draw [->] (p1) to node [above, sloped]  (TextNode1) {\scriptsize $x_1 x_2$} (t.north west);

\draw [->] (p2) -- node[above,midway,sloped] {\scriptsize $x_3$}(t.south west);
\draw [->] (t.north east) -- node[above,midway,sloped] {\scriptsize $x_2$}(p3);

\draw [->] (t) -- node[above,midway,sloped,pos=.75] {\scriptsize $x_3$}(p4);
\draw [->,bend right] (t.south east) --node[above,midway,sloped] {\scriptsize $2\nu_1\nu_2$} (p5);

\node at($(p1)+(0,.2)$){\tiny a };
\node at($(p1)$){\tiny b b };
\node at($(p1)-(0,.2)$){\tiny c c};

\node at($(p2)+(0,.2)$){\tiny a a};
\node at($(p2)$){\tiny b b};
\node at($(p2)-(0,.2)$){\tiny c };

\node at($(p3)+(0,.1)$){\tiny a};
\node at($(p3)-(0,.1)$){\tiny b};
\node at($(p4)+(0,.1)$){\tiny b};
\node at($(p4)-(0,.1)$){\tiny c};
\node at($(p5)+(0,.2)$){\tiny c};
\node at($(p5)$){\tiny dd};
\node at($(p5)-(0,.2)$){\tiny f};
\end{tikzpicture}
    }
\caption{}
\label{fig:NuPNPost}
    \end{subfigure}
    \caption{A $\nu$-net and the resulting configuration on firing $t$ with mode $e(x_1)=a$, $e(x_2)=b$, $e(x_3)=c$, $e(\nu_1)=d$, $e(\nu_2)=f$.}
    \label{fig:nupnex}
\end{figure}

For each $t\in T$, we set $\var(t)=\prefun(t)\cup\postfun(t)$. In this section, we work with a fixed arbitrary \nPN $\D=\tup{P,T,F}$ where $P=\{p_1,\dots,p_\ell\}$.
The flow $F_x$ of a variable $x\in \var$ is $F_x:(P\times T) \bigcup (T \times P) \to \mathbb{N}$ where $F_x (p,t)\defeq F(p,t)(x)$ and $F_x(t,p)\defeq F(t,p)(x)$. We denote the vector $\tup{F_x(p_1,t),\dots,F_x(p_\ell,t)}\in\mathbb{N}^\ell$ by $F_x(P,t)$ and the vector $\tup{F_x(t,p_1),\dots,F_x(t,p_\ell)}\in\mathbb{N}^\ell$ by $F_x(t,P)$.

The set of configurations of $\D$ is the set $(\mathbb{N}^P)^\oplus$. For each $t\in T$, let:
\[
    in(t)\defeq \sum_{x\in \X(t)} \fmset{F_x(P,t)}
\quad
    \out\defeq \sum_{\nu\in \Upsilon(t)} \fmset{F_\nu(t,P)}
\]
Given a configuration $M=\fmset{m_1,\dots,m_{\abs{M}}}$, a transition $t$ is fireable from $M$ if there is a function $e:\X(t)\longrightarrow\{1,\dots,\abs{M}\}$, called mode, such that, for each $x\in\X(t)$, $F_x(P,t)\leq m_{e(x)}$. We write $M\rightarrow^{t,e} M'$ if, for some configuration $M''$, 

\[ M=M''+\sum_{x\in\X(t)}\fmset{m_{e(x)}}
\quad
M'=M''+\out+\sum_{x\in\X(t)}\fmset{m_{e(x)}'}
\]
where, for $x\in\X(t)$, $m'_{e(x)}=m_{e(x)}-F_x(P,t)+F_x(t,P)$. Intuitively, the firing of $t$ with mode $e$ over $M$ applies $F_x$, for each $x\in X(t)$ to a distinct tuple $m\in M$ such that $m_{e(x)}\geq F_x(P,t)$ and replaces it with $m'_{e(x)}$. It also adds the new markings $F_{\nu}(t,P)$ for $\nu\in\Upsilon$. See Fig.~\ref{fig:nupnex} for an example.

\paragraph*{Elementary Object Systems}\label{sec:eos}

An EOS is a PN, where tokens have their own internal PN, and are referred to as system and object nets respectively. The transitions at both system and object levels may fire either autonomously or synchronously, according to well-defined events. We present the formal definition of EOSs below:

\begin{definition}\label{def:bussy14_eos}
An \emph{EOS} $\os$ is a tuple $\os=\tup{\hat{N},\N,d,\Theta}$ where:
\begin{enumerate}
\item $\hat{N}=\tup{\hat{P},\hat{T},\hat{F}}$ is a PN called \emph{system net}; $\hat{T}$ contains a special set $ID_{\hat{P}}=\{id_p\mid p\in \hat{P}\}\subseteq \hat{T}$ of \emph{idle transitions} such that, for each distinct $p,q\in \hat{P}$, we have $\hat{F}(p,id_p)=\hat{F}(id_p,p)=1$ and $\hat{F}(q,id_p)=\hat{F}(id_p,q)=0$.
\item $\N$ is a finite set of PNs, called \emph{object PNs}, such that $\blacksquare\in\N$ and if $(P_1,T_1,F_1), (P_2,T_2,F_2)\in\N \cup \hat{N}$, then $P_1\cap P_2=\emptyset$ and $T_1 \cap T_2 = \emptyset$.\footnote{This way, the system net and the object nets are pairwise disjoint.}
\item $d:\hat{P}\rightarrow \N$ is called the \emph{typing function}. 
\item $\Theta$ is a finite \emph{set of events} where each \emph{event} is a pair $(\hat{\tau},\theta)$, where $\hat{\tau}\in \hat{T}$ and  $\theta:\N \rightarrow \bigcup_{(P,T,F)\in\N} T^\oplus$,
    such that $\theta((P,T,F))\in T^\oplus$ for each $(P,T,F)\in\N$ and, if $\hat{\tau}=id_p$, then $\theta(d(p)) \neq \emptyset$.
\end{enumerate}
\end{definition}

A nested token is a system net token carrying an internal marking.
\begin{definition}
Let $\os=\tup{\hat{N},\N,d,\Theta}$ be an EOS. The set of \emph{nested tokens} $\nestTok(\os)$ of $\os$ is the set $\bigcup_{(P,T,F)\in\N} (d^{-1}{(P,T,F)}\times P^{\oplus})$. The set of \emph{nested markings} $\M(\E)$ of $\os$ is $\nestTok(\os)^{\oplus}$.
Given $\lambda,\rho\in \M(\E)$, we say that $\lambda$ is a \emph{sub-marking} of $\mu$ if $\lambda \sqleq \mu$.
\end{definition}

\begin{figure}[t]
    \centering
    \begin{subfigure}[b]{.49\textwidth}\centering
    \scalebox{0.7}{
    \begin{tikzpicture}

\node[place,dashed,label={[name=p2Lab]left:\scriptsize $p_2$},tokens=1](sp2)at (.2,-0.5){};
\node[place,label={[name=p1Lab]left:\scriptsize $p_1$},tokens=2](sp1)at (.2,.5){};
\node[transvert] (st)at (1.5,0){\scriptsize $\hat{t}$};
\node[](lab) at (1.5,-.9){\scriptsize $\langle \hat{t},\{\{ t_1,t_2,t_2\}\}\rangle$};
\node[place,label={[name=p3Lab]right:\scriptsize $p_3$}](sp3)at (2.8,.75){};
\node[place,dashed,label={[name=p4Lab]right:\scriptsize $p_4$}](sp4)at (2.8,0){};
\node[place,dotted,label={[name=p5Lab] right:\scriptsize $p_5$}](sp5)at (2.8,-.75){};
\draw [->] (sp1) to node [above, sloped]  (TextNode1) {\scriptsize $2$} (st.north west);
\draw [<->] (sp2) -- node[above,midway,sloped]{}(st.south west);
\draw [->] (st.north east) -- node[above,midway,sloped] {}(sp3);
\draw [->] (st) -- node[above,midway,sloped,pos=.75] {}(sp4);
\draw [->,bend right] (st.south east) --node[above,midway,sloped] {} (sp5);

\begin{scope}[xshift=-3cm,yshift=2cm]
    \node[place,tokens=2,label={below right:\scriptsize $q_1$}](p0)at (0,0){};
    \node[transvert] (t)at (1,0){\scriptsize $t_1$};
    \node[place,label={below left:\scriptsize $q_2$}](p1)at (2,0){};
    \draw[->](p0) -- (t);
    \draw[->](t) -- (p1);
    \node[draw=black,fit={(p0)(t)(p1)},label={right:\scriptsize $N_1$}](obj0){};
    \draw[dashed] (obj0) -- (sp1.center);
\end{scope}

\begin{scope}[xshift=0.5cm,yshift=2cm]
    \node[place,tokens=1,label={below right:\scriptsize $q_1$}](p0)at (0,0){};
    \node[transvert] (t)at (1,0){\scriptsize $t_1$};
    \node[place,label={below left:\scriptsize $q_q$}](p1)at (2,0){};
    \draw[->](p0) -- (t);
    \draw[->](t) -- (p1);
    \node[draw=black,fit={(p0)(t)(p1)},label={right:\scriptsize $N_1$}](obj0){};
\draw[dashed] (obj0) -- (sp1.center);
\end{scope}

\begin{scope}[xshift=-3.5cm,yshift=0cm]
   \node[transvert] (t)at (1,0){\scriptsize $t_2$};
    \node[place,label={below left:\scriptsize $r_1$}](p0)at (2,0){};
    \draw[->](t) -- (p0);
    \node[draw=black,fit={(p0)(t)},label={above:\scriptsize $N_2$}](obj0){};
\draw[dashed] (obj0) -- (sp2.center);
\end{scope}
\end{tikzpicture}
    }
\caption{}
\label{fig:EOSInit}
    \end{subfigure}
    \hfill
    \begin{subfigure}[b]{.49\textwidth}\centering
       \scalebox{0.7}{   
   \begin{tikzpicture}

\node[place,dashed,label={[name=p2Lab]left:\scriptsize $p_2$},tokens=1](sp2)at (.2,-0.5){};
\node[place,label={[name=p1Lab]left:\scriptsize $p_1$}](sp1)at (.2,.5){};
\node[transvert] (st)at (1.5,0){\scriptsize $\hat{t}$};
\node[](lab) at (1.5,-.9){\scriptsize $\langle \hat{t},\{\{ t_1,t_2,t_2\}\}\rangle$};
\node[place,label={[name=p3Lab]right:\scriptsize $p_3$},tokens=1](sp3)at (2.8,.75){};
\node[place,dashed,label={[name=p4Lab]right:\scriptsize $p_4$},tokens=1](sp4)at (2.8,0){};
\node[place,dotted,label={[name=p5Lab] right:\scriptsize $p_5$}](sp5)at (2.8,-.75){};
\node at(sp5){$\blacktriangle$};
\draw [->] (sp1) to node [above, sloped]  (TextNode1) {\scriptsize $2$} (st.north west);
\draw [<->] (sp2) -- node[above,midway,sloped]{}(st.south west);
\draw [->] (st.north east) -- node[above,midway,sloped] {}(sp3);
\draw [->] (st) -- node[above,midway,sloped,pos=.75] {}(sp4);
\draw [->,bend right] (st.south east) --node[above,midway,sloped] {} (sp5);

\begin{scope}[xshift=1.5cm,yshift=2cm]
    \node[place,tokens=1,label={below right:\scriptsize $q_1$}](p0)at (0,0){};
    \node[transvert] (t)at (1,0){\scriptsize $t_1$};
    \node[place,label={below left:\scriptsize $q_2$}](p1)at (2,0){};
    \draw[->](p0) -- (t);
    \draw[->](t) -- (p1);
    \node[draw=black,fit={(p0)(t)(p1)},label={right:\scriptsize $N_1$}](obj0){};
\draw[dashed] (obj0) -- (sp3.center);
\end{scope}

\begin{scope}[xshift=-3.5cm,yshift=0cm]
   \node[transvert] (t)at (1,0){\scriptsize $t_2$};
    \node[place,tokens=1,label={below left:\scriptsize $r_1$}](p0)at (2,0){};
    \draw[->](t) -- (p0);
    \node[draw=black,fit={(p0)(t)},label={above:\scriptsize $N_2$}](obj0){};
\draw[dashed] (obj0) -- (sp2.center);
\end{scope}

\begin{scope}[xshift=3.5cm,yshift=0.5cm]
   \node[transvert] (t)at (1,0){\scriptsize $t_2$};
    \node[place,tokens=1,label={below left:\scriptsize $r_1$}](p0)at (2,0){};
    \draw[->](t) -- (p0);
    \node[draw=black,fit={(p0)(t)},label={above:\scriptsize $N_2$}](obj0){};
\draw[dashed] (obj0) -- (sp4.center);
\end{scope}

\end{tikzpicture}
    }
\caption{}
\label{fig:EOSPost}
    \end{subfigure}
    \caption{An EOS depicting the firing of synchronized event $\langle \hat{t},\fmset{ t_1,t_2,t_2}\rangle$ for a given mode $(\lambda,\rho)$, where $\lambda=\langle p_1,\fmset{q_1,q_1}\rangle+\langle p_1,\fmset{q_1}\rangle+\langle p_2,\fmset{r_1}\rangle$, 
    $\rho=\langle p_2,\fmset{r_1}\rangle+\langle p_3,\fmset{q_1}\rangle+\langle p_4,\fmset{q_1}\rangle+\langle p_5,\varepsilon\rangle$, $d(p_1)=d(p_3) = N_1$,
    $d(p_2)=d(p_4) = N_2$,
    $d(p_5)=\blacktriangle$.}
    \label{fig:eosex}
\end{figure}

As depicted in Fig.~\ref{fig:eosex}, EOSs inherit the graphical representation of PNs with the provision that we represent nested tokens via a dashed line from the token in the system net place to an instance of the object type where the internal marking is represented in the standard PN way. However, if the nested token is $\tup{p,\varepsilon}$ for a system net place $p$ of type $\blacksquare$, we represent it with a black-token $\blacksquare$ on $p$. 
Each event $\tup{\hat{\tau},\theta}$ is depicted by labeling $\hat{\tau}$ by $\tup{\theta}$ (possibly omitting double curly brackets). If there are several events involving $\hat{\tau}$, then $\hat{\tau}$ has several labels.

\begin{remark}\label{rem:EOSsemantics}
The firing of an event $e=\tup{\tau,\theta}$ can be characterized as follows: first, merge, type by type, all objects handled by the preconditions of $\tau$
(identified by a mode, as in \nPNs)
, obtaining merged tuples; second, fire, at the same time, all transitions in $\theta$ on the merged tuples; third, non-deterministically distribute the updated tuples to new objects in the places indicated by the post-conditions of $\tau$. Possibly, some of the new objects have empty internal marking. 
The next definitions formalize this dynamics.
\end{remark}

\begin{definition}
Let $\os$ be an EOS $\tup{\hat{N},\N,d,\Theta}$. The \emph{projection operator $\Pi^1$} maps each nested marking $\mu=\sum_{i\in I}\tup{\hat{p}_i,M_i}$ for $\E$ to the PN marking $\sum_{i\in I}\hat{p}_i$ for $\hat{N}$. Given an object net $N\in\N$, the \emph{projection operator $\Pi^2_N$} maps each nested marking $\mu=\sum_{i\in I}\tup{\hat{p}_i,M_i}$ for $\E$ to the PN marking $\sum_{j\in J} M_j$ for ${N}$ where $J=\{i\in I\mid d(\hat{p}_i)=N\}$.
\end{definition}

To define the enabledness condition, we need the following notation. We set $\prefun_{N}(\theta(N))=\sum_{i\in I}\prefun_N(t_i)$ where $(t_i)_{i\in I}$ is an enumeration of $\theta(N)$ counting multiplicities. We analogously set $\postfun_{N}(\theta(N))=\sum_{i\in I}\postfun_N(t_i)$.

\begin{definition}\label{def:bussy14_enable} 
Let $\os$ be an EOS $\tup{\hat{N},\N,d,\Theta}$. Given an event $e=\tup{\hat{\tau},\theta}\in \Theta$ and markings $\lambda,\rho\in\M(\os)$, the \emph{enabledness condition} $\Phi(\tup{\hat{\tau},\theta},\lambda,\rho)$ holds iff
\begin{align*}
\Pi^1(\lambda)=\prefun_{\hat{N}}(\hat{\tau})\ \land \Pi^1(\rho)=\postfun_{\hat{N}}(\hat{\tau})\ \land
\forall N\in \N,\ \Pi^2_N(\lambda)\geq \prefun_N(\theta(N))\ \land\\
\forall N\in\N,\ \Pi^2_N(\rho)=\Pi^2_N(\lambda)-\prefun_N(\theta(N))+\postfun_N(\theta(N))
\end{align*}
The event $e$ is \emph{enabled with mode $(\lambda,\rho)$ on a marking $\mu$} iff $\Phi(e,\lambda,\rho)$ holds and $\lambda\sqleq \mu$.
Its firing results in the step $\mu\xrightarrow{(e,\lambda,\rho)}\mu-\lambda+\rho$.
\end{definition}

The coverability problem for EOSs is defined in the usual way, i.e., it asks whether there is a run (sequence of event firings) from an initial marking $\mu_0$ to a marking $\mu_1$ that covers a target marking $\mu_f$ with respect to the order $\leq_f$ such that $\mu\leq_f\mu'$ iff $\mu'$ is obtained from $\mu$ by adding
\begin{enumerate*}
    \item tokens in the inner markings of available nested tokens and/or
    \item nested tokens with some internal marking on the system net places.
\end{enumerate*}
With a slight abuse of notation, where it does not cause confusion, we denote $\leq_f$ by $\leq$.

It is known that EOS coverability is undecidable (Th. 4.3 in~\cite{kohler-busmeier_survey_2014}). However, coverability is decidable on the fragment of \emph{conservative EOSs} (cEOSs; Th. 5.2 in~\cite{kohler-busmeier_survey_2014}), where, for each system net transition $t$, if $t$ consumes a nested token on a place of type $N$, then it produces at least one token on a place of the same type. 

\begin{definition}
A cEOS is an EOS $\os=\tup{\hat{N},\N,d,\Theta}$ with $\hat{N} = \tup{\hat{P},\hat{T},\hat{F}}$ where, for all $\hat{t} \in \hat{T}$, $d(\support{\prefun_{\hat{N}}(\hat{t})})\subseteq d(\support{\postfun_{\hat{N}}(\hat{t})})$

\end{definition}

\begin{definition}
\textit{cEOS coverability} is the following decision problem:\\
\textbf{Input:} A cEOS $\os$, an initial configuration $M_0$, and a target configuration $M_t$.\\
\textbf{Output:} Whether there is a configuration $M_1$ reachable from $M_0$ and $M_t\leq M_1$.
\end{definition}

\begin{figure}[t]
    \centering
    \scalebox{0.65}{
    \begin{tikzpicture}

\begin{scope}
    \path [rectangle, fill=gray!10] (.75,.75) |- (-.75,-.5) |- cycle;
    \node[place,label={[name=startPLab]above:\scriptsize $\text{StartP}$}](start)at(0,0){};
    \node at(start) (playertoken){$\bullet$};
\begin{scope}[yshift=1.25cm,xshift=-2.1cm]
    
    \node[place,label={[name=aliveLab]above:\scriptsize alive}](alive)at(-.5,0){};
    \node at (alive) {$\bullet$};
    \node[place,label={[name=deadLab]above:\scriptsize dead}](dead)at(1.5,0){};

    \node [transvert,label={[name=dieLab]above:\scriptsize die}] (die) at (.5,0){};
    \node[transvert,label={[name=birthLab]above:\scriptsize birth}] (birth)at (-1.5,0){};
    \draw [->](birth) -- (alive);
    \draw [->](die) -- (dead);
    \draw [->](alive) -- (die);

    \node[draw,dashed,fit={(birth)(birthLab)(dead)(alive)(deadLab)(aliveLab)}](monster){};
\end{scope}

\begin{scope}[xshift=-2cm,yshift=-.75cm]
    \node[place,label={[name=expLab]left:\scriptsize Exp}](exp)at(0,0){};

    \node [transhor,label={[name=checkexpLab]left:\scriptsize Check Exp}](checkexp)at (0,1){};
    \draw [<->] (exp) -- node [midway, left] {$3$}(checkexp);

    \node [transhor,label={[name=addexpLab]left:\scriptsize Add Exp}](addexp)at (0,-1){};
    
    \draw[->] (addexp) -- (exp);

    \node[draw,dashed,fit={(addexp)(addexpLab)(exp)(expLab)(checkexp)(checkexpLab)}](player){};
\end{scope}

\begin{scope}[yshift=-2.25cm]
\node[transhor,label={[name=firstmovementLab,align=center]left:\scriptsize first move}](firstMovement){};    
\end{scope}    
\end{scope}

\begin{scope}[xshift=2cm]
    \path [rectangle, fill=gray!10] (.75,1.25) |- (-.75,-1.25) |- cycle;
    \node[place,label={[name=T1PLab]above:\scriptsize $\text{T1P}$}](T1P)at(0,-.8){};
    \node[pentagon,label={[name=T1MLab]below:\scriptsize $\text{T1M}$}](T1M)at(0,.5){};
    \node at (T1M) (T1Mtoken) {$\bullet$};

\begin{scope}[yshift=1.5cm]
    \node[transhor,label={[name=fightLab]above:\scriptsize fight $\tup{\mathit{die},\mathit{add Exp}}$}](fight){};
    \draw [<->](fight) -- (T1M);
    \draw [<->] (fight.east) -- ($(fight.east)+(.5,0)$) |- (T1P);
\end{scope}

\end{scope}

\begin{scope}[xshift=4cm]
    \path [rectangle, fill=gray!10] (.75,1.25) |- (-.75,-1.25) |- cycle;
    \node[place,label={[name=T2PLab]above:\scriptsize T2P}](T2P)at(0,-.8){};
    \node[pentagon,label={[name=T2MLab]below:\scriptsize T2M}](T2M)at(0,.5){};

\begin{scope}[yshift=1.5cm]
    \node[transhor,label={[name=sourceLab,align=center]above right:\scriptsize spawner $\tup{birth}$}] (source){};
\end{scope}    
\end{scope}

\begin{scope}[xshift=6cm]
    \path [rectangle, fill=gray!10] (.75,.75) |- (-.75,-.5) |- cycle;
    \node[place,label={[name=WinPLab]above:\scriptsize $\text{WinP}$}](WinP)at(0,0){};

\begin{scope}[yshift=-1.5cm]
\node[transhor,label={[name=movementLab,align=center]below:\scriptsize last move $\tup{\mathit{Check Exp}}$}](lastMovement){};
    
\end{scope}

\end{scope}

\begin{scope}[xshift=8cm]
    \node[triangle,label={[name=playerMoveLab,text width = 1cm]right:\scriptsize Player Turn}](playerMove)at(-.5,0){};
    \node at (playerMove){$\blacksquare$};
    \node[triangle,label={[name=MonsterMoveLab,text width = 1cm]right:\scriptsize Monster Turn}](monsterMove)at(1,0){};
\end{scope}

\draw [dashed](player) -- (playertoken.center);
\draw [dashed](monster) -- (T1Mtoken.center);

\draw[->](T2P)|-(lastMovement);
\draw[->](playerMove.south west)|-(lastMovement.north east);
\draw[<-](monsterMove.south west)|-(lastMovement.south east);
\draw[<-](WinP)--(lastMovement);

\draw[->](source)--(T2M);
\draw[->](source.south east)-|(playerMove.north);
\draw[<->](source.north east)-|(monsterMove.north);

\draw[->] (start) -- (firstMovement);
\draw[->] (firstMovement.north east) |- (T1P);
\draw[<-] (firstMovement.north east) -| (playerMove.south east);
\draw[->] (firstMovement.south east) -| (monsterMove.south east);
    \end{tikzpicture}
    }
    \caption{A cEOS representing a maze of a RPG game.}
    \label{fig:rpg}
\end{figure}

 
\begin{example}\label{ex:rpg}
We use cEOS coverability to model the gameplay of a Roguelike game in Fig.~\ref{fig:rpg}. Existing work~\cite{rpgpnse24} models the navigation map into a Petri net, to determine the viability of an optimal path to reach a goal in a game. This has practical relevance in domains such as navigation systems used in the military, where optimal path planning is crucial in warfare. We show how cEOSs coverability can be flexibly employed in this setting. We represent the players by standard Petri nets places and the adversary by pentagons. The goal of the player is the following: starting at the place \textit{StartP} with a given number of experience units (\textit{Exp}) and reaching the winning destination \textit{WinP}. This can be formalized as a coverability problem. The player and adversary each take turns to \textit{move} and if they are in the same tile they may \textit{fight}. In a successful run, at every tile of the game (in grey boxes), the player (say, $T1P$ at tile $1$) must \textit{move} through the level with the objective of reaching \textit{WinP}. However, the gates between the tiles can be opened only if the player accrues a sufficient number of experience units \textit{Exp}. If the opportunity presents itself, the $alive$ adversaries may fight with the player and when an adversary dies, the $addExp$ increments the player's experience statistics by one unit, represented in the nested player net. If the player loses at any level, he cannot progress further. To make it more challenging, there are $spawner$ transitions, in the game which synchronize with $birth$ transitions in the nested adversary net, to generate new adversaries. In the $lastmove$, if the player has successfully gained sufficient experience by clearing all $3$ levels, $CheckExp$ allows the player to reach the winning destination $WinP$. We have described a basic structure of navigating the maze of an RPG game. It is easy to see how this could be extended to incorporate various structures within the tile of the game such as loops and intersections to make the navigation challenging, or to introduce various attributes of the player (such as treasure collected, weapons, special skills) and of the adversary (such as lethality levels, attack styles) and player-adversary interaction, to enrich the gameplay.
\end{example}
\section{Extended \texorpdfstring{\nPN}{nuPN}}\label{sec:extend}
We now extend the \nPN model so as to allow restricted forms of data-aware whole-place operations in the style of UDNs. We call the resulting model \textit{channel-\nPN}(\GnPN). We also provide a fragment, called \textit{rename-\nPN}(\rnPN), with equivalent expressivity, i.e., at the cost of a polynomial blowup, each \GnPN is captured by a corresponding \rnPN. \rnPN prove themselves valuable in Sec.~\ref{sec:toEOS}.

\subsection{Channel-\texorpdfstring{\nPN}{nPN}}
\newcommand{\confs}{\texttt{Confs}}
We introduce \GnPN, 
whose transitions extend arbitrary \nPN transitions by transfers with renaming (organized in \textit{channels} of transfers).
The mechanism to fire a \GnPN transition $t$ is similar to UDNs (and, more in general, to affine nets~\cite{FinkelMP04}):
\begin{enumerate*}
    \item 
    the pre-conditions are fired,
    \item 
    the transfers with renaming (defined by transformation matrices \(G_t(x_i,x_j)\) for $x_i,x_j\in\X(t)$) are performed
    \item  
    the post-conditions are fired.
\end{enumerate*}

The rows and columns of $G_t(x_i,x_j)$ are indexed by places. $G_t(x_i,x_j)(p,q)=n$ means that, after firing the pre-conditions, the transition puts (up to post-conditions) $n$ times on $q$, in the tuple instantiating $x_j$, the number of tokens available on $p$ from the tuple instantiating $x_i$. In order to capture only transfers with renaming, these matrices must be boolean and enjoy some inter-relationship as in item \ref{G:restrict} of Def.\ref{def:CNUPN} below. We use the following notation: if, for some set $P$ of places, $M\in\{0,1\}^{P\times P}$ is a matrix and $p\in P$ is a place, $M[p]$ denotes the row of $M$ indexed by $p$.
Moreover, given a place $p\in P$, we denote by $\delta_p\in\{0,1\}^{\abs{P}}$ the vector such that, for each $q\in P$, $\delta_p(q) = 1$ if and only if $p=q$. For example, given a tuple $m$ and a place $p$, the tuple $m(p)\delta_p$ is the projection of $m$ to $p$.\footnote{Note that $m(p)\delta_p$ is the multiplication of the number $m(p)$ with the vector $\delta_p$.}

\begin{definition} \label{def:CNUPN}
    A \emph{channel \nPN (\GnPN)} is a tuple \(N = (P,T,F,G)\) where
    \begin{enumerate}
        \item \((P,T,F)\) is a \nPN.
        \item \label{G:restrict}G maps each \(t \in T\) to a function \(G_t: \X(t) \times \X(t) \rightarrow \{0,1\}^{\abs{P} \times \abs{P}}\) such that, for each variable \(x_i \in \X(t)\) and place $p\in P$:
        \begin{enumerate}
            \item for each variable $x_j\in\X(t)$ either $G_t(x_i,x_j)[p]=0$ or $\exists q\in P\;G_t(x_i,x_j)=\delta_q$, i.e., $G_t(x_i,x_j)[p]$ contains at most a single $1$.
            \item if \(\exists q \in P\,\exists x_j\in\X(t)\setminus\{x_i\}\;G_t(x_i,x_j)[p] = \delta_q\), then \(\forall x_k \in \X(t)\setminus \{x_j\}\; G_t(x_i,x_k)[p] = 0^{\abs{P}}\), else \(G_t(x_i,x_i)[p] = \delta_p\).
        \end{enumerate}
    \end{enumerate}
\end{definition}

The graphical representation of a \GnPN $N'=(P,T,F,G)$ adds the representation of $G_t$, for each $t\in T$, on top of the representation of the \nPN $N=(P,T,F)$. This is achieved by special arrows called \textit{channels}.
A channel $c$ is a pair of decorated arrows (e.g., double or triple arrows in Fig.~\ref{fig:CNuPN_example}), one to $t$, called pre-channel of $c$, and one from $t$, called post-channel of $c$. Different channels use different decorations. The pre- and post-channels of $c$ are labeled by two non-empty sequences $\sigma_\text{pre}$ and $\sigma_\text{post}$ of variables, respectively, such that $\abs{\sigma_\text{pre}} =\abs{ \sigma_\text{post}}$. Moreover, for each place $p$, each variable $x\in\X(t)$ can occur at most once in the label of at most one pre-channel from $p$.\footnote{This condition is necessary to forbid the representation of transfers with duplication.}
Intuitively, $c$ represents $\abs{\sigma_\text{pre}}$ transfers from $p$ to $q$ while renaming the tokens taken from the tuple instantiating $\sigma_\text{pre}(i)$ as tokens of the tuple instantiating $\sigma_\text{post}(i)$.\footnote{Note that, given a sequence $\sigma$, $\sigma(i)$ denotes the $i$-th element of $\sigma$.}
Technically, $c$ states that $G_t(\sigma_\text{pre}(i),\sigma_\text{post}(i))[p]=\delta_q$, for $i\leq \abs{\sigma_\text{pre}}$ and $p,q\in P$.\footnote{Consequently, for each $x\in\X(t)\setminus\{\sigma_\text{post}(i)\}$, $G_t(\sigma_\text{pre}(i),x)[p]=0$.} If a variable $x\in\X(t)$ does not label any pre-channel from $p$, then $G_t(x,x)[p]=\delta_p$.\footnote{Consequently, for each $x'\in\X(t)\setminus\{x\}$, $G_t(x,x')[p]=0$.} Note that this notation can represent all and only \GnPN.
We call \textit{special} each \GnPN transition with at least one channel in its graphical representation.

\begin{example} \label{PN=>GnPN}
    Each PN \(N=(P,T,F)\) corresponds (cf. \GnPN semantics in Def.~\ref{def:stepCNUPN} below) to the \GnPN \(N'=(P,T,F,G)\) where, for each $t\in T$ and $x\in\X(t)$, $y\in \X(t)\setminus\{y\}$, $
    G_t(x,x)=\textbf{Id}$,\footnote{Note that \textbf{Id} denotes the identity matrix of the right size. } and $G_t(x,y)=0$. Both $N$ and $N'$ have the same graphical representation. $N'$ has no special transition.
\end{example}

\begin{example} \label{example:CNuPN1}    
    Fig.~\ref{fig:CNuPN_example} depicts a \GnPN \(N = (P,T,F,G)\) with several configurations (cf. Def.~\ref{def:stepCNUPN} below) where \(P = \{p_1,p_2,p_3,p_4,p_5\}\), \(T = \{t\}\), \(\X(t) = \{x_1,x_2,x_3\}\), \(\Upsilon(t) = \{\nu_1,\nu_2\}\), and for each place $p\in P$ and $x\in\var(t)$, $G_t(x_1,x)[p]$ is $\delta_p$ if $x=x_1$ and $0$ otherwise and:
    {\footnotesize \begin{align*}
        F_x(p,t)=&\begin{cases}
        1& \text{if }p=p_2,x\in\{x_2,x_3\}\\
            1& \text{if }p=p_1,x=x_1\\
            0& \text{otherwise}
        \end{cases}
        &
        F_x(t,p)=&\begin{cases}
            1& \text{if }p=p_3,x=x_3\\
            2& \text{if }p=p_3,x=x_2\\
            2& \text{if }p=p_5,x\in\{\nu_1,\nu_2\}\\
            0& \text{otherwise}
        \end{cases}
    \\
    G_t(x_2,x)[p]=&\begin{cases}
        \delta_{p_3}&\text{if }p=p_1, x=x_3\\
        \delta_{p}&\text{if }p\neq p_1, x=x_2\\
        0&\text{otherwise}
    \end{cases}&
    G_t(x_3,x)[p]=&\begin{cases}
        \delta_{p_4}&\text{if }p=p_2, x=x_1\\
        \delta_{p}&\text{if }p\neq p_2, x=x_3\\
        0&\text{otherwise}
    \end{cases}
\end{align*}}%
 
\end{example}

\begin{figure}[t]
\centering
    \begin{subfigure}{0.24\textwidth}
    \scalebox{0.9}{
        \begin{tikzpicture}

\node[place,label={[name=p2Lab]below:\scriptsize $p_2$}](p2)at (.2,0){};
\node[place,label={[name=p1Lab]above:\scriptsize $p_1$}](p1)at (.2,.75){};

\node[transvert] (t)at (1.5,0){};
\node at (t){\scriptsize $t$};

\node[place,label={[name=p3Lab]above:\scriptsize $p_3$}](p3)at (2.8,.75){};
\node[place,label={[name=p4Lab]below:\scriptsize $p_4$}](p4)at (2.8,0){};
\node[place,label={[name=p5Lab]below right:\scriptsize $p_5$}](p5)at (2.35,-.75){};

\draw [->] (p1) to [bend left=30]  node [above, sloped]  (TextNode1) {\scriptsize $x_1$} (t.north west);
\draw [->] (p2) to [bend right=30]  node [below, sloped]  (TextNode1) {\scriptsize $x_2 x_3$} (t.south west);
        
\draw [->,myDouble] (p1) -- node[above,midway,sloped] {\scriptsize $x_2$}(t);
\draw [->,triple] (p2) -- node[below,midway,sloped] {\scriptsize $x_3$}(t);
\draw [->,myDouble] (t) -- node[above,midway,sloped] {\scriptsize $x_3$}(p3);
\draw [->] (t.north east) to [bend left=30]  node [above, sloped]  (TextNode2) {\scriptsize $2x_2,x_3$} (p3);

\draw [->,triple] (t) -- node[below,midway,sloped,pos=.75] {\scriptsize $x_1$}(p4);
\draw [->,bend right] (t.south east) --node[below,midway,sloped] {\scriptsize $2\nu_1\nu_2$} (p5);

\node at($(p1)+(0,.2)$){\tiny a a };
\node at($(p1)$){\tiny b b b};
\node at($(p1)-(0,.2)$){\tiny c c};

\node at($(p2)+(0,.2)$){\tiny a a};
\node at($(p2)$){\tiny b b};
\node at($(p2)-(0,.2)$){\tiny c c};

\node at($(p3)$){\tiny a};
\node at($(p4)$){\tiny b};
\node at($(p5)$){\tiny c};
\end{tikzpicture}}
        \caption{}
        \label{fig:CNuPN_init}
    \end{subfigure}
    \begin{subfigure}{0.24\textwidth}\scalebox{0.9}{
        \begin{tikzpicture}

\node[place,label={[name=p2Lab]below:\scriptsize $p_2$}](p2)at (0.2,0){};
\node[place,label={[name=p1Lab]above:\scriptsize $p_1$}](p1)at (0.2,.75){};

\node[transvert] (t)at (1.5,0){};
\node at (t){\scriptsize $t$};

\node[place,label={[name=p3Lab]above:\scriptsize $p_3$}](p3)at (2.8,.75){};
\node[place,label={[name=p4Lab]below:\scriptsize $p_4$}](p4)at (2.8,0){};
\node[place,label={[name=p5Lab]below right:\scriptsize $p_5$}](p5)at (2.35,-.75){};

\draw [->] (p1) to [bend left=30]  node [above, sloped]  (TextNode1) {\scriptsize $x_1$} (t.north west);
\draw [->] (p2) to [bend right=30]  node [below, sloped]  (TextNode1) {\scriptsize $x_2 x_3$} (t.south west);
        
\draw [->,myDouble] (p1) -- node[above,midway,sloped] {\scriptsize $x_2$}(t);
\draw [->,triple] (p2) -- node[below,midway,sloped] {\scriptsize $x_3$}(t);
\draw [->,myDouble] (t) -- node[above,midway,sloped] {\scriptsize $x_3$}(p3);
\draw [->] (t.north east) to [bend left=30]  node [above, sloped]  (TextNode2) {\scriptsize $2x_2,x_3$} (p3);

\draw [->,triple] (t) -- node[below,midway,sloped,pos=.75] {\scriptsize $x_1$}(p4);
\draw [->] (t.south east) --node[below,midway,sloped,bend right] {\scriptsize $2\nu_1\nu_2$} (p5);

\node at($(p1)+(0,.2)$){\tiny a};
\node at($(p1)$){\tiny b b b};
\node at($(p1)-(0,.2)$){\tiny c c};

\node at($(p2)+(0,.2)$){\tiny a a};
\node at($(p2)$){\tiny b};
\node at($(p2)-(0,.2)$){\tiny c};

\node at($(p3)$){\tiny a};
\node at($(p4)$){\tiny b};
\node at($(p5)$){\tiny c};
\end{tikzpicture}}
        \caption{}
        \label{fig:CNuPN_step1}
    \end{subfigure}
    \begin{subfigure}{0.24\textwidth}\scalebox{0.9}{
        \begin{tikzpicture}

\node[place,label={[name=p2Lab]below:\scriptsize $p_2$}](p2)at (0.2,0){};
\node[place,label={[name=p1Lab]above:\scriptsize $p_1$}](p1)at (0.2,0.75){};

\node[transvert] (t)at (1.5,0){};
\node at (t){\scriptsize $t$};

\node[place,label={[name=p3Lab]above:\scriptsize $p_3$}](p3)at (2.8,0.75){};
\node[place,label={[name=p4Lab]below:\scriptsize $p_4$}](p4)at (2.8,0){};
\node[place,label={[name=p5Lab]below right:\scriptsize $p_5$}](p5)at (2.35,-0.75){};

\draw [->] (p1) to [bend left=30]  node [above, sloped]  (TextNode1) {\scriptsize $x_1$} (t.north west);
\draw [->] (p2) to [bend right=30]  node [below, sloped]  (TextNode1) {\scriptsize $x_2 x_3$} (t.south west);
        
\draw [->,myDouble] (p1) -- node[above,midway,sloped] {\scriptsize $x_2$}(t);
\draw [->,triple] (p2) -- node[below,midway,sloped] {\scriptsize $x_3$}(t);
\draw [->,myDouble] (t) -- node[above,sloped,midway] {\scriptsize $x_3$}(p3);
\draw [->] (t.north east) to [bend left=30]  node [above, sloped]  (TextNode2) {\scriptsize $2x_2,x_3$} (p3);

\draw [->,triple] (t) -- node[below,midway,sloped,pos=.75] {\scriptsize $x_1$}(p4);
\draw [->] (t.south east) --node[below,midway,sloped] {\scriptsize $2\nu_1\nu_2$} (p5);

\node at($(p1)+(0,.1)$){\tiny a};
\node at($(p1)-(0,.1)$){\tiny c c};

\node at($(p2)+(0,.1)$){\tiny a a};
\node at($(p2)-(0,.1)$){\tiny b};

\node at($(p3)+(0,.2)$){\tiny a};
\node at($(p3)$){\tiny c c c };

\node at($(p4)+(0,.1)$){\tiny a};
\node at($(p4)-(0,.1)$){\tiny b};

\node at($(p5)$){\tiny c};
\end{tikzpicture}}
        \caption{}
        \label{fig:CNuPN_step2}
    \end{subfigure}
    \begin{subfigure}{0.24\textwidth}\scalebox{0.9}{
        \begin{tikzpicture}

\node[place,label={[name=p2Lab]below:\scriptsize $p_2$}](p2)at (0.2,0){};
\node[place,label={[name=p1Lab]above:\scriptsize $p_1$}](p1)at (0.2,0.75){};

\node[transvert] (t)at (1.5,0){};
\node at (t){\scriptsize $t$};

\node[place,label={[name=p3Lab]above:\scriptsize $p_3$}](p3)at (2.8,0.75){};
\node[place,label={[name=p4Lab]below:\scriptsize $p_4$}](p4)at (2.8,0){};
\node[place,label={[name=p5Lab]below right:\scriptsize $p_5$}](p5)at (2.35,-0.75){};

\draw [->] (p1) to [bend left=30]  node [above, sloped]  (TextNode1) {\scriptsize $x_1$} (t.north west);
\draw [->] (p2) to [bend right=30]  node [below, sloped]  (TextNode1) {\scriptsize $x_2 x_3$} (t.south west);
        
\draw [->,myDouble] (p1) -- node[above,midway,sloped] {\scriptsize $x_2$}(t);
\draw [->,triple] (p2) -- node[below,midway,sloped] {\scriptsize $x_3$}(t);
\draw [->,myDouble] (t) -- node[above,sloped,midway] {\scriptsize $x_3$}(p3);
\draw [->] (t.north east) to [bend left=30]  node [above, sloped]  (TextNode2) {\scriptsize $2x_2,x_3$} (p3);

\draw [->,triple] (t) -- node[below,midway,sloped,pos=.75] {\scriptsize $x_1$}(p4);
\draw [->] (t.south east) --node[below,midway,sloped, bend right] {\scriptsize $2\nu_1\nu_2$} (p5);

\node at($(p1)+(0,.1)$){\tiny a};
\node at($(p1)-(0,.1)$){\tiny c c};

\node at($(p2)+(0,.1)$){\tiny a a};
\node at($(p2)-(0,.1)$){\tiny b};

\node at($(p3)+(0,.15)$){\tiny a c};
\node at($(p3)$){\tiny c c c};
\node at($(p3)-(0,.15)$){\tiny b b };

\node at($(p4)+(0,.1)$){\tiny a};
\node at($(p4)-(0,.1)$){\tiny b};

\node at($(p5)+(0,.2)$){\tiny c};
\node at($(p5)$){\tiny d d};
\node at($(p5)-(0,.2)$){\tiny f};

\end{tikzpicture}}
        \caption{}
        \label{fig:CNuPN_step3}
    \end{subfigure}
    \caption{The \GnPN discussed in Ex.~\ref{example:CNuPN1} and Ex.~\ref{example:CNuPN2}, with configurations.} 
    \label{fig:CNuPN_example}
\end{figure}

We define the semantics of \GnPN in the style of UDNs: the firing of a transition $t$ serializes its standard pre-conditions, the transfers, and, finally, the post-conditions. In Def.~\ref{def:stepCNUPN} below, $M''$ captures the tokens uninvolved with the firing, $\out$ accounts for the creation of fresh tuples via $\nu$ variables, $m''$ accounts for the firing of the standard preconditions, $\sum\limits_{x_i \in \X(t)} \left( m_{e(x_i)}'' \ast G_t(x_i,x_j) \right)$ formalizes the effect of the transfers with renaming to the tuple instantiating $x_j$,\footnote{$\ast$ denotes standard row-by-column multiplication} while $m'$ takes into account also the final firing of the standard post-conditions.

\begin{definition} \label{def:stepCNUPN}
A configuration $M$ of a \GnPN $N=(P,T,F,G)$ is a configuration of $(P,T,F)$, and a transition $t\in T$ is enabled in $N$ if it is enabled in $(P,T,F)$. A configuration $M'$ is reached from $M$ by firing a transition $t\in T$ with mode $e$, denoted by $M\rightarrow^{t,e}M'$, if
        \[ M=M''+\sum_{x\in\X(t)}\fmset{m_{e(x)}}
        \qquad
        M'=M''+\out+\sum_{x\in\X(t)}\fmset{m_{e(x)}'}
        \]
    where \(m_{e(x_j)}' = \sum\limits_{x_i \in \X(t)} \left( m_{e(x_i)}'' \ast G_t(x_i,x_j) \right) + F_{x_j}(t,P)\) and \(m_{e(x_i)}'' = m_{e(x_i)} - F_{x_i}(P,t)\).
    We define \(\rightarrow^\ast\) as the reflexive and transitive closure of \(\rightarrow^{t,e}\).
\end{definition}

\begin{example} \label{example:CNuPN2}

Fig.~\ref{fig:CNuPN_step3} depicts the firing of transition $t$ from Fig.~\ref{fig:CNuPN_init} with mode $e(x_1)= a$, $e(x_2)= b$, $e(x_3)= c$, $e(\nu_1)=d$, and $e(\nu_2)=f$. Fig~\ref{fig:CNuPN_step1} and Fig.~\ref{fig:CNuPN_step2} depict the intermediate steps.

\end{example}

\subsection{Rename-\texorpdfstring{\nPN}{nPN}}
A \rnPN is a \GnPN where all transitions are either \nPN transitions, i.e., without channels, or \GnPN transitions as in Fig.~\ref{fig:selectiveTransfer}. 

Specifically, Fig.~\ref{fig:selectiveTransferPre} depicts a \rnPN 
with a single special transition $t$ and a configuration. 

The configuration in Fig.~\ref{fig:selectiveTransferPost} is the configuration reached after firing $t$.

Note that, if $p_1$ did not contain any token bidden to $x_1$, then the transition could still fire, but no transfer happens.

\begin{figure}[t]
    \centering
    \begin{subfigure}[b]{.49\textwidth}\centering
    \scalebox{0.7}{
    \begin{tikzpicture}

\node[place,label={[name=p1Lab]left:\scriptsize $p_1$}](p1)at (0,0){};
\node[place,label={[name=p2Lab]above:\scriptsize $p_2$}](p2)at (0,1.25){};

\node[place,label={[name=p3Lab]above:\scriptsize $p_3$}](p3)at (1,1.25){};

\node[transvert] (t)at (1.5,0){};
\node at (t){\scriptsize $t$};

\node[place,label={[name=p4Lab]above:\scriptsize $p_5$}](p4)at (3,1.25){};
\node[place,label={[name=p5Lab]right:\scriptsize $p_6$}](p5)at (3,0){};

\draw [->,myDouble] (p1) -- node[below,midway,sloped] {\scriptsize $x_1$}(t);

\node [place,label={above:\scriptsize $p_4$}](p6)at(2,1.25){};
\draw[<->] (p6) -- node[midway,below,sloped]{\scriptsize $x_2$}(t);
\node at(p3){\scriptsize $a$};

\draw [->] (p2) -- node[below,midway,sloped] {\scriptsize $x_0$}(t);
\draw [<->] (p3) -- node[below,midway,sloped] {\scriptsize $x_1$}(t);

\draw [->,myDouble] (t) -- node[below,midway,sloped] {\scriptsize $x_2$}(p5);
\draw [->] (t) --node[below,midway,sloped] {\scriptsize $x_0$} (p4);

\node at($(p1)+(0,.1)$){\scriptsize a a};
\node at($(p1)-(0,.1)$){\scriptsize b c};
\node at($(p2)+(0,.1)$){\scriptsize b c};
\node at($(p2)-(0,.1)$){\scriptsize  c};
\node at($(p6)+(0,.1)$){\scriptsize c c};
\node at($(p6)-(0,.1)$){\scriptsize d};
    \end{tikzpicture}
    }
\caption{}
\label{fig:selectiveTransferPre}
    \end{subfigure}
    \hfill
       \begin{subfigure}[b]{.49\textwidth}\centering
       \scalebox{0.7}{
    \begin{tikzpicture}

\node[place,label={[name=p1Lab]left:\scriptsize $p_1$}](p1)at (0,0){};
\node[place,label={[name=p2Lab]above:\scriptsize $p_2$}](p2)at (0,1.25){};

\node[place,label={[name=p3Lab]above:\scriptsize $p_3$}](p3)at (1,1.25){};

\node [place,label={above:\scriptsize $p_4$}](p6)at(2,1.25){};
\node at(p3){\scriptsize $a$};
\node[transvert] (t)at (1.5,0){};
\node at (t){\scriptsize $t$};

\node[place,label={[name=p4Lab]above:\scriptsize $p_5$}](p4)at (3,1.25){};
\node[place,label={[name=p5Lab]right:\scriptsize $p_6$}](p5)at (3,0){};

\draw [->,myDouble] (p1) -- node[below,midway,sloped] {\scriptsize $x_1$}(t);
\draw [->] (p2) -- node[below,midway,sloped] {\scriptsize $x_0$}(t);
\draw [<->] (p3) -- node[below,midway,sloped] {\scriptsize $x_1$}(t);
\draw [<->] (p6) -- node[below,midway,sloped] {\scriptsize $x_2$}(t);

\draw [->,myDouble] (t) -- node[below,midway,sloped] {\scriptsize $x_2$}(p5);
\draw [->] (t) --node[below,midway,sloped] {\scriptsize $x_0$} (p4);

\node at (p5){\scriptsize d d};
\node at (p1){\scriptsize b c};
\node at(p2){\scriptsize b c};
\node at(p4){\scriptsize  c};
\node at($(p6)+(0,.1)$){\scriptsize c c };
\node at($(p6)-(0,.1)$){\scriptsize d};
    \end{tikzpicture}
    }
\caption{}
\label{fig:selectiveTransferPost}
    \end{subfigure}
    \caption{The firing of a special \rnPN transition. We have mapped $x_0$ to $c$, $x_1$ to $a$, and $x_2$ to $d$.}
    \label{fig:selectiveTransfer}
\end{figure}

\GnPN coverability can be reduced to \rnPN coverability:

special \GnPN transitions can be simulated by chains of \rnPN special transitions (see App.~\ref{app:GnPN-rnPN}). Consequently, in the following, we can show reductions regarding \GnPN or \rnPN coverability interchangeably.

\begin{example}
Fig.~\ref{fig:fragmentEquivalenceExample} depicts a \rnPN $N$ consisting of four \rnPN transitions $t_0,\dots,t_3$ that simulate the \GnPN $N'$ consisting of the single special \GnPN transition $t$ in Fig.~\ref{fig:CNuPN_example}. Fig.~\ref{fig:fragmentEquivalenceExample} depicts also the configurations as $t_0,t_1,t_2$ fire. $t_0$ simulates the preconditions. $t_1$ simulates the channel from $p_1$ to $p_3$. $t_2$ simulates the channel from $p_2$ to $p_4$. $t_3$ simulates the postconditions. When $t$ is currently not simulated, a token populates the place $\texttt{step}_0$. When $t_0$ fires, the token is moved to $\texttt{step}_1$, signaling that a simulation round is in progress. In order to perform the complete simulation of the firing of $t$, the transitions $t_0,\dots,t_3$ have to be fired in sequence. This is assured by the passage of a token among the $\texttt{step}_i$ places, for $i\in\{0,\dots,3\}$. When $t_3$ fires, the token is moved back to $\texttt{step}_0$, signaling that the simulation round has finished. The initial configuration of $N$ (Fig.~\ref{fig:fragmentEquivalenceExample1}) is that of $N'$ plus a fresh token on $\texttt{step}_0$. The coverability in the \GnPN in Fig.~\ref{fig:CNuPN_example} of a target $C$ is equivalent to the coverability in the \rnPN in Fig.~\ref{fig:fragmentEquivalenceExample} of the the extension of $C$ by the same fresh token on $\texttt{step}_0$.
\end{example}
\input{figures/fragmentEquivalenceExample}
\begin{restatable}{lemma}{mainGnPNrnPN}
     Given a \GnPN \(N\), there is a polynomial-time constructible \rnPN \(N'\) that simulates \(N\).
\end{restatable}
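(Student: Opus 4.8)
The plan is to replace, one special \GnPN transition at a time, each \emph{special} transition by a small gadget of \rnPN transitions, leaving \nPN transitions (and \GnPN transitions whose matrices are the identity) untouched. Fix a special transition $t$ with $\X(t)=\{x_1,\dots,x_k\}$ and let its channels be the finitely many quadruples $(x_i,x_j,p,q)$ with $G_t(x_i,x_j)[p]=\delta_q$; by the restriction in Def.~\ref{def:CNUPN} every non-identity source slot $(x_i,p)$ feeds exactly one such channel and there $x_j\neq x_i$, so the transfer induced by $t$ is a partial function on slots (several sources may target one destination slot, but no slot is split). The gadget introduces fresh \emph{control} places $\texttt{step}^t_0,\dots,\texttt{step}^t_N$ carrying a single token passed strictly in this order, fresh \emph{lock} places $p^t_{x_1},\dots,p^t_{x_k}$, and, for each channel, a fresh \emph{buffer} place. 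To the initial and target configurations we add one new tuple sitting with a single token on $\texttt{step}^t_0$ (as in the example around Fig.~\ref{fig:fragmentEquivalenceExample}). The control token forces the gadget transitions to fire as one atomic macro-step, and it guarantees that a fresh macro-step cannot begin before the previous one completes.

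The gadget runs in three phases. \emph{Preconditions.} The first transition $t_0$ fires exactly the standard preconditions $F_{x_i}(P,t)$ of $t$, picking one tuple per variable via its mode precisely as $t$'s mode does, and additionally deposits a single marker on $p^t_{x_i}$ inside the tuple bound to $x_i$; since the lock places are fresh and such markers are produced only by $t_0$ and consumed only by the last transition of the same round, every later gadget transition can re-select "the tuple playing the role of $x_i$" unambiguously by a self-loop arc on $p^t_{x_i}$. \emph{Transfers.} For each channel $(x_i,x_j,p,q)$ we use \rnPN special transitions of the shape of Fig.~\ref{fig:selectiveTransfer}: first a transition that, selecting the $x_i$-tuple by its lock marker, transfers-with-renaming all tokens of place $p$ to the channel's buffer place (either inside that tuple, or into a dedicated buffer tuple if one prefers strictly distinct variables in Fig.~\ref{fig:selectiveTransfer}); then, after all channels have been emptied into their buffers, a transition that moves each buffer's content, renamed, onto place $q$ of the $x_j$-tuple. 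Routing through buffers removes every read-after-write hazard, so the net effect on the selected tuples is exactly $m'_{e(x_j)}=\sum_{x_i\in\X(t)} m''_{e(x_i)}\ast G_t(x_i,x_j)$, matching Def.~\ref{def:stepCNUPN} before the postconditions. \emph{Postconditions.} The last transition removes the lock markers, adds $F_{x_j}(t,P)$ to the $x_j$-tuples, creates the fresh tuples $F_\nu(t,P)$ for $\nu\in\Upsilon(t)$, and returns the control token to $\texttt{step}^t_0$.

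Correctness follows in two directions. Forward: any step $M\to^{t,e}M'$ of $N$ is simulated by running the gadget to completion, reaching $M'$ together with the again-idle control and lock places; composing such simulations turns an $N$-run witnessing coverability of $C$ into an $N'$-run witnessing coverability of $C$ extended by the fresh control tuple. Backward: because the control token forces each macro-step to run atomically and the lock/buffer markers live on fresh places, an $N'$-configuration reached \emph{between} two macro-steps has the same projection (after erasing fresh places) as the last completed configuration, while one reached \emph{inside} a macro-step has a projection dominated by that of the completed macro-step; hence any $N'$-configuration covering the extended target yields an $N$-configuration covering $C$. Finally, the gadget for $t$ adds a number of places and transitions linear in $|P|$, $k$, and the number of channels of $t$, so summing over all special transitions gives a bound polynomial in $|N|$, and $N'$ is polynomial-time constructible.

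The main obstacle is the \emph{faithful serialization} of the genuinely simultaneous transfer semantics of \GnPN: the matrices $G_t$ act on the post-precondition tuples $m''$ all at once, possibly with swaps, whereas the gadget must touch one channel at a time. The buffer places resolve the read-after-write hazards, but one must still argue that partially executed gadget rounds cannot produce configurations that ``cheat'' coverability --- this is exactly what the single-token discipline on the $\texttt{step}^t_i$ places buys, and checking it, together with the non-leakage and non-accumulation of the lock markers across rounds, is the part of the argument that needs care.
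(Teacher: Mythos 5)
Your construction is essentially the paper's own proof: a single control token threaded through fresh step places serializes a pre-block, per-channel transfers routed through intermediate buffer places (the paper's move and rename/no-rename blocks, introduced for exactly the same reason, namely swaps and read-after-write hazards), per-variable lock places $p_{x_i}$ that pin the tuples chosen by the mode, and a post-block, with initial and target configurations extended by the idle control tuple so that only completed rounds can witness coverability. The argument is correct as given; the only superfluous remark is that mid-round configurations have projections dominated by the completed round (not literally true while tokens still sit at their source places), but it is not needed since the control-token requirement in the extended target already excludes mid-round configurations from covering.
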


\section{Complexity Lower Bound}\label{Sec:lowerbound}
In this section, we show how to encode $\nu$PNs into cEOSs. This provides an $F_{\omega 2}$ lower bound for cEOSs, and in the next section, we will show that, with some modifications, we can also encode \GnPNs into cEOSs (giving us one direction of the inter-reducibility between channel $\nu$PNs and cEOSs).
The idea of our reduction is that we can encode each tuple of a configuration of a \nuPN $\D$ into a dedicated object net inside a \textit{simulator} place at the system net level.
 
We now define a cEOS $\os$ that can be used to check coverability in $\D$.  
 We start by defining the object nets. Actually, $\os$ involves only the type $\blacksquare$, used to fire sequences of events, and one more type $N_\D$ that captures the transitions of $\D$ split by variable as specified in the next definitions and example. Since we deal with only two types, in this section we graphically represent system net places of type $N_\D$ by circles, as usual, while we represent places of type $\blacksquare$ by triangles.

$N_\D$ includes all places of $\D$ and the restriction of each transition of $\D$ to each of its variables.

\begin{definition}
    $N_\D$ is the PN $N_\D=(P_\D,T_\D,F_\D)$ such that $P_\D=P$, $T_\D=\biguplus_{t\in T}\{t_x\mid x\in\var(t)\}$, and, for each $p\in P_\D$ and $t_x\in T_\D$, $F_\D(p,t_x)=F_x(p,t)$ and $F_\D(t_x,p)=F_x(t,p)$.
\end{definition}

The system net $\hat{N}=(\hat{P},\hat{T},\hat{F})$ of $\os$ contains the places $\text{sim}$, $\text{selectTran}$, and, for each $t\in T$, the places, transitions, and flow function depicted in Fig.~\ref{fig:lowerboundnuPNtoEOSblock}. The figure depicts also the synchronization structure $\Theta$ of $\os$. Note that the only transitions that can fire concurrently are the $t^\text{fire}_{x}$, for $x\in\var(t)$.

\begin{figure}[t]\centering
\resizebox{1\columnwidth}{!}{
\begin{tikzpicture}

\node[triangle,label={[name=selLab]above:\scriptsize $\text{selectTran}$}](sel)at(0,0){};
\node[transhor,label={[name=t1Lab]above:\scriptsize $t_{x_1}^{\text{select}}$}](t1)at(1.5,0){};
\node[triangle,label={[name=sel1Lab]above:\scriptsize $\text{select}_{x_2}^t$}](sel1)at(3,0){};
\node at(4.5,0)(selDots){$\dots$};
\node[triangle,label={[name=t2Lab]above:\scriptsize $\text{select}_{x_n}^t$}](t2)at(6,0){};
\node[transhor,label={[name=sel2Lab]above:\scriptsize $t_{x_n}^\text{select}$}](sel2)at(7.5,0){};

\node[triangle,label={[name=tnuLab]above:\scriptsize $\text{select}_{\nu}^t$}](tnu)at(9,0){};
\node[transhor,label={[name=selnuLab]above:\scriptsize $t_\nu^\text{select}$}](selnu)at(10.5,0){};

\node[place,label={[name=simLab]left:\scriptsize $\text{sim}$}](sim)at(-1.5,1.25){};

\node[place,label={[name=t1selLab]left:\scriptsize $t^\text{selected}_{x_1}$}](t1sel)at(1.5,-1){};
\node[triangle,label={[name=t1runLab]right:\scriptsize $t^\text{run}_{x_1}$}](t1run)at(2.5,-1){};

\node[transhor,label={[name=t1putLab]right:\scriptsize $t^\text{fire}_{x_1}\tup{t_{x_1}}$}] at (2,-2)(t1put){};

\node[place,label={[name=tnselLab]left:\scriptsize $t^\text{selected}_{x_n}$}](tnsel)at(7.5,-1){};
\node[triangle,label={[name=tnrunLab]right:\scriptsize $t^\text{run}_{x_n}$}](tnrun)at(8.5,-1){};

\node[transhor,label={[name=tnputLab]left:\scriptsize $t^\text{fire}_{x_n}\tup{t_{x_n}}$}] at (8,-2)(tnput){};

\node[triangle,label={[name=tnurunLab]left:\scriptsize $t^\text{run}_{\nu}$}](tnurun)at(10.5,-1){};

\node[transhor,label={[name=tnuputLab]left:\scriptsize $t^\text{fire}_{\nu}\tup{t_\nu}$}] (tnuput)at (10.5,-2){};

\draw[->] (sim.south east) -| (t1Lab.north);
\draw[->] (sim.east) -| (sel2Lab.north);

\draw[->] (t1.south) -- (t1sel.north);
\draw[->] (sel2.south) -- (tnsel.north);

\draw[->] ($(selnu.south)-(.2,0)$) |- ($(t1run)+(0,.75)$) -- (t1run);
\draw[->] ($(selnu.south)-(.1,0)$)|- ($(tnrun)+(0,.6)$) -- (tnrun);
\draw[->] ($(selnu.south)-(0,0)$) -- (tnurun);

\draw[->] (sel) -- (t1);
\draw[->] (t1) -- (sel1);
\draw[->] (sel1) -- (selDots);
\draw[->] (selDots)--(t2);
\draw[->] (t2)--(sel2);
\draw[->] (sel2)--(tnu);
\draw[->] (tnu)--(selnu);

\draw[->] (t1put) -| ($(sim.south)+(.2,0)$);
\draw[->] ($(tnput.south) - (.1,0)$) -- ($(tnput.south)-(.1,.15)$) -| (sim);
\draw[->] ($(tnuput.south) - (.1,0)$)-- ($(tnuput.south)-(.1,.3)$)-| ($(sim.south)-(.2,0)$);

    \node[triangle,label={[name=reportLab]above right:\scriptsize $t^\text{report}$}](report)at (2,-3.1){};

\node[transhor,label={[name=doneLab]left:\scriptsize $t^\text{done}$}](done)at(0,-3.1){};

\draw[->] (t1put) -- (report);
\draw[->] (tnput) |- (report.north east);
\draw[->] (tnuput) |- (report.east);
\draw[->] (report) --node [midway, above]{\scriptsize 
$n+1$} (done);
\draw[->] (done) -- (sel);

\node at (sel){\scriptsize $\blacksquare$};

\draw[->](t1sel) -- (t1put);
\draw[->](t1run) -- (t1put);

\draw[->](tnsel) -- (tnput);
\draw[->](tnrun) -- (tnput);

\draw[->](tnurun) -- (tnuput);

\end{tikzpicture}
}

    \caption{The part of $\os$ dedicated to the simulation of a transition $t$ of $\D$ such that $\var(t)=\{x_1,\dots,x_n,\nu\}$. If $\nu\notin\var(t)$, then, $\text{select}^t_\nu$, $t^\text{select}_\nu$, $t_\nu^\text{run}$, and $t_{\nu}^\text{fire}$ have to be dropped, and $\hat{F}(t^\text{select}_x,t_{x_1}^\text{run})=1$, $\hat{F}(t^\text{select}_x,t_{x_n}^\text{run})=1$, and $\hat{F}(t^\text{report},t^\text{done})=n$ has to be set.
    }
    \label{fig:lowerboundnuPNtoEOSblock}
\end{figure}

Intuitively, given a marking $\mu$ of $\D$, we capture $\mu$ by using, for each tuple $m\in\mu$ a dedicated $N_\D$ object in a place \textit{sim} with internal marking $m$. 
\begin{definition}\label{def:lowerBoundEncode}
    The configuration $M=\fmset{m_1,\dots,m_\ell}$ of $\D$ is encoded by the configuration $\overline{M}=\sum_{i=1}^\ell \tup{\text{sim}, m_i} + \tup{\text{selectTran}, \varepsilon}$.
\end{definition}

To simulate a transition $t$ of $\D$, we just need to, for each $x\in\var(t)\setminus\{\nu\}$,
\begin{inparaenum}[\itshape (1)]
    \item map $x$ to some object $O_x$ in \textit{sim} and then
    \item concurrently fire the transitions $t_x$ in each $O_x$.
\end{inparaenum} $\os$ performs this mapping by moving the non-deterministically chosen object $O_x$ from $\text{sim}$ to the place $t^{\text{selected}}_{x}$, for $x\in\var(t)\setminus\{\nu\}$. This selection happens sequentially as enforced by the array of places $\text{selectTran}$, $\text{select}^t_{x_i}$, and transitions $t^\text{select}_{x_i}$ and, if present, $\text{select}^t_{\nu}$ and $t^\text{select}_{\nu}$.

Only after firing the last $t^\text{select}_y$ transition,\footnote{That is $y=\nu$ if $\nu\in\var(t)$ and $y=x_n$ otherwise.} the transitions $t^\text{fire}_{x}$ get enabled and, since they are synchronized with $t_x$, they \begin{inparaenum}[\itshape (1)]
    \item move (or create, if $x=\nu$) the object $O_x$ from $t^\text{selected}_{x}$ back to $\text{sim}$,
    \item update the internal marking of $O_x$ according to $t_x$, and
    \item move a $\blacksquare$ token from $t^{\text{run}}_x$ to $t^\text{report}$.
\end{inparaenum}

After firing all the $t^\text{fire}_{x}$ transitions, the $t^\text{done}$ transition gets enabled. Its firing cleans $t^\text{report}$ and produces a single $\blacksquare$ token in $\text{sim}$, completing the simulation of $t$.

The next lemma follows from the previous description. Given some configuration $M$ and $M'$ of $\D$, we say that a run $\overline{M}\rightarrow^*\overline{M'}$ of $\os$ of positive length is \textit{minimal} if in each intermediate configuration the place $\text{selectTran}$ is not marked.
\begin{lemma}
    $M\rightarrow^t M'$ in $\D$ if only if $\overline{M}\rightarrow^{\sigma} \overline{M'}$ in $\os$ for some minimal run $\sigma$ of $\os$.
\end{lemma}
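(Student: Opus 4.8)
The plan is to prove the two implications by tracing the single control ($\blacksquare$) token of Fig.~\ref{fig:lowerboundnuPNtoEOSblock}: a minimal run of $\os$ is exactly the journey of this token out of $\text{selectTran}$, through the block of one transition $t\in T$, and back to $\text{selectTran}$, and each move it makes mirrors one ingredient of a $\nu$PN step $M\rightarrow^{t,e}M'$. Throughout I write $M=\fmset{m_1,\dots,m_\ell}$ and, to fix notation, assume $\var(t)=\{x_1,\dots,x_n,\nu\}$ with a single fresh variable; the case $\nu\notin\var(t)$ follows from the modifications in the caption of Fig.~\ref{fig:lowerboundnuPNtoEOSblock}, and several fresh variables are handled with one transition $t^\text{fire}_{\nu_j}$ each.

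\emph{Forward direction.} Assume $M\rightarrow^{t,e}M'$ with mode $e\colon\X(t)\to\{1,\dots,\ell\}$; by definition $m_{e(x_i)}\geq F_{x_i}(P,t)$ for all $i$, and $M'=M''+\out+\sum_{x\in\X(t)}\fmset{m'_{e(x)}}$ where $M''=M-\sum_{x\in\X(t)}\fmset{m_{e(x)}}$ and $m'_{e(x_i)}=m_{e(x_i)}-F_{x_i}(P,t)+F_{x_i}(t,P)$. Since $\overline{M}$ puts in $\text{sim}$ one $N_\D$-object per tuple-occurrence of $M$, pick $n$ pairwise distinct such objects $O_1,\dots,O_n$ with $O_i$ of internal marking $m_{e(x_i)}$, which is possible as $M\geq\sum_{x\in\X(t)}\fmset{m_{e(x)}}$. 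The run $\sigma$ is: (i) fire $t^\text{select}_{x_1},\dots,t^\text{select}_{x_n}$ in order, each moving $O_i$ from $\text{sim}$ to $t^\text{selected}_{x_i}$ and relaying the control token along the chain $\text{select}^t_{x_2},\dots,\text{select}^t_\nu$, whose last step $t^\text{select}_\nu$ deposits a $\blacksquare$ on every $t^\text{run}_{x_i}$ and on $t^\text{run}_\nu$; (ii) fire, in any order, the events $\tup{t^\text{fire}_{x_i},\fmset{t_{x_i}}}$ that synchronise $t^\text{fire}_{x_i}$ with $t_{x_i}$ — each enabled by Def.~\ref{def:bussy14_enable} precisely because $O_i$'s internal marking $m_{e(x_i)}$ dominates $F_{x_i}(P,t)$ — which return $O_i$ to $\text{sim}$ with internal marking $m'_{e(x_i)}$ and move a $\blacksquare$ to $t^\text{report}$, together with $\tup{t^\text{fire}_\nu,\fmset{t_\nu}}$, which ($t_\nu$ having empty precondition in $N_\D$) adds to $\text{sim}$ a fresh $N_\D$-object of internal marking $F_\nu(t,P)$; (iii) fire $t^\text{done}$, consuming the $n+1$ tokens on $t^\text{report}$ and returning one $\blacksquare$ to $\text{selectTran}$. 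Reading off $\text{sim}$, the final configuration is $\overline{M'}$ (Def.~\ref{def:lowerBoundEncode}) since $\out=\fmset{F_\nu(t,P)}$; and $\text{selectTran}$ is unmarked in every intermediate configuration, so $\sigma$ is a minimal run of positive length.

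\emph{Converse.} Let $\sigma\colon\overline{M}\rightarrow^\sigma\overline{M'}$ be a minimal run. The crux is that $\sigma$ is \emph{forced} to have exactly the shape (i)--(iii) for a unique $t\in T$. From $\overline{M}$ the only enabled events are those with underlying transition $t^\text{select}_{x_1}$ — every other transition appearing in $\Theta$ has a precondition place unmarked in $\overline{M}$ — and firing one of them empties $\text{selectTran}$, which fixes the block, hence $t$. As long as $\text{selectTran}$ stays empty, which minimality requires up to the last configuration, the control stays inside this block, and the strictly sequential $\text{select}^t$-chain forces $t^\text{select}_{x_1},\dots,t^\text{select}_\nu$ to fire in this order before any $t^\text{fire}_{x_i}$ (whose precondition place $t^\text{run}_{x_i}$ is marked only after $t^\text{select}_\nu$); afterwards the only way ever to re-mark $\text{selectTran}$, which reaching $\overline{M'}$ demands, is to fire all $n+1$ events $\tup{t^\text{fire}_{x_i},\fmset{t_{x_i}}}$, $\tup{t^\text{fire}_\nu,\fmset{t_\nu}}$ and then $t^\text{done}$, and firing $t^\text{done}$ earlier or a second time would re-mark $\text{selectTran}$ at an intermediate configuration, contradicting minimality. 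Given this shape, let $e$ map $x_i$ to the object $O_i$ selected for $x_i$ (read through a fixed bijection between the $\ell$ objects of $\text{sim}$ in $\overline{M}$ and $\{1,\dots,\ell\}$); the $O_i$ are distinct and the objects left in $\text{sim}$ realise $M''=M-\sum_x\fmset{m_{e(x)}}$. Enabledness of $\tup{t^\text{fire}_{x_i},\fmset{t_{x_i}}}$ forces $m_{e(x_i)}\geq F_{x_i}(P,t)$ and, by Def.~\ref{def:bussy14_enable}, rewrites $O_i$ to internal marking $m'_{e(x_i)}$, while $\tup{t^\text{fire}_\nu,\fmset{t_\nu}}$ adds one object of internal marking $F_\nu(t,P)$; hence the final configuration, namely $\overline{M'}$, encodes $M''+\out+\sum_x\fmset{m'_{e(x)}}$, so $M'=M''+\out+\sum_x\fmset{m'_{e(x)}}$, which is exactly the result of firing $t$ with mode $e$ from $M$. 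Therefore $M\rightarrow^{t,e}M'$.

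The main obstacle is the forced-shape claim in the converse: one must check, configuration by configuration along an arbitrary minimal run, that the enabled events are exactly the singleton predicted above. This rests on three structural features of $\os$ — the unique control token, the strictly sequential $\text{select}^t$-chain, and the fact that the $t^\text{run}$ places are populated only by the final selection step — together with the definition of minimality, which is precisely what forbids a run from chaining two simulation cycles. Everything else reduces to a line-by-line comparison of Def.~\ref{def:bussy14_enable} with the $\nu$PN step relation.
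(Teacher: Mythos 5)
Your proposal is correct and follows essentially the same route as the paper's proof: the forward direction constructs the simulating run explicitly (select chain, synchronized $t^\text{fire}$ events, then $t^\text{done}$), and the converse decomposes an arbitrary minimal run into exactly these select/fire/done phases and reads off the mode $e$ from the selected objects. You merely spell out in more detail the forced-shape argument that the paper asserts implicitly, so there is no substantive difference.
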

\begin{proof}
We assume $\var(t)=\{x_1,\dots,x_n,\nu\}$: in case $\nu\notin\var(t)$, the argument is analogous. 

If $M\rightarrow^t M'$, then there is a mode $e$ such that $M\rightarrow^{t,e}M'$. Thus, we can write $M=\fmset{m_1,\dots,m_n}+M''$ for some $m_1,\dots,m_n\in\mathbb{N}^\ell$ and configuration $M''$ of $\D$. By \nuPN semantics, $M'=M''+F_\nu(P,t)+\sum_{i=1}^n m_i - F_{x_i}(P,t) + F_{x_i}(t,P)$.
Thus, by Def.~\ref{def:lowerBoundEncode}, $\overline{M}=\overline{M''}+\sum_{i=1}^n \tup{\text{sim},m_i}$ and $\overline{M'}=\overline{M''}+\tup{\text{sim},F_\nu(P,t)}+\sum_{i=1}^n\tup{\text{sim},m_i - F_{x_i}(P,t) + F_{x_i}(t,P)}$.

The transition $t^{\text{select}}_{x_1}$ is enabled on $\overline{M}$. Moreover, we can fire the sequence of transitions $t^{\text{select}}_{x_1},\dots,t^{\text{select}}_{x_n}$ so as to select objects in $\text{sim}$ matching the mode $e$, i.e., by reaching a configuration $M''-\tup{\text{selectTran},\varepsilon}+\sum_{i=1}^n\tup{t^{\text{selected}}_{x_i},m_{e(x_{i})}}$. 

After firing $t^{\text{select}}_\nu$, for $x\in\var(t)$, the transitions $t^{\text{fire}}_{x}$ are enabled and, because of the synchronization structure, their firing results in the configuration
$M''-\tup{\text{selectTran},\varepsilon}+\sum_{i=1}^n\tup{\text{sim},m_{e(x_{i})} 
-F_{x_i}(P,t)+ F_{x_i}(t,P)} + \tup{\text{sim},F_{\nu}(t,P)} + (n+1)\tup{t^{\text{report}},\varepsilon}$. On this configuration, $t^{\text{done}}$ is enabled and its firing returns the configuration $\overline{M'}$. Thus, we have exhibited a run $\sigma$ such that $\overline{M}\rightarrow^{\sigma}\overline{M'}$. Moreover, this sequence is minimal.

Vice-versa, if there is a minimal run $\sigma$ of $\os$ such that $\overline{M}\rightarrow^\sigma\overline{M''}$, then we can organize $\sigma$ in three blocks: 
\begin{inparaenum}[\itshape (1)]
\item a prefix $\sigma'$ that amounts to the firing of the $t^\text{select}_x$ transitions,
\item an intermediate run $\sigma''$ that amounts to the firing of the transitions $t^{\text{fire}}_{x}$, and
\item a suffix $\sigma'''$ that amounts to the firing of $t^{\text{report}}$, for $x\in\var(t)$.
\end{inparaenum}

Moreover, the configuration reached at the end of $\sigma'$ is $\overline{M''}-\tup{\text{selectTran},\varepsilon} + \sum_{i=1}^n \tup{t^{\text{selected}}_{x_i},m_i} + \sum_{x\in\var(t)}\tup{t^\text{run}_{x_i},\varepsilon}$. Hence, $\overline{M}=\overline{M''} +\sum_{i=1}^n\tup{\text{sim},m_i}$. Moreover, by applying $\sigma$ to $\overline{M}$, we obtain that $\overline{M'}=\overline{M''}+\tup{\text{sim},F_\nu(t,P)}+\sum_{i=1}^n \tup{\text{sim}, m_i - F_{x_i}(P,t) + F_{x_i}(t,P)}$.

Consequently, $M=M''+\sum_{i=1}^n m_i $ and $M'=M''+ F_\nu(t,P) + \sum_{1=1}^n m_i - F_{x_i}(P,t)+F_{x_i}(t,P)$.
Thus, $t$ is enabled on $M$ with the mode $e$ such that $e(x_i)=m_i$, for each $i\in\{1,\dots,n\}$ and $M\rightarrow^{t,e}M'$.

\end{proof}

Consequently, the set of reachable configurations in $\os$ that encode some configuration in $\D$ essentially matches the set of configurations reachable in $\D$. Thus, we can check coverability in $\D$ of a configuration $\tau$ from an initial configuration $\iota$ by checking coverability in $\os$ of $\overline{\tau}$ from $\overline{\iota}$. Thus, we have a reduction from \nuPN coverability to cEOS coverability. Note that this reduction is polynomial, since the construction of $\os$ is polynomial. Since \nuPN coverability is $\mathbf{F}_{\omega2}$-complete we obtain the following theorem.

\begin{theorem}
    cEOS coverability is $\mathbf{F}_{\omega2}$-hard.
\end{theorem}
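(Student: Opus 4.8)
The plan is to package the construction of this section as a polynomial-time many-one reduction from \nuPN coverability to cEOS coverability, and then apply the $\mathbf{F}_{\omega 2}$-completeness of the former~\cite{LazicS16}. Given a \nuPN $\D$ with initial configuration $\iota$ and target configuration $\tau$, I would take the cEOS $\os$ built above (object nets $\blacksquare$ and $N_\D$, the system net of Fig.~\ref{fig:lowerboundnuPNtoEOSblock}, and the synchronisation set $\Theta$) together with the encoded configurations $\overline\iota$ and $\overline\tau$ of Def.~\ref{def:lowerBoundEncode}. First I would discharge two sanity checks: that $\os$ is genuinely conservative, i.e.\ $d(\support{\prefun_{\hat N}(\hat t)})\subseteq d(\support{\postfun_{\hat N}(\hat t)})$ for every system-net transition $\hat t$ --- which holds because each such transition either keeps the object on its input place or, in the $t^\text{fire}_\nu$ case, freshly creates an $N_\D$-object --- and that $\abs{\os}$ is polynomial in $\abs{\D}$, so the reduction is computable in polynomial time.

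Next I would lift the key lemma, which only relates one $\D$-step to one \emph{minimal} run of $\os$, to runs of arbitrary length. Concatenating minimal runs yields $M\rightarrow^\ast M'$ in $\D$ iff $\overline M\rightarrow^\ast\overline{M'}$ in $\os$, exploiting that the encoded configurations are exactly the reachable configurations of $\os$ in which $\text{selectTran}$ carries a $\blacksquare$ and all nested tokens sit on $\text{sim}$ --- these are the ``checkpoints'' separating minimal runs. I would then bridge reachability and coverability. The forward direction is a monotonicity remark: the encoding transports the \nuPN coverability order on configurations onto $\leq_f$, since an extra tuple of $M$ becomes an extra $N_\D$-nested token on $\text{sim}$ and an extra token inside a tuple becomes an extra token in the internal marking of the corresponding nested token --- precisely the two enlargements that $\leq_f$ permits. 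For the converse, given a reachable $\mu$ of $\os$ with $\overline\tau\leq_f\mu$, I would argue that $\mu$ may be assumed to be an encoded configuration: because $\overline\tau$ marks $\text{selectTran}$, so does $\mu$, and a reachable marking with $\text{selectTran}$ marked is itself a checkpoint, hence $\mu=\overline M$ for some $M$ reachable in $\D$; if the run reached $\mu$ in the middle of a simulation, one retreats to the last checkpoint, which still covers $\overline\tau$ since $\overline\tau$ is supported on $\text{sim}$ and $\text{selectTran}$ alone. Transporting $\overline\tau\leq_f\overline M$ back along the encoding gives that $M$ covers $\tau$ in $\D$, so $\tau$ is coverable from $\iota$ in $\D$ iff $\overline\tau$ is coverable from $\overline\iota$ in $\os$, and $\mathbf{F}_{\omega 2}$-hardness of cEOS coverability follows.

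The delicate point is this ``retreat to the last checkpoint'' argument: one must verify that a reachable marking with $\text{selectTran}$ marked cannot simultaneously hide a partly completed simulation --- e.g.\ an object stranded on some $t^\text{selected}_x$, or a $\blacksquare$ stranded on $t^\text{report}$. This is exactly what the serial array of $\text{select}^t_{x_i}$ places and the weight-$(n+1)$ arc from $t^\text{report}$ to $t^\text{done}$ are designed to guarantee: the unique ``driver'' $\blacksquare$ has to traverse the whole array and fire all the $t^\text{fire}_x$ and $t^\text{report}$ transitions before it can return to $\text{selectTran}$. Turning this into a clean place-invariant on the gadget of Fig.~\ref{fig:lowerboundnuPNtoEOSblock} is the only real work left beyond the lemma already proved.
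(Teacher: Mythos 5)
Your proposal follows essentially the same route as the paper: the section's encoding of a \nuPN $\D$ into the cEOS $\os$, the minimal-run lemma lifted to whole runs by concatenation, polynomiality and conservativity of the construction, and the $\mathbf{F}_{\omega 2}$-completeness of \nuPN coverability; your checkpoint/invariant discussion merely makes explicit what the paper compresses into ``the reachable encoded configurations essentially match the configurations reachable in $\D$.'' One small remark: since $\overline{\tau}$ marks $\text{selectTran}$, any covering marking is already a checkpoint, so your fallback ``retreat to the last checkpoint'' is never needed (and as stated it would be delicate, because internal markings of the selected objects do not evolve monotonically along a partial simulation).
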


\section{From  \texorpdfstring{\rnPN}{rnuPN} to cEOS}\label{sec:toEOS}
\begin{figure}[t]
    \centering
    \begin{subfigure}[b]{.29\textwidth}\centering
    \scalebox{0.7}{
\begin{tikzpicture}
    \node[place,label=left:\scriptsize $p$] (p){}; 
    \node [transvert,label=right:\scriptsize $\text{check}(p)$]at(1.5,0)(check){};
    \node [transhor,label=right:\scriptsize $\text{rem}(p)$]at(0,-1.5)(rem){};
    \node [transhor,label=right:\scriptsize $\text{add}(p)$]at(0,1.5)(add){};
    \draw[->] (add) -- (p);
    \draw[<->] (check) -- (p);
    \draw[<-] (rem) -- (p);
    \node at(0,-2){};
\end{tikzpicture}
}
\caption{}
\label{fig:toEOSmoduleInternal}
\end{subfigure}
\hfill
    \begin{subfigure}[b]{.69\textwidth}\centering
    \scalebox{0.7}{
\begin{tikzpicture}
    \node[place,label={[name=ready1Lab]above:\scriptsize $t^\text{ready}_{x_1}$}]at(0,0)(ready1){};
\node [transhor,label={[name=tremLab]above:\scriptsize $t^\text{rem}_i\tup{\text{rem}(p_i)}$}] (trem)at (1.5,0){};
    \node[triangle,label=below:\scriptsize $p_i^s$]at(3,0)(pi){};
\node [transhor,label={[name=taddLab]above:\scriptsize $t^\text{add}_i\tup{\text{add}(p_j)}$}] at (4.5,0)(tadd){};
    \node[place,label=right:\scriptsize{$t^\text{copy}_{x_k}
    $}
    ]at(6,0)(ready2){};

\node[triangle,label={above left:\scriptsize $t^\text{run}_\text{tran}$}]at(3,1)(aux1){};
\node[triangle,label={left:\scriptsize $t^\text{running}_\text{tran}$}]at(1.5,-1)(aux2){};

\draw [<->] (ready1) -- (trem);
\draw [->] (trem) -- (pi);
\draw [->] (pi) -- (tadd);
\draw [->] (tadd) -- (ready2);
\draw [<->] (tadd) -- (ready2);

\draw [->](aux1.west) -| (tremLab.north);
\draw [<-](aux1.east) -| (taddLab.north);
\draw [<-](aux2.north) -- (trem.south);
\draw [->](aux2.east) -| (tadd.south);

\node[triangle,label=\scriptsize $t^\text{tran}_\text{done}$]at(6,2)(done){};
\node [transvert,label=\scriptsize $t^\text{stop}_\text{tran}$](stop)at(4.5,2){};

\draw[->](aux1) |- (stop);
\draw[->](stop) -- node[midway,above]{$2$} (done);
\draw[->](ready1Lab.north) |- (stop.north west);

\node[place,label=\scriptsize $\text{trash}$]at(6,1)(trash){};
\draw[<->] (stop) |-  (trash.north west);

\node at(trash)(token){$\bullet$};
\node [xshift=1.75cm,rectangle,draw,dashed]at (trash)(emptyObj){\scriptsize empty marking};
\draw [dashed] (token.center) -- (emptyObj);

\end{tikzpicture}
}
\caption{}
\label{fig:toEOSmoduleGadget}
    \end{subfigure}

\begin{subfigure}[b]{\textwidth}\centering
        \scalebox{0.7}{
\begin{tikzpicture}

\node[place,label={[name=simLab]right:\scriptsize $\text{sim}$}](sim)at(-1.5,0){};

\node[triangle,label={[name=selLab,xshift=-.2cm]above:\scriptsize $\text{selectTran}$}](sel)at(0,0){};

\node[transhor,label={[name=t0Lab,text width=2cm,align=center]above:\scriptsize $t_{x_0}^{\text{select}}$$\langle\text{rem}(p_2),$\\$\text{add}(p_4)\rangle$
}](t0)at(1.5,0){};
\node[triangle,label={[name=sel0Lab,xshift=.2cm]above:\scriptsize $t^\text{selected}_{x_0}$}](sel0)at(3,0){};

\begin{scope}[xshift=.5cm]
\node[transhor,label={[name=t1Lab,xshift=-.4]above:\scriptsize $t_{x_1}^{\text{select}}\tup{\text{check}(p_3)}$}](t1)at(4.5,0){};
\node[triangle,label={[name=sel1Lab,xshift=.2cm]above:\scriptsize $t^\text{selected}_{x_1}$}](sel1)at(6,0){};
\node[transhor,label={[text width=2cm, align=center, name=t2Lab]above:\scriptsize $t_{x_2}^{\text{select}}$$\langle\text{check}(p_4)$
}](t2)at(7.5,0){};
\node[triangle,label={[name=sel1Lab,xshift=.25cm]above:\scriptsize $t^\text{selected}_{x_2}$}](sel2)at(9,0){};
\node[transhor,label={[name=enableLab]above:\scriptsize $t^\text{enabling}$}](enable)at(10.5,0){};
\end{scope}

\begin{scope}
\node[place,label={[name=t0selLab]left:\scriptsize $t^\text{ready}_{x_0}$}](t0sel)at(1.5,-1){};
\node[triangle,label={[name=t0runLab]right:\scriptsize $t^\text{run}_{x_0}$}](t0run)at(2.5,-1){};
\end{scope}

\begin{scope}[xshift=.5cm]
\begin{scope}[xshift=3cm]
\node[place,label={[name=t1selLab]below:\scriptsize $t^\text{ready}_{x_1}$}](t1sel)at(1.5,-1){};

\node[place,label={[name=t1tranLab]left:\scriptsize $t^\text{copy}_{x_1}$}]at(7.5,-2)(t1tran){};

\node[transhor,label={[name=t1putLab]right:\scriptsize $t^\text{move}_{x_2}$}] at (2.5,-2.75)(t1put){};


\end{scope}

\begin{scope}[xshift=6cm]
\node[place,label={[name=t2selLab]below:\scriptsize $t^\text{copy}_{x_2}$}](t2sel)at(1.5,-1){};
\node[triangle,label={[name=t2runLab]right:\scriptsize $t^\text{run}_{\text{tran}}$}](t2run)at(2.5,-1){};

\node[triangle,label={[name=t2tranLab]above right:\scriptsize $t^\text{tran}_{\text{done}}$}]at(2.5,-2)(t2done){};

\node[transhor,label={[name=t2putLab]below right:\scriptsize $t^\text{move}_{x_1}$}] at (2.5,-3)(t2put){};

\draw [->] (t2done) -- (t2put);
\draw [->] (t2done) -| ($(t1put.north)+(0.2,0)$);
\draw [->] (t1tran) |- (t2put.east);
\end{scope}
\end{scope}

\node[transhor,label={[name=t0putLab,text width=.1cm]right:\scriptsize $t^\text{move}_{x_0}$}] at (2,-2)(t0put){};

\draw[->] (sim.north east) -- ($(sim.north east)+(0,1.2)$) -| (t0Lab.north);
\draw[->] (sim.north) -- ($(sim.north)+(0,1.2)$) -| (t1Lab.north);
\draw[->] (sim.north west) -- ($(sim.north west)+(0,1.4)$) -| (t2Lab.north);

\draw[->] (t1.south) -- (t1sel.north);
\draw[->] (t0.south) -- (t0sel.north);
\draw[->] (t2.south) -- (t2sel.north);

\draw[->] ($(enable.south)-(.2,0)$) |- ($(t0run)+(0,.75)$) -- (t0run);
\draw[->] ($(enable.south)-(.1,0)$) |- ($(t2run)+(0,.65)$) -- (t2run);

\draw[->] (sel) -- (t0);
\draw[->] (t0) -- (sel0);
\draw[->] (sel0) -- (t1);
\draw[->] (t1) -- (sel1);

\draw[->] (t0put) -| (sim.south east);

    \node[triangle,label={[name=reportLab]above right:\scriptsize $t^\text{report}$}](report)at (2,-3.75){};

\node[transhor,label={[name=doneLab]left:\scriptsize $t^\text{done}$}](done)at(0,-3.75){};

\draw[->] (t0put) -- (report);

\draw[->] (report) --node [midway, above]{\scriptsize 
$3$} (done);
\draw[->] (done) -- (sel);

\node at (sel){\scriptsize $\blacksquare$};

\draw [->](t2sel.west) -| (t1put.north);

\draw [->](t0run) -- (t0put);
\draw [->](t0sel) -- (t0put);

\draw [->](sel1) -- (t2);
\draw [->](t2) -- (sel2);
\draw [->](sel2) -- (enable);

\draw [->] (enable.south) -- (t1tran);

\draw [->] (t1put) -| (sim.south);
\draw [->] (t2put.south west) -| (sim.south west);

\draw[->] (t1put) |- (report.north east);
\draw[->] (t2put) |- (report.south east);

\node[draw,dashed,fit={(t1sel)(t1tran)(t2run)(t1selLab)(t1tranLab)(t2runLab)},xshift=0.15cm,](dashed){};
\node[rotate=-90]at($(dashed.east)+(.15,0)$){\scriptsize transfer gadgets};

    \end{tikzpicture}
    }
    \caption{}
    \label{fig:toEOSmoduleC}
    \label{fig:toEOS}
    
\end{subfigure}
    
    \caption{
    (\subref{fig:toEOSmoduleInternal}) Transitions $\text{add}(p)$,  $\text{check}(p)$, and $\text{rem}(p)$ assumed to be in the \nPN $N=(P,T,F)$, for each place $p\in P$.~(\subref{fig:toEOSmoduleGadget}) Gadget $\text{transfer}_{p_i}$, with initial marking, for each place $p_i$ in Fig.~\ref{fig:selectiveTransfer}: $j=6$ if $i=1$ and $j=i$ otherwise; $k=2$ if $i=1$ and $k=1$ otherwise.
    Note that all the places except $t^\text{rem}_i$, $p^s_i$, and $t^\text{add}_i$ are shared by all transfer gadgets. 
    ~(\subref{fig:toEOSmoduleC})
    The part of $\os$ dedicated to the simulation of the special transition $t$ in Fig.~\ref{fig:selectiveTransfer}. The dashed area represents
    the gadget $\text{transfer}_{p_i}$.
    Only interface places are shown. Circles and triangles denote places of type $N$ and $\blacksquare$.}
    \label{fig:toEOSmodule}
\end{figure}

We build on top of the construction in previous section and show how to simulate special transitions of a \rnPN $N'= (P,T,F,G)$. Since all special transitions have the same shape (up to the name of the involved places), it is sufficient to show the simulation of the special transition $t$ in Fig.~\ref{fig:selectiveTransfer}. 
We assume, without loss of generality, that the net $N'$ contains, for each $p\in P$, the standard transitions $\text{add}(p),\text{check}(p),\text{rem}(p)\in T$ depicted in Fig.~\ref{fig:toEOSmoduleInternal}.
We encode the arbitrary special \rnPN transition $t$ in Fig.~\ref{fig:selectiveTransfer} into a cEOSs $\os$ (depicted in Fig.~\ref{fig:toEOSmoduleGadget} and Fig.~\ref{fig:toEOS}) that employs the two object net types $N=(P,T,F)$ and $\blacksquare$. The synchronization structure contains no object autonomous event; it contains only the system autonomous and synchronization events depicted in Fig.~\ref{fig:toEOSmoduleGadget} and Fig.~\ref{fig:toEOS}. The most notable difference with the construction in~\cite{OurArxiv} is that, by using the gadgets $\text{transfer}_{p_i}$ in Fig.~\ref{fig:toEOSmodule}, we simulate $t$ up to lossiness, that is, the simulation of the special transition may return either the encoding of the configuration reached by firing $t$ or a sub-configuration w.r.t. to $\leq_f$.

Each gadget $\text{transfer}_{p_i}$ serializes a transfer from the place $p_i$ of an object in the place $t^\text{ready}_{x_1}$ (capturing the instantiation of variable $x_1$) to the respective place of an object in $t^\text{copy}_{x_1}$ or $t^\text{copy}_{x_2}$ (capturing the updated objects instantiating $x_1$ and $x_2$), according to the special transition semantics. We call these movements \textit{transfer steps}. When the gadget non-deterministically stops, it cleans $t^\text{ready}_{x_1}$ from unprocessed internal tokens by using a \textit{trash} place, which, in turn, witnesses the lossiness.
Note that lossiness does not impair the reduction for coverability.

The next lemmas show this behavior.
Lemma~\ref{lem:perfectStep} shows that the gadgets $\text{transfer}_{p_i}$ simulate perfect transfers as long as they do not modify the token in \textit{trash}.

\begin{restatable}{lemma}{perfectStep}\label{lem:perfectStep}
    Let $M=\fmset{m_1,m_2}$ be a marking for the \rnPN $N$ in Fig.~\ref{fig:toEOSmodule} such that $m_1\geq\tup{t^\text{ready}_{x_1},\delta_{p(1)}}$,
$\os'$ be the cEOS obtained by the union, for all places $p_i$ of $N$, of the gadgets $\text{transfer}_{p_i}$ in Fig.~\ref{fig:toEOSmoduleGadget},
and $M_0=\tup{t^\text{ready}_{x_1},m_1}+\tup{t^\text{copy}_{x_1},\varepsilon}+\tup{t^\text{copy}_{x_2},m_2}+\tup{\text{trash},m_3}+\tup{t^\text{run}_\text{tran},\varepsilon}$ for some marking $m_3$. 

The configuration $M_1=\tup{t^\text{copy}_{x_1},m'_1}+\tup{t^\text{copy}_{x_2},m'_2}+\tup{\text{trash},m_3}+2\tup{t^\text{tran}_\text{done},\varepsilon}$ is reachable in $\os'$ from $M_0$ if and only if $m_1'=m_1-m_1(p_1)\delta_{p_1}$ and $m_2'=m_2 + m_1(p_1)\delta_{p_6}$.\footnote{I.e., $m_1'$ and $m_2'$ are obtained by transferring the tokens in $p_1$ of $m_1$ to $p_6$ of $m_2$.}
\end{restatable}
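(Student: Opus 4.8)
The plan is to treat the family of gadgets $\text{transfer}_{p_i}$ as a single finite ``token-by-token transfer protocol'' and to read off from its structure exactly which nested markings of the prescribed shape are reachable. The whole argument rests on two invariants that I claim hold along \emph{every} run from $M_0$ in $\os'$. \textbf{(I1)}~Each system place of type $N$ --- namely $t^\text{ready}_{x_1}$, $t^\text{copy}_{x_1}$, $t^\text{copy}_{x_2}$ and $\text{trash}$ --- carries exactly one nested token, until $t^\text{stop}_\text{tran}$ fires. This is because every gadget transition that touches such a place does so through a bidirectional system arc: $t^\text{rem}_i$ consumes the unique $N$-object on $t^\text{ready}_{x_1}$, fires the synchronised $\text{rem}(p_i)$ on it and returns it; $t^\text{add}_i$ does the same for the $N$-object on $t^\text{copy}_{x_1}$ or $t^\text{copy}_{x_2}$ via $\text{add}(p_j)$; and $t^\text{stop}_\text{tran}$ merges the $t^\text{ready}_{x_1}$-object into the $\text{trash}$-object and removes it from $t^\text{ready}_{x_1}$. \textbf{(I2)}~The $\blacksquare$-places $t^\text{run}_\text{tran}$, $t^\text{running}_\text{tran}$ and the counters $p_i^s$ force a strict alternation: $M_0$ marks $t^\text{run}_\text{tran}$ with one $\blacksquare$ and leaves the others empty; $t^\text{rem}_i$ consumes that $\blacksquare$ and marks $t^\text{running}_\text{tran}$ and $p_i^s$; afterwards the only enabled gadget transition is the matching $t^\text{add}_i$ (it needs $p_i^s$, which nothing else marks), which clears $p_i^s$ and $t^\text{running}_\text{tran}$ and restores the $t^\text{run}_\text{tran}$ $\blacksquare$; and $t^\text{stop}_\text{tran}$, which also needs the $t^\text{run}_\text{tran}$ $\blacksquare$, consumes it irreversibly and produces two $\blacksquare$ on $t^\text{tran}_\text{done}$. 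Together (I1)--(I2) yield that every run from $M_0$ is a (possibly empty) sequence of matched pairs $(t^\text{rem}_i,t^\text{add}_i)$ --- each removing one token from place $p_i$ of the $t^\text{ready}_{x_1}$-object and adding one token either to place $p_6$ of the $t^\text{copy}_{x_2}$-object (if $i=1$) or to place $p_i$ of the $t^\text{copy}_{x_1}$-object (otherwise) --- optionally followed by a single firing of $t^\text{stop}_\text{tran}$, after which nothing is enabled.

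For the ``if'' direction I would assume $m_1'=m_1-m_1(p_1)\delta_{p_1}$ and $m_2'=m_2+m_1(p_1)\delta_{p_6}$ and build the run explicitly: from $M_0$, fire in any order exactly $m_1(p_i)$ matched pairs $(t^\text{rem}_i,t^\text{add}_i)$ for each place $p_i$ of $N$. Each $t^\text{rem}_i$ step is enabled because the residual $t^\text{ready}_{x_1}$-object still holds a $p_i$ token and $t^\text{run}_\text{tran}$ is marked, and each $t^\text{add}_i$ step is enabled by (I2). After all these pairs the $t^\text{ready}_{x_1}$-object is empty, the $t^\text{copy}_{x_1}$-object is $\sum_{i\neq 1}m_1(p_i)\delta_{p_i}=m_1'$, the $t^\text{copy}_{x_2}$-object is $m_2+m_1(p_1)\delta_{p_6}=m_2'$, $\text{trash}$ still carries $m_3$, and $t^\text{run}_\text{tran}$ still carries its $\blacksquare$. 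A final firing of $t^\text{stop}_\text{tran}$ merges the empty $t^\text{ready}_{x_1}$-object into $\text{trash}$ (leaving it $m_3$), consumes the $t^\text{run}_\text{tran}$ $\blacksquare$, and produces two $\blacksquare$ on $t^\text{tran}_\text{done}$, yielding precisely $M_1$.

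For the ``only if'' direction I would take any run from $M_0$ reaching $M_1$. Since $M_1$ has two nested tokens on $t^\text{tran}_\text{done}$ while $M_0$ has none, and $t^\text{stop}_\text{tran}$ is the only transition producing there, (I2) forces the run to fire $t^\text{stop}_\text{tran}$ exactly once, and --- as nothing is enabled afterwards --- as its last step; so the run is a sequence of matched pairs followed by one $t^\text{stop}_\text{tran}$. Writing $a_i$ for the number of $(t^\text{rem}_i,t^\text{add}_i)$ pairs, (I1) and the per-pair effect give, just before $t^\text{stop}_\text{tran}$, a $t^\text{ready}_{x_1}$-object equal to $m_1-\sum_i a_i\delta_{p_i}$ (nonnegative, so $a_i\le m_1(p_i)$), a $t^\text{copy}_{x_1}$-object equal to $\sum_{i\neq 1}a_i\delta_{p_i}$, and a $t^\text{copy}_{x_2}$-object equal to $m_2+a_1\delta_{p_6}$. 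Firing $t^\text{stop}_\text{tran}$ turns the $\text{trash}$-object $m_3$ into $m_3+\bigl(m_1-\sum_i a_i\delta_{p_i}\bigr)$; since $M_1$ has $\text{trash}$-object $m_3$, this forces $\sum_i a_i\delta_{p_i}=m_1$, i.e.\ $a_i=m_1(p_i)$ for every place $p_i$. Substituting back shows the $t^\text{copy}_{x_1}$- and $t^\text{copy}_{x_2}$-objects of $M_1$ equal $m_1-m_1(p_1)\delta_{p_1}$ and $m_2+m_1(p_1)\delta_{p_6}$, as required.

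The hard part is making (I1)--(I2) airtight, since that is precisely what pins down the ``only if'' direction: one must exclude runs that duplicate or split a nested $N$-object, runs that interleave several gadgets so as to route a transferred token to the wrong $t^\text{copy}$ place, and runs reaching $M_1$ in which leftover $t^\text{ready}_{x_1}$-tokens bypass the $\text{trash}$ merge. The first is killed by the fact that every $N$-place is accessed only through bidirectional arcs (so objects stay singletons); the second by the $\blacksquare$-token flow, which makes each $t^\text{rem}_i$ enable only its own $t^\text{add}_i$; the third by the fact that $t^\text{stop}_\text{tran}$ is enabled only between complete pairs and necessarily absorbs the entire $t^\text{ready}_{x_1}$-object into $\text{trash}$. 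Once these are in place, everything else is routine multiset arithmetic.
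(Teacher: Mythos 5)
Your proposal is correct and follows essentially the same route as the paper's proof: your invariants (I1)--(I2) are the paper's enumerated run-structure facts, the decomposition into matched $(t^\text{rem}_i,t^\text{add}_i)$ cycles followed by a single $t^\text{stop}_\text{tran}$ is the same, and the key step --- that the \textit{trash} object staying at $m_3$ forces the $t^\text{ready}_{x_1}$ object to be empty at the stop, hence exactly $m_1(p_i)$ transfer steps per place --- is the paper's argument, which you merely make slightly more explicit via the counters $a_i$ and by spelling out both directions.
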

\begin{proof}
We show that if a configuration $M_1$ of the form $\tup{t^\text{copy}_{x_1},m'_1}+\tup{t^\text{copy}_{x_2},m'_2}+\tup{\text{trash},m_3}+2\tup{t^\text{tran}_\text{done},\varepsilon}$ is reachable than $m'_1$ and $m'_2$ are as in the statement. The proof of the opposite direction is analogous.

It is easy to see by induction, on the run length, that the following hold along all runs of $\os$ starting from $M_0$. For $i\in\{1,\dots,6\}$:
\begin{enumerate*}
\item on $M_0$,
$t^\text{stop}_\text{tran}$ is enabled, while $t^\text{rem}_i$ is enabled if and only if the object in $t^\text{ready}_{x_1}$ has one token in $p_i$,\label{item:initialEnalbedToEos}
\item each run from $M_0$ reaching $M_1$ is of the form 
$M_0\rightarrow^{t^\text{rem}_{i_1}}
M_1'\rightarrow^{t^\text{add}_{i_1}}
M_1''\rightarrow^{t^\text{rem}_{i_2}}
\dots\rightarrow^{t^\text{add}_{i_{n}}}
M_n''\rightarrow^{t^{\text{stop}}_\text{tran}}M_1$
for some $n\in\mathbb{N}$,
\item each place of $\os'$ contains at most one token, except for $t^\text{tran}_\text{done}$ that is either empty or contains two $\blacksquare$ tokens,
\item the places $t^\text{copy}_{x_1}$ and $t^\text{copy}_{x_2}$ contain exactly one object, and
\item the place $t^\text{ready}_{x_1}$ contains exactly one object token as long as $t^\text{stop}_\text{tran}$ does not fire.
\end{enumerate*}

We define a \textit{transfer step} as the movement of either
\begin{enumerate*}
\item one token from $p_1$ in the object in $t^\text{ready}_{x_1}$ to $p_6$ in the object in $t^\text{copy}_{x_2}$ or 
\item for $i\neq 1$, one token from $p_i$ in the object in $t^\text{ready}_{x_1}$ to $p_i$ in the object in $t^\text{copy}_{x_1}$.
\end{enumerate*}
Each iteration of the cycle $M''_{k}\rightarrow ^{t^\text{rem}_{i_{k+1}}}M'_{k+1}\rightarrow ^{t^\text{add}_{i_{k+1}}}M''_{k+1}$ performs a transfer step. Each cycle starting with $t^{\text{rem}}_i$ can fire at most $m_1(p_i)$ times.

By cEOS semantics, since both $M_0$ and $M_1$ place the same object in \textit{trash}, with internal marking $m_3$, the configuration $M_1$ is reached only if $t^\text{stop}_\text{tran}$ fires when the object in $t^{\text{ready}}_{x_1}$ is empty. In this case, the cycles have fired the maximal number of times, that is, for each $i\in\{1,\dots,6\}$, the maximum number $m_1(p_i)$ of transfer steps has been fired. Thus $M_1=\tup{t^\text{copy}_{x_1},m'_1}+\tup{t^\text{copy}_{x_2},m'_2}+\tup{\text{trash},m_3}+2\tup{t^\text{tran}_\text{done},\varepsilon}$ is reachable. 

\end{proof}

Lemma~\ref{lem:imperfectStep} shows that, in case that the internal marking of the token in \textit{trash} gets modified, the gadgets $\text{transfer}_{p_i}$ simulate transfers up to some actual lossiness.

\begin{restatable}{lemma}{imperfectStep}\label{lem:imperfectStep}
    With the same conditions as in the previous lemma, the configuration $M_1=\tup{t^\text{copy}_{x_1},m'_1}+\tup{t^\text{copy}_{x_2},m'_2}+\tup{\text{trash},m'_3}+2\tup{t^\text{tran}_\text{done},\varepsilon}$ for some $m_3'>m_3$ is reachable in $\os'$ from $M_0$ if and only if $m_1'< m_1-m_1(p_1)\delta_{p_1}$ and $m_2\leq m_2'< m_2+m_1(p_1)\delta_{p_6}$.    
\end{restatable}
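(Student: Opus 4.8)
The plan is to mirror the proof of Lemma~\ref{lem:perfectStep}, reusing its structural analysis of the runs of $\os'$ verbatim and changing only the bookkeeping at the closing $t^\text{stop}_\text{tran}$ step, which now absorbs a non-empty residue into $\text{trash}$. Recall from that proof that every run of $\os'$ from $M_0$ that marks $t^\text{tran}_\text{done}$ has the form $M_0\rightarrow^{t^\text{rem}_{i_1}}M_1'\rightarrow^{t^\text{add}_{i_1}}M_1''\rightarrow\cdots\rightarrow^{t^\text{add}_{i_n}}M_n''\rightarrow^{t^\text{stop}_\text{tran}}M_1$; each $t^\text{rem}_i/t^\text{add}_i$ pair performs one transfer step (from $p_1$ of the object in $t^\text{ready}_{x_1}$ to $p_6$ of the object in $t^\text{copy}_{x_2}$ when $i=1$, and from $p_i$ to $p_i$ of the object in $t^\text{copy}_{x_1}$ when $i\neq1$); the $p_i$-cycle fires at most $m_1(p_i)$ times; $t^\text{copy}_{x_1}$ and $t^\text{copy}_{x_2}$ carry exactly one object throughout; and $t^\text{ready}_{x_1}$ carries exactly one object until $t^\text{stop}_\text{tran}$ fires. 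So a terminating run is determined, up to interleaving, by the numbers $0\le k_i\le m_1(p_i)$ of transfer steps taken out of each $p_i$: just before $t^\text{stop}_\text{tran}$, the object in $t^\text{ready}_{x_1}$ carries $\ell:=\sum_i(m_1(p_i)-k_i)\,\delta_{p_i}$, the object in $t^\text{copy}_{x_1}$ carries $\sum_{i\neq1}k_i\,\delta_{p_i}$, and the object in $t^\text{copy}_{x_2}$ carries $m_2+k_1\,\delta_{p_6}$.

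For the ``only if'' direction: by Def.~\ref{def:bussy14_enable} and conservativity, $t^\text{stop}_\text{tran}$ consumes the $N$-object in $t^\text{ready}_{x_1}$ together with the $N$-object in $\text{trash}$, merges them, and re-deposits a single $N$-object in $\text{trash}$ carrying $m_3+\ell$; and $\text{trash}$ is modified by no other transition of $\os'$. Hence any reachable $M_1$ with $\tup{\text{trash},m_3'}$ has $m_3'=m_3+\ell$, so the hypothesis $m_3'>m_3$ forces $\ell\neq\emptyset$, i.e.\ some transfer step was skipped. With the identities above and $k_i\le m_1(p_i)$ this yields $m_1'=\sum_{i\neq1}k_i\delta_{p_i}\sqleq m_1-m_1(p_1)\delta_{p_1}$ and $m_2\sqleq m_2'=m_2+k_1\delta_{p_6}\sqleq m_2+m_1(p_1)\delta_{p_6}$, and $\ell\neq\emptyset$ excludes the ``perfect'' pair of Lemma~\ref{lem:perfectStep} — which is what the strict signs in the statement express. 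For the ``if'' direction, given $m_1',m_2'$ in that range I would put $k_i:=m_1'(p_i)$ for $i\neq1$ and $k_1:=m_2'(p_6)-m_2(p_6)$, check $0\le k_i\le m_1(p_i)$ from the hypotheses, observe that they force $\ell\neq\emptyset$, and fire each $p_i$-cycle $k_i$ times followed by $t^\text{stop}_\text{tran}$; this reaches the desired $M_1$ with $\tup{\text{trash},m_3+\ell}$ and $m_3+\ell>m_3$.

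The point I expect to require the most care is the merge at $t^\text{stop}_\text{tran}$: one must verify, from the cEOS firing rule, that the whole residual internal marking $\ell$ of the $t^\text{ready}_{x_1}$-object is absorbed into the $\text{trash}$-object — it cannot simply vanish, since the transition consumes an $N$-object and conservativity forces it to produce one — and, conversely, that $\text{trash}$ can grow in no other way. This is exactly where the ``up to lossiness'' slack of the simulation is localized: the skipped tokens are parked in $\text{trash}$ rather than destroyed, and a larger $\text{trash}$-marking does not obstruct covering because $\leq_f$ permits extra tokens in the inner markings of nested tokens. Everything else is the same routine multiset arithmetic already performed in the proof of Lemma~\ref{lem:perfectStep}.
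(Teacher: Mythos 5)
Your proposal is correct and follows essentially the same route as the paper: it reuses the run-structure analysis from Lemma~\ref{lem:perfectStep} and concludes from $m_3'>m_3$ that $t^\text{stop}_\text{tran}$ must have fired with some transfer step still pending, yielding the stated bounds on $m_1'$ and $m_2'$. You merely make explicit two points the paper leaves implicit — that by the cEOS firing rule the residual internal marking of the $t^\text{ready}_{x_1}$-object is necessarily merged into the single $N$-object redeposited in $\text{trash}$ (so $m_3'=m_3+\ell$), and the converse construction choosing the counts $k_i$ — which is consistent with, and if anything slightly more careful than, the paper's argument.
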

\begin{proof}
By inspecting the previous proof, since $m_3'>m_3$, the configuration $M_1$ can be reached only if $t^{\text{stop}}_\text{tran}$ fires when some transfer step is still possible that is, there is at least one $i\in\{1,\dots,6\}$ such that the transfer step as fired strictly less than $m_1(p_i)$ times. In this case, instead of 
$m_1'= m_1-m_1(p_1)\delta_{p_1}$ and 
$m_2' = m_2+m_1(p_1)\delta_{p_6}$, we obtain 
$m_1'< m_1-m_1(p_1)\delta_{p_1}$ and 
$m_2\leq m_2'< m_2+m_1(p_1)\delta_{p_6}$. Since the run modifies the object in $t^{\text{copy}}_{x_2}$ only by adding internal tokens and initially the object contains $m_2$, we obtain $m_2\leq m_2'$.
\end{proof}

 With these lemmas at hand, it is now easy to see that the runs of the cEOS in Fig.~\ref{fig:toEOS} simulate up to lossiness the firing of the transition in Fig.~\ref{fig:selectiveTransfer}. Specifically, given a \rnPN configuration $M=\fmset{m_1,\dots,m_\ell}$ and letting $M'=\sum_{i=1}^\ell\tup{\text{sim},m_i}$, we start from the initial configuration $M'+\tup{\text{selectTran},\varepsilon}+\tup{\text{trash},\varepsilon}$ that encodes $M$ into $\os$.
A first array of transitions and places select three objects and move them to the places $t^\text{ready}_{x_0}$, $t^\text{ready}_{x_1}$, and $t^\text{copy}_{x_2}$, respectively. At the same time, the synchronization structure on $t^{\text{select}}_{x_0}$, $t^{\text{select}}_{x_1}$, and $t^{\text{select}}_{x_2}$ apply the effect\footnote{I.e., to move one token from $p_2$ to $p_4$.} of the $x_0$ variable in Fig.~\ref{fig:selectiveTransferPre} as well as checking the non-emptiness of the place $p_3$ in the object to be moved to $t^\text{ready}_{x_1}$ and of the place $p_4$ in the object to be moved to $t^\text{ready}_{x_2}$. Thus, the selected objects encode a mode that enables the \rnPN transition $t$. 

A further handling of $\blacksquare$ tokens, by firing the transition $t^\textit{enabling}$, enables the transition $t^\text{move}_{x_0}$ and the \textit{transfer gadgets}. The former puts the first selected, now updated, object back to \textit{sim}. The latter, by the previous lemmas, perform lossy transfers. Overall, after firing $t^\text{move}_{x_1}$ and $t^\text{move}_{x_2}$, three objects are moved back to \textit{sim}, encoding with lossiness the effect of the transfers required by $x_1$ and $x_2$ in Fig.~\ref{fig:selectiveTransferPre}. A final transition $t^\text{done}$ rearranges the $\blacksquare$ objects so as to make the net ready for the simulation of a new \rnPN transition firing. Guided by this insight, we obtain the following theorem.

\begin{theorem}
    There is a polynomial reduction from \rnPN coverability to cEOS coverability.
\end{theorem}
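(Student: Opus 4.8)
The plan is to assemble the target cEOS $\os$ by composing the construction of Section~\ref{Sec:lowerbound}, which already simulates every standard \rnPN transition faithfully, with the gadget of Fig.~\ref{fig:toEOS} for each special transition, and then to argue that coverability is preserved because the only new discrepancy --- the lossiness of the transfer gadgets --- is harmless for coverability. Concretely, given a \rnPN $N'=(P,T,F,G)$ I would first, as in Section~\ref{sec:toEOS}, close $N'$ under the auxiliary transitions $\text{add}(p),\text{check}(p),\text{rem}(p)$ for every $p\in P$; this is a linear-size change that does not affect coverability of the original net. The object nets of $\os$ are $N=(P,T,F)$ (now including the auxiliaries) and $\blacksquare$; the system net consists of the block of Fig.~\ref{fig:lowerboundnuPNtoEOSblock} for each standard transition of $N'$ and the block of Fig.~\ref{fig:toEOS} for each special transition, with the synchronization structure described there. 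A \rnPN configuration $M=\fmset{m_1,\dots,m_\ell}$ is encoded, as in Def.~\ref{def:lowerBoundEncode} extended with a trash object, by $\overline M=\sum_{i=1}^{\ell}\tup{\text{sim},m_i}+\tup{\text{selectTran},\varepsilon}+\tup{\text{trash},\varepsilon}$, and the target $\tau$ is translated to $\overline\tau=\sum_j\tup{\text{sim},\tau_j}$, carrying no control tokens so that covering it only constrains the contents of $\text{sim}$.

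Call a run of $\os$ \emph{clean} if no intermediate configuration marks $\text{selectTran}$ or lies in the middle of a transfer gadget, i.e.\ if it decomposes into complete simulation rounds. The heart of the correctness argument is the claim that, for \rnPN configurations $M,M'$, there is a clean run $\overline M\rightarrow^{*}\overline{M'}$ of positive length in $\os$ iff either $M\rightarrow^{t}M'$ for some standard transition $t$ of $N'$, or there is a special transition $t$ and a configuration $\hat M$ with $M\rightarrow^{t}\hat M$ and $M'\le \hat M$ (the case $M'=\hat M$ being allowed). For standard transitions this is exactly the lemma of Section~\ref{Sec:lowerbound}; for special transitions it follows by combining the selection/enabling phase --- which applies the $x_0$-effect and tests the non-emptiness guards, thereby fixing a mode that enables $t$ --- with Lemma~\ref{lem:perfectStep} and Lemma~\ref{lem:imperfectStep}, which together characterize precisely which post-configurations a transfer gadget can produce: the perfect successor when the trash object is left untouched, and any proper sub-configuration of it otherwise.

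With this claim the two directions of the reduction go through. If $N'$ covers $\tau$ from $\iota$, take a witnessing run $\iota\rightarrow^{*}M$ with $M\ge\tau$ and simulate it in $\os$ round by round, always choosing the perfect simulation of special transitions (Lemma~\ref{lem:perfectStep}, keeping the trash object empty); this yields $\overline\iota\rightarrow^{*}\overline M+(\text{control tokens})\ge_f\overline\tau$. Conversely, suppose $\os$ reaches some $\mu\ge_f\overline\tau$ from $\overline\iota$. Since $\le_f$ never moves nested tokens across system-net places and $\overline\tau$ places all its objects on $\text{sim}$, the objects of $\mu$ sitting on $\text{sim}$ already cover $\tau$. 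If $\mu$ is not clean I would first drive $\os$ to the next clean configuration $\mu'$: this is always possible (a started selection phase, once completed, or if already past, leaves only control-token bookkeeping and transfer gadgets, and a gadget can always be terminated via $t^{\text{stop}}_\text{tran}$), and doing so only moves objects back onto $\text{sim}$, so $\mu'$ still covers $\tau$. Reading off the clean configurations along $\overline\iota\rightarrow^{*}\mu'$ and applying the claim gives a \emph{lossy} \rnPN run from $\iota$ to some $M_r\ge\tau$; a standard monotonicity argument for \rnPN (each transition is monotone w.r.t.\ $\le$ --- the same mode and the same transfer amounts apply to any larger configuration --- so losses can be pushed to the front of the run) then turns it into a perfect run $\iota\rightarrow^{*}M\ge M_r\ge\tau$. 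Finally, $\os$ together with $\overline\iota,\overline\tau$ is computable in polynomial time, since each simulation block has size linear in the arity of the simulated transition and the number of places of $N'$.

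The step I expect to be the main obstacle is the backward direction, and specifically the bookkeeping needed to (i) extract a clean run from an arbitrary covering run without losing the coverage of $\tau$ --- which is where the asymmetry of the encoding (target tokens only on $\text{sim}$) and the fact that completing a simulation round can only add objects to $\text{sim}$ are essential --- and (ii) verify carefully, across all the control places of Fig.~\ref{fig:toEOS}, that a clean round can indeed always be completed from any mid-round configuration, so that the ``drive to the next clean configuration'' step is legitimate. The monotonicity of \rnPN and the characterization of the gadgets are comparatively routine once Lemmas~\ref{lem:perfectStep}--\ref{lem:imperfectStep} are in place.
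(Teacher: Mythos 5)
Your construction, encoding, and both key lemmas are the same as the paper's: standard transitions via the blocks of Sec.~\ref{Sec:lowerbound}, special transitions via Fig.~\ref{fig:toEOS}, and Lemmas~\ref{lem:perfectStep}--\ref{lem:imperfectStep} to characterize the transfer gadgets. Where you genuinely diverge is the backward direction. The paper encodes the target as $\sum_i\tup{\text{sim},m_i'}+\tup{\text{selectTran},\varepsilon}+\tup{\text{trash},\varepsilon}$, so any covering configuration necessarily carries a token on $\text{selectTran}$ and hence already sits at a round boundary; it then classifies covering runs as perfect or imperfect and upgrades an imperfect run to a perfect one entirely inside the cEOS, replacing each lossy gadget sub-run by one that executes all transfer steps and using WSTS monotonicity of cEOSs to re-fire the remainder of the run. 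You instead drop the control places from the target, drive the run to the next clean configuration, read off a lossy \rnPN run, and discharge the losses by monotonicity of \rnPN. Both routes work and cost essentially the same; the paper's choice of target encoding is what lets it avoid your mid-round analysis altogether, and you could adopt it for free.

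One step of yours fails as stated, exactly at the point you flagged: ``a started selection phase, once completed'' is not always completable. The transitions $t^{\text{select}}_{x_1}$ and $t^{\text{select}}_{x_2}$ synchronize with $\text{check}$ (and $\text{rem}$/$\text{add}$) transitions of the selected objects, so if the covering run stops mid-selection and no object remaining in $\text{sim}$ satisfies the required guard, the round deadlocks and no clean configuration is reachable from there. The repair is easy and should be made explicit: before $t^{\text{enabling}}$ fires, no object has yet been returned to $\text{sim}$ in the current round (the $t^{\text{move}}$ transitions and the transfer gadgets are enabled only afterwards), so the objects on $\text{sim}$ in the stuck configuration form a sub-multiset of those at the last clean configuration, and you may simply truncate the incomplete round while preserving coverage of $\tau$; once $t^{\text{enabling}}$ has fired, completion is indeed always possible (terminate each gadget via $t^{\text{stop}}_{\text{tran}}$, then fire the $t^{\text{move}}$ transitions and $t^{\text{done}}$). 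With this case split, your argument goes through.
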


\begin{proof}
Given a \rnPN coverability instance over the \rnPN $N'$, with initial configuration $M_0=\fmset{m_1,\dots,m_h}$, and target configuration $M_1=\fmset{m_1',\dots,m_k'}$, we can build a cEOS $\os$ by applying the construction in~\cite{OurArxiv} to each standard transition of $N'$ and the construction in Fig.~\ref{fig:toEOS} to each special transition of $N'$. We can also encode $M_0$ into the configuration $\overline{M_0}=\sum_{i=1}^h\tup{\text{sim},m_i}+\tup{\text{selectTran},\varepsilon}+\tup{\text{trash},\varepsilon}$ of $\os$. 

We say that a finite run of $\os$ from $\overline{M_0}$ is \textit{perfect} if the last configuration of the run  places a single $\blacksquare$ object on \textit{selectTran} and a single, empty object on \textit{trash}. Instead, we call a finite run of $\os$ from $\overline{M_0}$ \textit{imperfect} if the last configuration of the run places a single $\blacksquare$ object on \textit{selectTran} and a single, but non-empty object on \textit{trash}.

Let $\overline{M_1}=\sum_{i=1}^k\tup{\text{sim},m_i'}+\tup{\text{selectTran},\varepsilon}+\tup{\text{trash},\varepsilon}$. By Lemma~$1$ in~\cite{OurArxiv} and Lemma~\ref{lem:perfectStep}, $N$ covers $M_1$ from $M_0$ if and only if there is a perfect run of $\os$ that covers $\overline{M_1}$ from $\overline{M_0}$. 

However, if a run of $\os$ covers $\overline{M_1}$ from $\overline{M_0}$, then the run is either perfect or imperfect.
Nevertheless, if the run is imperfect, then there is also a perfect run of $\os$ that covers $\overline{M_1}$. In fact, we can inductively (on the simulation steps):
\begin{enumerate*}
\item substitute the sub-run that simulates the special transition, with a sub-run that executes all possible transfer steps and
\item since cEOSs are WSTSs~\cite{kohler-busmeier_survey_2014}, extends this new sub-run by firing the remaining transition in the complete run. 
\end{enumerate*}
Overall, we still obtain a marking $\overline{M_1}'\geq \overline{M_1}$. Summarizing, $N$ covers $M_1$ from $M_0$ if and only if $\os$ covers $\overline{M_1}$ from $\overline{M_0}$.
\end{proof}

\section{From cEOS to \texorpdfstring{\GnPN}{c-nuPN}}\label{sec:fromEOS}
In this section, we show that for any given EOS \( \os= (\hat{N},\N,d,\Theta)\), there is a polynomial-time constructible \GnPN \(W=(P,T,F,G)\) that simulates \(\os\). We assume without loss of generality that, for each transition \(\hat{t} \in \hat{T}\), \(\hat{t}\) is involved in exactly one event.\footnote{If \(\hat{t}\) is involved in multiple events $(\theta_i)_{i=1}^n$, we substitute \(t\) in \(\theta_i\) with a copy \(t_i\) of \(t\), for each $i\in \{1,\cdots,n\}$.}
In what follows, we show only how to encode an arbitrary synchronous event $e=\tup{\tau,\theta}$ according to the semantics in Rem.~\ref{rem:EOSsemantics}, i.e., by serializing merging, internal firing, and distribution phases. In fact, object and system autonomous events are a restricted forms of synchronous events. We assume that $\tau$ involves only one system place type $N\in\N$.\footnote{The construction for events involving several types is analogous, requiring only the serialization of a copy of the given construction for each type.} The \GnPN $W$ combines several modules:
\begin{enumerate*}
    \item \textbf{\(\hat{p}\)-Block} modules, used to encode the configurations of $\os$;
    \item \textbf{$N$-merged} and \textbf{$N$-updated} modules, auxiliary places to store the result of the merging phase and conduct the internal firing phase of the event $e$;
    \item \textbf{$e$-merging} modules, to capture the dynamics of the merging stage;
    \item \textbf{$e$-firing module}, to capture the dynamics of the internal firing stage;
    \item \textbf{$e$-distributing module}, to capture the dynamics of the distribution stage.
\end{enumerate*} Overall these modules can be built in polynomial time and their consecutive firing allows $W$ to simulate the firings of $e$ in $\os$. We enforce such consecutive firings by sequentializing the modules using the variable \(x_0\) and a set of places. Thus, we have the next Theorem, for which we provide a proof sketch by describing the modules above, through the example in Fig.~\ref{fig:ceosex}.
After which we will formally state the encoding of an EOS in \cref{subsec:encoding} and the proof of the construction in \cref{subsec:simulation}.
\begin{theorem}
    Given a cEOS \os = \((\hat{N},\N,d,\Theta)\), there is a polynomial time constructible \GnPN \(W = (P,T,F,G)\) that simulates \os.
\end{theorem}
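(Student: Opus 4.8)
The plan is to represent each nested token $\tup{\hat p, M}$ of $\os$ by a single \GnPN tuple and to simulate one event firing by a short, control-gated chain of \GnPN transitions that serialises the three phases of Rem.~\ref{rem:EOSsemantics}: merging the consumed objects, firing the internal transitions $\theta$, and redistributing the result. Concretely, the place set $P$ of $W$ will contain an \emph{indicator place} $\overline{\hat p}$ for each system-net place $\hat p$, the pairwise-disjoint place sets $\bigcup_{N\in\N}P_N$ of all object nets, and a bounded set of \emph{control places} acting as a program counter. A nested token $\tup{\hat p,M}$ with $d(\hat p)=N$ is encoded by the tuple that carries a $1$ on $\overline{\hat p}$, the value $M(q)$ on each $q\in P_N$, and $0$ elsewhere; a nested marking $\sum_i\tup{\hat p_i,M_i}$ is encoded by the multiset of the corresponding tuples plus one control tuple on the start place \texttt{ready}, and the target is encoded the same way. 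All-zero tuples left over during a simulation are harmless for coverability since they sit above $\varepsilon$ in the \GnPN order.

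Next I would build, for each event $e=\tup{\hat\tau,\theta}$ (assuming w.l.o.g.\ one event per $\hat\tau$ and a single object type $N$ consumed/produced by $\hat\tau$; autonomous events are the special cases $\hat\tau=id_{\hat p}$ resp.\ $\theta(N)=\emptyset$), a module enabled only when the control tuple is on \texttt{ready} and the standard pre-arcs witness that $\hat\tau$ is system-enabled. The module has three sub-modules. The \emph{merging} sub-module uses one variable per object in $\prefun_{\hat N}(\hat\tau)$; its pre-arcs consume the indicators of those tuples, and its channels (transfers with renaming in the sense of Def.~\ref{def:CNUPN}) move the $P_N$-content of all of them into one distinguished tuple, leaving the others empty; this tuple is parked on a control place. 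The \emph{firing} sub-module then uses ordinary arcs to subtract $\prefun_N(\theta(N))=\sum_i\prefun_N(t_i)$ and add $\postfun_N(\theta(N))$ on the distinguished tuple. The \emph{distribution} sub-module has one variable per token in $\postfun_{\hat N}(\hat\tau)$ --- $\nu$ variables for the objects exceeding the number of consumed ones --- whose channels non-deterministically split the $P_N$-content of the distinguished tuple among these output tuples while the post-arcs set the matching indicators; finally the control tuple is returned to \texttt{ready}. As in Sec.~\ref{Sec:lowerbound} and Sec.~\ref{sec:toEOS}, the modules of all events are chained through the single variable $x_0$ and the control places, so at most one event is simulated at a time.

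For correctness I would argue both directions against the \GnPN semantics of Def.~\ref{def:stepCNUPN}. Forward: a step $\mu\xrightarrow{(e,\lambda,\rho)}\mu'$ of $\os$ is matched by running the module of $e$, binding the merging variables to the tuples named by $\lambda$, using the channels to reproduce the $\Pi^2_N$-arithmetic of Def.~\ref{def:bussy14_enable}, and choosing the split in the distribution sub-module to match the non-deterministic $\rho$. Backward: the control places and $x_0$ force any run between two \texttt{ready}-configurations to run the three sub-modules of one event in order with no residual effect, hence to realise exactly one EOS step; restricting to \texttt{ready}-configurations and observing that the \GnPN coverability order on encoded configurations coincides with $\leq_f$ (bounded control tokens and all-zero tuples are absorbed), we conclude that $W$ covers $\overline{\mu_f}$ from $\overline{\mu_0}$ iff $\os$ covers $\mu_f$ from $\mu_0$. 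Since every module is linear in $|\os|$ (arc multiplicities such as $\prefun_{\hat N}(\hat\tau)(\hat p)$ are written in binary), $W$ is constructible in polynomial time.

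The main obstacle I expect is the distribution sub-module. It has to reproduce exactly the non-deterministic re-packaging of the updated merged marking into the produced objects while respecting that \GnPN channels are \emph{boolean} transfers with renaming: each unit of content must be routed to a single output tuple --- which is in fact what EOS distribution does, but it must be encoded within the labelling constraints of Def.~\ref{def:CNUPN} --- and it must handle the case where $\hat\tau$ produces more objects than it consumes, which forces the surplus output tuples to be created fresh via $\nu$ variables together with a short argument that freshness introduces no spurious runs. This is precisely where \emph{conservativity} enters: because $d(\supp(\prefun_{\hat N}(\hat\tau)))\subseteq d(\supp(\postfun_{\hat N}(\hat\tau)))$, the updated $N$-content always has at least one output place of type $N$ available, so the distribution sub-module can never get stuck holding content it cannot place. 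A secondary, more routine point is ruling out that interleavings of distinct per-event modules simulate an event absent from $\os$; this is taken care of by the control discipline, and a careful induction on run length will make it precise.
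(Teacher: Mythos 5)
Your overall architecture is essentially the paper's (tuples encoding nested tokens, a control tuple serializing a merge/fire/distribute pipeline per event, channels for merging, conservativity to guarantee a recipient for the updated content), but there is a genuine gap at exactly the point you flag as the main obstacle, and as written the distribution sub-module cannot be realized. In a \GnPN, $G_t$ is defined on $\X(t)\times\X(t)$ and, by Def.~\ref{def:CNUPN}, for every source variable $x_i$ and place $p$ either the content of $p$ in the tuple bound to $x_i$ stays where it is or it is sent \emph{in its entirety} to a single place of a single target tuple. So a transition whose ``channels non-deterministically split the $P_N$-content of the distinguished tuple among these output tuples'' does not exist: channels cannot divide the tokens of one place between two target tuples (your gloss ``each unit of content must be routed to a single output tuple'' understates the constraint, which is per place, not per token), and they cannot target $\nu$-bound tuples at all, since $G_t$ does not range over $\Upsilon(t)$. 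EOS distribution, however, must allow e.g.\ three tokens on an object-net place to be split $1{+}2$ between two produced objects, so without a further mechanism the forward direction ``choose the split to match the non-deterministic $\rho$'' cannot be carried out.

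The paper closes this hole by serializing the split with the same control machinery you already have: a first transition creates the fresh identifier tuples (the only use of $\nu$), then for each of the first $n-1$ produced objects a small loop of \emph{ordinary} (channel-free) transitions moves tokens one at a time from the updated merged tuple into the tuple of the chosen fresh identifier, the loop exiting after a non-deterministically chosen number of moves and only then stamping that tuple with its identifier/indicator place; the last produced object receives all remaining content by a single whole-place channel transfer (conservativity guarantees this last object exists). Every complete run of this loop realizes some legal $\rho$, and every $\rho$ is realizable, so the step-for-step correspondence goes through. Your other ingredients match the paper's proof or are harmless variants: indicator places instead of per-$\hat p$ copies of the object places, a single transition with summed arc weights for the internal firing phase (which captures the step condition $\Pi^2_N(\lambda)\geq\prefun_N(\theta(N))$ at least as directly as the paper's sequentially fired copies routed from merged to updated places), control gating via $x_0/x_{cs}$, leftover all-zero tuples being irrelevant for coverability, and $\nu$ only for surplus objects. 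The missing piece is solely the splitting mechanism, but it is the crux of this direction of the reduction.
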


\begin{figure}[t]
    \begin{subfigure}{0.45\linewidth}
        \centering
        \scalebox{0.8}{

\begin{tikzpicture}
    \begin{scope}
    \node[place,label={[name=p1obj1Lab]left:\scriptsize$p_1$}]at(0,0)(p1obj1){};
    \node at(p1obj1) {\scriptsize$\bullet\bullet$};
    \node[transhor,label=left:\scriptsize$t$]at(0,-1)(tobj1){};
    \node[place,label=left:\scriptsize$p_2$]at(0,-2)(p2obj1){};
    \node at(p2obj1) {\scriptsize$\bullet$};
    \draw[->] (p1obj1) -- (tobj1);
    \draw[<-] (p2obj1) -- (tobj1);

    \node[draw,dashed,fit={(p1obj1)(tobj1)(p2obj1)(p1obj1Lab)}](obj1){};
    \end{scope}
   \begin{scope}[xshift=1.7cm]
    \node[place,label={ left:\scriptsize$\hat{p}_1$}]at(0,0)(p1sys){};
    \node[xshift=-.1cm] (bul1)at(p1sys) {\scriptsize$\bullet$};
        \node[xshift=.1cm] (bul2)at(p1sys) {\scriptsize$\bullet$};
    \node[transhor,label=left:\scriptsize$\hat{t}\tup{2t}$]at(0,-1)(tsys){};
    \node[place,label=left:\scriptsize$\hat{p}_2$]at(0,-2)(p2sys){};

    \draw[->] (p1sys) --node[midway,left]{\scriptsize$2$} (tsys);
    \draw[<-] (p2sys) --node[midway,left]{\scriptsize$2$} (tsys);

    \end{scope}

   \begin{scope}[xshift=2*1.7cm]
    \node[place,label={[name=p1obj2lab]left:\scriptsize$\hat{p}_1$}]at(0,0)(p1obj2){};
    \node at(p1obj2) {\scriptsize$\bullet$};
    \node[transhor,label=left:\scriptsize$t$]at(0,-1)(tobj2){};
    \node[place,label=left:\scriptsize$\hat{p}_2$]at(0,-2)(p2obj2){};
    \node at(p2obj2) {\scriptsize$\bullet\bullet$};
    \draw[->] (p1obj2) -- (tobj2);
    \draw[<-] (p2obj2) -- (tobj2);

    \node[draw,dashed,fit={(p1obj2)(tobj2)(p2obj2)(p1obj2lab)}](obj2){};
    \end{scope}

    \draw[dashed] (obj1)--(bul1.center);
    \draw[dashed] (obj2)--(bul2.center);
\end{tikzpicture}
}
      \caption{A cEOS Example.}
      \label{fig:ceosex}
\end{subfigure}
\begin{subfigure}{0.45\linewidth}
        \centering
\scalebox{0.5}{\begin{tikzpicture}

\path [rectangle, fill=gray!10](2.5,-1.7) to (3.5,-1.7) to (3.5,1.5) to (2.5,1.5);
\node[label={[name=p1BlockLab]above:\scriptsize  $\hat{p}_1$-block}](p1Block)at(3,1.3){};
\node[place,label={[name=idpLab]below:\scriptsize $Id_{\hat{p}}$}](idp)at (3,0){};
\node[empty] (txt) at (3,0.1){\scriptsize \tiny $a_1$};
\node[empty] (txt) at (3,-0.1){\scriptsize \tiny $a_2$};
\node[place,label={[name=p1Lab]below:\scriptsize $p_1$}](p1)at (3,1){};
\node[empty] (txt) at (3,1.1){\scriptsize \tiny $a_1 a_1$};
\node[empty] (txt) at (3,0.9){\scriptsize \tiny $a_2$};
\node[place,label={[name=p2Lab]below:\scriptsize $p_2$}](p2)at (3,-1){};
\node[empty] (txt) at (3,-0.9){\scriptsize \tiny $a_1$};
\node[empty] (txt) at (3,-1.1){\scriptsize \tiny $a_2a_2$};

 \path [rectangle, fill=gray!10](5,-1.7) to (6,-1.7) to (6,1.5) to (5,1.5);

\node[label={[name=idpnBlockLab]above: \scriptsize $\hat{p}_2$\text{-block}}](pNBlock)at(5.5,1.3){};
\node[place,label={[name=idpnLab]below:\scriptsize $\Id_N^m$}](idpn)at (5.5,0){};

\node[place,label={[name=p1NLab]below:\scriptsize $p_1$}](p1N)at (5.5,1){};
\node[place,label={[name=p2NLab]below:\scriptsize $p_2$}](p2N)at (5.5,-1){};

\end{tikzpicture}
}     
      \caption{Config in Fig. \ref{fig:ceosex} captured by W.}
      \label{fig:flatten}
  \end{subfigure}
\caption{A cEOS and the encoding of its marking in \GnPN.}
\end{figure}

\noindent\textbf{{Module \textbf{\(\hat{p}\)-Block}}.}
For each $\hat{p}\in \hat{P}$, the module \textbf{\(\hat{p}\)-Block},  is used to encode in $W$ the objects hosted by $\hat{p}$. Specifically, \textbf{\(\hat{p}\)-Block} contains (a disjoint copy of) the places of $N$ as well as a place $\Id_{\hat{p}}$. 
The latter is used to store identifiers dedicated to each encoded object. 
A configuration $\tup{\hat{p},m}$ of $\os$ is captured by the tuple $m'$ of \textbf{\(\hat{p}\)-Block} that extends $m$ with a token on $\Id_{\hat{p}}$.
Note that, thanks to the place $\Id_{\hat{p}}$, this encoding can even witness the case $m=0$, by placing a single token on $\Id_{\hat{p}}$.

\begin{definition}
    The encoding $\hat{M}$ of a configuration $M$ of $\os$ is the configuration $\hat{M}=\K(M)+\fmset{\delta_{p^{init}}}$ of $W$ where \(\K(M)=\sum_{i=1}^\ell \fmset{\K(c_i)}\) and \(\K(c = \tup{\hat{p},m})=\sum_{p\in P_N} m(p)\delta_{p^{\hat{p}}}+\delta_{\Id_{\hat{p}}}\).
\end{definition}

\smallskip
\noindent\textbf{Modules \textbf{$N$-merged} and \textbf{$N$-updated}.}
These blocks are used to store the objects after the merging and internal firing stages and are analogous to the blocks above. 
The \textbf{$N$-merged} module and the \textbf{$N$-updated} modules contain a disjoint copy of the places of $N$
as well as the place $\Id_{N}^m$ and $\Id_{N}^u$, respectively. 
The encoding of a merged/updated tuple is similar to that in \textbf{$\hat{p}$-Block}.

\smallskip
\begin{wrapfigure}{l}{4.1cm}
        \centering
        \vspace{-20pt} 
\scalebox{0.7}{\begin{tikzpicture}

\node[place,label={[name=objenbLab]below:\scriptsize $p^{init}$}](pmerge)at (5.5,-1.25){};
\node[place,label={[name=objenbLab]above:\scriptsize $p^{merged}_{e}$}](pmerged)at (5.5,1.25){};
\path [rectangle, fill=gray!10](2.5,-1.7) to (3.5,-1.7) to (3.5,1.5) to (2.5,1.5);
\node[label={[name=p1BlockLab]above:\scriptsize  $\hat{p}_1$-block}](p1Block)at(3,1.3){};
\node[place,label={[name=idpLab]below:\scriptsize $Id_{\hat{p}}$}](idp)at (3,0){};
\node[empty] (txt) at (3,0.1){\scriptsize \tiny $a_1$};
\node[empty] (txt) at (3,-0.1){\scriptsize \tiny $a_2$};
\node[place,label={[name=p1Lab]below:\scriptsize $p_1$}](p1)at (3,1){};
\node[empty] (txt) at (3,1.1){\scriptsize \tiny $a_1 a_1$};
\node[empty] (txt) at (3,0.9){\scriptsize \tiny $a_2$};
\node[place,label={[name=p2Lab]below:\scriptsize $p_2$}](p2)at (3,-1){};
\node[empty] (txt) at (3,-0.9){\scriptsize \tiny $a_1$};
\node[empty] (txt) at (3,-1.1){\scriptsize \tiny $a_2a_2$};

\node[transvert,minimum height=12mm,minimum width=5mm] (tmerge)at (5.5,0){};
\node [rotate=-90]at(5.5,0){\tiny$\tau^{merge}_e$};

 \path [rectangle, fill=gray!10](7,-1.7) to (8,-1.7) to (8,1.5) to (7,1.5);

\node[label={[name=idpnBlockLab]above:\tiny N\text{-Merged}}](pNBlock)at(7.5,1.3){};
\node[place,label={[name=idpnLab]below:\scriptsize $\Id_N^m$}](idpn)at (7.5,0){};

\node[place,label={[name=p1NLab]below:\scriptsize $p_1$}](p1N)at (7.5,1){};
\node[place,label={[name=p2NLab]below:\scriptsize $p_2$}](p2N)at (7.5,-1){};

\draw[->](idp)--node[above,near start]{\scriptsize $x_1^{\hat{p}_1},x_2^{\hat{p}_1}$}(tmerge);
\draw[->,myDouble] (p1) -- node[above,midway,sloped] {\scriptsize $x_1^{\hat{p}_1},x_2^{\hat{p}_1}$}(tmerge);
\draw[->,triple] (p2) -- node[below,midway,sloped] {\scriptsize $x_1^{\hat{p}_1},x_2^{\hat{p}_1}$}(tmerge);

\draw[->](tmerge) -- node[right,midway] {\scriptsize $x_{cs}$}(pmerged);

\draw[->,myDouble] (tmerge) -- node[above,midway,sloped] {\scriptsize $x_1^{\hat{p}_1},x_1^{\hat{p}_1}$}(p1N);
\draw[->] (tmerge) to (idpn);
\node[empty] (txt) at (7,0.2){\scriptsize $x_1^{\hat{p}_1}$};
\draw[->,triple] (tmerge) -- node[below,midway,sloped] {\scriptsize $x_1^{\hat{p}_1},x_1^{\hat{p}_1}$}(p2N);

\draw[<-](tmerge.south) -- node[midway,right]{\scriptsize $x_{cs}$}(pmerge.north);

\end{tikzpicture}
}   \vspace{-20pt}   
      \caption{Module $e$-merging.}
      \label{fig:emerge}
      \vspace{-20pt}
\end{wrapfigure}
\noindent\textbf{{Module \textbf{\(e\)-merging}}.}
This module captures the dynamics of the merging phase. It contains the places $p^{init}$ and $p^{merged}_e$ as well as the transition $\tau^{merge}_e$. The two places behave as enable and acknowledgment for the transition. Thus, initially, $p^{init}$ is populated by a dedicated \textit{control sequence} token (moved around by the variable $x_{cs}$).
In \(\os\), for each place $\hat{p}$ from which $\tau$ consumes, there is a channel from each non-$\Id$ place of \textbf{$\hat{p}$-Block} to the corresponding place in \textbf{$N$-merged}. 
The pre-channels from $\hat{p}$ are labeled by $x_1^{\hat{p}},\dots,x_n^{\hat{p}}$, where $n$ is the the number of objects consumed by $\tau$ from $\hat{p}$ in \(\os\). The corresponding post-channel is labeled by the constant sequence $x_1^{\hat{p}},\dots,x_1^{\hat{p}}$ of length $n$. 
Moreover, to take care of the identifiers of the encoded objects consumed by \(\tau\), $\tau^{merge}_e$ consumes from $\Id_{\hat{p}}$ one token for each $x_1^{\hat{p}},\dots,x_n^{\hat{p}}$ and produces only one token $x_1^p$ in $\Id_N^m$, where $p$ is the minimum of the places from which $\tau$ consumes according to some fixed order among the system net places.

\smallskip
\begin{wrapfigure}{r}{5cm}
        \centering
        \vspace{-25pt}
\scalebox{0.7} {
\begin{tikzpicture}
\path [rectangle, fill=gray!10](8,-1.5) to (10,-1.5) to (10,1.5) to (8,1.5);
\node[label={[name=idpnBlockLab]below:\scriptsize \tiny N\text{- Merged}}](pNBlock)at(9.2,2){};
\node[place,label={[name=idpnLab]left:\scriptsize $\Id_N^m$}](idpn)at (9,0){};
\node[place,label={[name=p1NLab]left:\scriptsize $p_1$}](p1N)at (9,1){};
\node[place,label={[name=p2NLab]left:\scriptsize $p_2$}](p2N)at (9,-1){};

\node[place,label={[name=Lab]right:\scriptsize $p_e^{fin}$}](pfin)at (11,2){};

\node[transhor] (tfin)at (11,1){};
\node[empty] (txt) at (11.6,1.3){\scriptsize$\tau_{e}^{fin}$};

\node[place](psel)at (11,-0.1){};
\node[empty] (txt) at (11.7,-0.3){\scriptsize$p_{e}^{select}$};
\node[transhor,label={[name=tNLab]left:\scriptsize  $t_e$}] (t)at (11,-1){};
\node[place,label={[name=Lab]below:\scriptsize $p_e^{merged}$}](pinit)at (9,-2){};
\draw[->](t)--node[left,midway] {\scriptsize $ x_{cs}$}(psel);
\draw[->](psel)--(tfin);
\node[empty,rotate=90](txt) at (10.8,0.45){\scriptsize $2x_{cs}$};
\draw[->](tfin)--node[left,midway] {\scriptsize $x_{cs}$}(pfin);
\draw[->,myDouble] (p1N) -- node[above,midway] {\scriptsize $x_{N}$}(tfin);
\draw[->,triple] (p2N) -- node[above,near start,sloped] {\scriptsize $x_{N}$}(tfin);

\path [rectangle, fill=gray!10](12.2,-1.5) to (14,-1.5) to (14,1.5) to (12.2,1.5);
\node[label={[name=BlockLab]above:\scriptsize \tiny N\text{- Updated}}](NUBlock)at(13.2,1.3){};
\node[place,label={[name=idpnLab]right:\scriptsize $\Id_N^u$}](idpu)at (13,0){};
\node[place,label={[name=Lab]right:\scriptsize $p_1$}](p1u)at (13,1){};
\node[place,label={[name=Lab]right:\scriptsize $p_2$}](p2u)at (13,-1){};
\draw[->,myDouble] (tfin) -- node[above,near end] {\scriptsize $x_{N}$}(p1u);
\draw[->,triple] (tfin) -- node[above,near end,sloped] {\scriptsize $x_{N}$}(p2u);

\draw[->](idpn)--node[above,near start,sloped] {\scriptsize $x_{N}$}(tfin);
\draw[->](tfin)--node[above,near end,sloped] {\scriptsize $x_{N}$}(idpu);
\draw[->](p1N) --($(p1N.south)+(-1,-0.2)$) -- node[left,pos=0.5] {\scriptsize $x_{N}$}($(p1N.south)+(-1,-2.2)$) -| ($(t.south)-(0.2,0)$);
\draw[->](t)--node[above,near end] {\scriptsize $x_{N}$}(p2u);
\node[transhor] (tinit)at (11,-2){};
\node[empty] (txt) at (11,-2.3){\scriptsize$\tau_{e}^{init}$};

\node[place,label={[name=Lab]below:\scriptsize $p_{e,t}^{fire}$}](pfire)at (13,-2){};
\draw[->](pinit)--node[below]{\scriptsize $x_{cs}$}(tinit);
\draw[->](tinit)--node[below]{\scriptsize $2 x_{cs}$}(pfire);
\draw[->](pfire)--($(pfire.north)+(0,0.2)$)--($(pfire.north)+(-1.5,0.2)$)--node[right]{\scriptsize$x_{cs}$}(t);
\end{tikzpicture}
}   \vspace{-20pt}   
      \caption{Module $e$-updating.}
      \label{fig:eupdating}
      \vspace{-20pt}
\end{wrapfigure}
\noindent\textbf{Module \textbf{\(e\)-updating}.}
This module captures the dynamics of the internal firing phase applied to the \textbf{$N$-merged} modules after the merging phase, storing its result in the module \textbf{$N$-updated}. It contains the places $p_e^{select}$ as well as a set $\{p^{x}_{e,t} \mid x\in\{fire,fired\}, t \in \supp(\theta)\}$ of new places. The modules contains the transitions $\tau^{init}_e$ and $\tau^{fin}_e$ as well as a copy $t_e$ of the transitions $t\in\supp(\theta)$.
Places $p^{init}_e$, $p^{select}_e$, and $p^{fin}_e$ are used to fire $\tau^{init}_e$ and $\tau^{fin}_e$ in sequence.
The places $p^\text{fire}_{e,t}$ and $p^\text{fired}_{e,t}$ behave as enable to fire and acknowledgment of firing of $t$, respectively.
For each $t\in\supp(\theta)$, $\tau^{init}_e$ puts $\theta(t)$ tokens in $p^{fire}_{e,t}$. Thus, the transition $t_e$ in the module gets enabled $\theta(t)$ times. When firing, $t_e$ behaves as $t$, with the provision that its preconditions stem from places in \textbf{$N$-merged}, while its post-conditions are directed towards places in \textbf{$N$-updated}. The movement of tokens from \textbf{$N$-merged} to \textbf{$N$-updated} ensures that either the transitions in $\theta$ are all enabled at the start of this phase, or a deadlock is reached, i.e., the firing of a $t'\in\theta$ is not responsible for the enabling of another transition $t''\in\theta$.
Finally, all tokens remaining in \textbf{$N$-merged} are moved to the respective places in \textbf{$N$-updated} using the channels of $\tau^{fin}_e$, as well as the movement of the token from $\Id_{N}^m$ to $\Id_{N}^u$, for each $\hat{p}$ involved in the preconditions of $\tau$.

\begin{figure}[t]
    \begin{subfigure}{0.2\linewidth}
        \centering
        \scalebox{0.7}{
\begin{tikzpicture}

\node[place,label=below:\scriptsize$p^{fin}_e$]at(0,-2)(pfin){};
\node[transhor,label=left:\scriptsize$t_e^{id}$]at(0,-1)(t){};
\node[place,label=above:\scriptsize$p^{move(1)}_e$]at(-1,0)(pmove){};
\node[place,label=above:\scriptsize$p^{new}_e$]at(1,0)(pnew){};

\draw[->] (pfin)--node[midway,left]{\scriptsize$x_{cs}$} (t);
\draw[->] (t)-- node[midway,right]{\scriptsize$\nu_1\nu_2$}(pnew);
\draw[->] (t)--node[midway,left]{\scriptsize$x_{cs}$} (pmove);
    
\end{tikzpicture}
} 
      \caption{Module $e$-id-creation.}
      \label{fig:eidcreation}
    \end{subfigure}
    \begin{subfigure}{0.48\linewidth}
        \centering
\scalebox{0.7}{\begin{tikzpicture}

\path [rectangle, fill=gray!10](3.5,-1.7) to (4.5,-1.7) to (4.5,1.5) to (3.5,1.5);
\node[label={[name=BlockLab]above: \tiny N-Updated}](NBlock)at(4,1.3){};
\node[place,label={[name=idpLab]below:\scriptsize $Id_{N}^u$}](idN)at (4,1){};

\node[place,label={[name=p1Lab]below:\scriptsize $p_1$}](p1N)at (4,0){};
\node[place,label={[name=p2Lab]below:\scriptsize $p_2$}](p2N)at (4,-1){};

\node[transvert,label={[name=tLab]below:\scriptsize  $t_1$}](t1)at(5.5,0){};
\node[transvert,label={[name=tLab]below:\scriptsize  $t_2$}](t2)at(6,-1){};
 \path [rectangle, fill=gray!10](6.5,-1.7) to (7.5,-1.7) to (7.5,1.5) to (6.5,1.5);

\node[label={[name=pBlockLab]above:\tiny$\hat{p}_2$-block}](pBlock)at(7,1.3){};
\node[place,label={[name=idpnLab]left:\scriptsize $\Id_{\hat{p}_2}$}](idp)at (7,1){};

\node[place,label={[name=p1NLab]right:\scriptsize $p_1$}](p1p)at (7,0){};
\node[place,label={[name=p2NLab]right:\scriptsize $p_2$}](p2p)at (7,-1){};

\draw[->](p2N)--node[above, near start]{\scriptsize $x_N$}(t2);
\draw[->](t2)--node[above,near end]{\scriptsize $x_{Id}$}(p2p);
\draw[->](p1N)--node[above,near end]{\scriptsize $x_N$}(t1);
\draw[->](t1)--node[above,near end]{\scriptsize $x_{Id}$}(p1p);


\node[place](prename)at (8.5,1){};
\node[empty](txt) at (8.3,1.4){\scriptsize $p^{rename}_{e}$};
\node[place,label={[name=Lab]below:\scriptsize $p^{new}_{e}$}](pnew)at (8.5,-1){};

\node[transhor,label={[name=tLab]below:\scriptsize  $t_3$}](t3)at(9.5,0){};

\node[place](pmoving)at (10.5,1){};
\node[empty](txt) at (11,1.5){\scriptsize $p^{moving(1)}_{e}$};
\node[place,label={[name=Lab]below:\scriptsize $p^{move(1)}_{e}$}](pmove)at (10.5,-1){};
\node[transhor,label={[name=tLab]right:\scriptsize  $t^{moved(1)}_e$}](tmoved)at(9.5,2.5){};

\node[place,label={[name=Lab]left:\scriptsize $p^{transfer}_{e}$}](ptrans)at (7,2.5){};

\draw[->](pnew)--node[above,midway,sloped]{\scriptsize $x_{Id}$}(t3);
\draw[->](t3)--node[above,midway,sloped]{\scriptsize $x_{Id}$}(prename);
\draw[->](prename)--node[below,midway,sloped]{\scriptsize $x_{Id}$}(tmoved);

\draw[->](pmove)--node[above,midway,sloped]{\scriptsize $x_{cs}$}(t3);
\draw[->](t3)--node[above,midway,sloped]{\scriptsize $x_{cs}$}(pmoving);
\draw[->](pmoving)--node[below,midway,sloped]{\scriptsize $x_{cs}$}(tmoved);

\draw[<->]($(prename.south)+(-0.3,0.2)$)--node[left]{\scriptsize $x_{Id}$}($(prename.south)-(0.3,0.1)$)-|(t1.north);

\draw[<->](prename.south)--node[left]{\scriptsize $x_{Id}$}($(prename.south)-(0,1.1)$)-|(t2.north);

\draw[->]($(tmoved.south)-(0.3,0)$)|-node[below,near end]{\scriptsize $x_{Id}$}($(tmoved.south)-(1.8,0.1)$)|-($(idp)+(0.5,0)$)--(idp.east);
\draw[->](tmoved)--node[above,midway,sloped]{\scriptsize $x_{cs}$}(ptrans);
\end{tikzpicture}
}
        \caption{Module $e$-move(1).}
        \label{fig:emove}
    \end{subfigure}  
    \begin{subfigure}{0.3\linewidth}
        \centering
\scalebox{0.7}{\begin{tikzpicture}

\node[place,label={[name=objenbLab]below:\scriptsize $p^{transfer}_e$}](pmerge)at (5,-1.25){};
\node[place,label={[name=newLab]below:\scriptsize $p^{new}_e$}](pnew)at (6,-1.25){};
\node[place,label={[name=objenbLab]above:\scriptsize $p^{init}$}](pmerged)at (5.5,1.25){};
\path [rectangle, fill=gray!10](3.5,-1.7) to (4.5,-1.7) to (4.5,1.5) to (3.5,1.5);
\node[label={[name=BlockLab]above:\tiny N-Updated}](p1Block)at(4,1.3){};
\node[place,label={[name=idpLab]below:\scriptsize $Id_{N}^u$}](idp)at (4,0){};
\node[place,label={[name=p1Lab]below:\scriptsize $p_1$}](p1)at (4,1){};
\node[place,label={[name=p2Lab]below:\scriptsize $p_2$}](p2)at (4,-1){};
\node[transvert,minimum height=12mm,minimum width=5mm] (tmerge)at (5.5,0){};
\node [rotate=-90]at(5.5,0){\tiny$\tau^{transfer}_e$};
\path [rectangle, fill=gray!10](6.5,-1.7) to (7.5,-1.7) to (7.5,1.5) to (6.5,1.5);
\node[label={[name=idpnBlockLab]above:\scriptsize \tiny $\hat{p}_2$\text{-block}}](pNBlock)at(7,1.3){};
\node[place,label={[name=idpnLab]below:\scriptsize $\Id_{\hat{p}_2}$}](idpn)at (7,0){};

\node[place,label={[name=p1NLab]below:\scriptsize $p_1$}](p1N)at (7,1){};
\node[place,label={[name=p2NLab]below:\scriptsize $p_2$}](p2N)at (7,-1){};

\draw[->](idp)--node[above,near start]{\scriptsize $x_{N}$}(tmerge);
\draw[->,myDouble] (p1) -- node[above,midway,sloped] {\scriptsize $x_{N}$}(tmerge);
\draw[->,triple] (p2) -- node[above,midway,sloped] {\scriptsize $x_{N}$}(tmerge);

\draw[->](tmerge) -- node[right,midway] {\scriptsize $x_{cs}$}(pmerged);

\draw[->,myDouble] (tmerge) -- node[above,midway,sloped] {\scriptsize $x_{Id}$}(p1N);
\draw[->] (tmerge) to node[midway, above]{\scriptsize$x_{Id}$} (idpn);
\draw[->,triple] (tmerge) -- node[above,midway,sloped] {\scriptsize $x_{Id}$}(p2N);

\draw[<-](tmerge.south west) -- node[midway,left]{\scriptsize $x_{cs}$}(pmerge.north);

\draw[->](pnew) -- node[midway,right]{\scriptsize$x_{Id}$} (tmerge.south east);

\end{tikzpicture}
}
        \caption{Module $e$-transfer.}
        \label{fig:etransfer}
    \end{subfigure}
    \caption{Module \(e\)-distributing.}
    \label{fig:edist}
\end{figure}

\noindent\textbf{Module \textbf{\(e\)-distributing}.}
This module captures the dynamics of the final distribution of updated tokens.
Let $n$ be the number of objects created by $\tau$ and let $\hat{p}_1,\dots,\hat{p}_n$ be an enumeration (possibly) with repetition of the places in the post-conditions of $\tau$. The module concatenates the sub-module \textbf{$e$-id-creation}, a sequence of sub-modules \textbf{$e$-move$(i)$} for $i\in\{1,\dots,n-1\}$, and the sub-module \textbf{$e$-transfer}. The sub-module \textbf{$e$-id-creation} generates $n$ new identifiers\footnote{This is the only place where we need name creation via $\nu$ variables.} for the encoding of the objects to be created. Then, each \textbf{$e$-move$(i)$} selects one such identifier, say $a$, and moves while renaming to $a$, one by one, some token not in $\Id_{N}^u$ from \textbf{$N$-updated} to the corresponding place in \textbf{$\hat{p}_i$-block}. By taking advantage of concurrency and of a couple of enable/acknowledge places as in the previous modules, each \textbf{$e$-move$(i)$} passes the turn to the next sub-module after a non-deterministic number of movements. 
While passing the turn, \textbf{$e$-move$(i)$} finally moves the identifier $a$ to $\Id_{\hat{p}_i}$, completing the encoding of the next object created by $\tau$ at place $\hat{p}_i$. Finally, the \textbf{$e$-transfer} sub-module creates the last encoding by moving, via channels, all remaining tokens in \textbf{$N$-updated} to \textbf{$\hat{p}_n$-block} while renaming to the last new identifier.

\begin{remark}
    The construction above (or slight modifications) cannot simulate non-conservative EOSs. In general EOS, if the transition $\tau$ in the event $e$ destroys a type $N$, then $e$ can fire only when destroying an empty object, performing a sort of zero-check. Otherwise, the transition cannot fire. Instead, the corresponding module $e$-merging, can fire even when consuming non-empty objects. Thus, this construction works only for cEOSs.
\end{remark}
 
\subsection{Configuration Encoding} \label{subsec:encoding}
Here we formally define the configuration encoding.
Say that $p$ is a place of $N\in \N$, we denote by $p^{\hat{q}}$ the corresponding place in $\hat{q}$-block for each system net place $\hat{q}$ of type $N$, by $p^{mer}$ the corresponding place in $N$-merged, and by $p^{upd}$ the corresponding place in $N$-updated. The following function $\K$ is used to transform objects in $\os$ as tuples in the modules of $W$.

\begin{definition}
    Given a configuration $c=\tup{\hat{p},m}$ of $\os$ where $d(\hat{p})=N=(P_N,T_N,F_N)$, we define the tuple 
    \begin{align*}
        \K(c)=&\sum_{p\in P_N} m(p)\delta_{p^{\hat{p}}}+\delta_{\Id_{\hat{p}}}&
        &
        \\
        \K_{mer}(c)=&\sum_{p\in P_N} m(p)\delta_{p^{mer}}+\delta_{\Id_N^m}&
        \bar{\K}_{mer}(N,{m})=&\sum_{p\in P_N} m(p)\delta_{p^{\hat{p}}}+\delta_{\Id_{\hat{p}}}\\
        \K_{upd}(c)=&\sum_{p\in P_N} m(p)\delta_{p^{upd}}+\delta_{\Id_N^u}&
        \bar{\K}_{mer}(N,{m})=&\sum_{p\in P_N} m(p)\delta_{p^{\hat{p}}}+\delta_{\Id_{\hat{p}}}
    \end{align*}
\end{definition}

We extend $\K$, $\K_{mer}$, and $\K_{upd}$ to sums of pairs $c_i=\tup{\hat{p}_i,m_i}$ in the trivial way: given a configuration $M=\sum_{i=1}^\ell \tup{\hat{p}_i,m_i}$ of $\os$, letting $c_i=\tup{\hat{p}_i,m_i}$, we define
    \begin{align*}
        \K(M)=&\sum_{i=1}^\ell \fmset{\K(c_i)}\\
        \K_{mer}(M)=&\sum_{i=1}^\ell \fmset{\K_{mer}(c_i)}\\
        \K_{upd}(M)=&\sum_{i=1}^\ell \fmset{\K_{upd}(c_i)}\\
    \end{align*}
We can now define the encoding of an arbitrary configuration.
\begin{definition}
    The encoding $\hat{M}$ of a configuration $M$ of $\os$ is the configuration $\hat{M}=\K(M)+\fmset{\delta_{p^{init}}}$ of $W$.
\end{definition}

\subsection{Simulation} \label{subsec:simulation}

\begin{theorem}
    Given an EOS \(\os\), two markings \(M,M'\), an event \(e =\tup{\tau,\theta}\), and mode \((\lambda,\rho)\), \(M \rightarrow^{(e,\lambda,\rho)}M'\) iff \(\hat{M} \rightarrow^\ast \hat{M'}\) in the \GnPN \(N\) using the above construction. 
\end{theorem}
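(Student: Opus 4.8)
The plan is to decompose a run of $W$ between two \emph{encoded} configurations into the three phases of an EOS event firing (merging, internal firing, distribution, as in Rem.~\ref{rem:EOSsemantics}) and to match each phase against the corresponding clause of the enabledness condition in Def.~\ref{def:bussy14_enable}. All bookkeeping is carried by the control-sequence token (the one moved by $x_{cs}$), so I would first establish, by induction on run length, the invariant that every configuration reachable from an encoding $\hat M$ has exactly one token among the ``control'' places ($p^{init}$, $p^{merged}_e$, $p^{select}_e$, $p^{fin}_e$, the $p^{fire}_{e,t}/p^{fired}_{e,t}$, the $p^{move(i)}_e/p^{moving(i)}_e$, $p^{transfer}_e$, $p^{new}_e$, $p^{rename}_e$, \dots), that the position of this token determines the current phase, that the identifier places $\Id_{\hat p}$, $\Id^m_N$, $\Id^u_N$ always hold pairwise-distinct data values (one per encoded object), and that whenever the control token sits on $p^{init}$ all auxiliary non-encoding places are empty. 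Consequently any run between two encodings (which both put the unique control token on $p^{init}$) factors as a concatenation of \emph{simulation cycles}, each returning the control token to $p^{init}$. It then suffices to prove that one cycle corresponds to exactly one event step $M\to^{(e,\lambda,\rho)}M'$ — a cycle ``exposes'' a unique mode (the objects consumed by $\tau^{merge}_e$ read off $\lambda$, those produced during distribution read off $\rho$) — and the general $\to^\ast$ statement follows by induction on the number of cycles, exactly as with the ``minimal run'' notion used in Sec.~\ref{Sec:lowerbound}.

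\textbf{Merging.} Inside a cycle $\tau^{merge}_e$ fires exactly once (it is the only consumer of $p^{init}$). Unfolding the \GnPN step (Def.~\ref{def:stepCNUPN}) for this firing: a mode selects $n$ object-tuples from each $\hat p$-block ($n$ being the number of pre-channel labels $x_1^{\hat p},\dots,x_n^{\hat p}$ attached to $\hat p$), removes them, and, via the post-channels labelled by the constant sequence $x_1^{\hat p},\dots,x_1^{\hat p}$, deposits the \emph{sum} of their internal markings into $N$-merged under the single identifier $x_1^{p}$; one $\Id_{\hat p}$ token is consumed per label and one $\Id^m_N$ token produced, preserving the object-count invariant. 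The booleanness restriction of Def.~\ref{def:CNUPN} guarantees this implements renaming, not duplication. Hence the selected objects encode precisely a $\lambda$ with $\Pi^1(\lambda)=\prefun_{\hat N}(\tau)$ and $\lambda\sqleq M$, and the resulting $N$-merged marking equals $\Pi^2_N(\lambda)$. This step is a routine computation with the channel semantics.

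\textbf{Internal firing --- the crux.} Starting from $N$-merged holding $\mu:=\Pi^2_N(\lambda)$, the transition $\tau^{init}_e$ puts $\theta(t)$ tokens on each $p^{fire}_{e,t}$, so the copy $t_e$ is enabled exactly $\theta(t)$ times. The key claim is: a run that fires each $t_e$ all $\theta(t)$ times and produces the matching acknowledgments on $p^{fired}_{e,t}$ (so that $\tau^{fin}_e$ becomes enabled, then flushes the residue of $N$-merged into $N$-updated and shifts $\Id^m_N$ to $\Id^u_N$) exists \emph{iff} $\mu\geq\prefun_N(\theta(N))$, in which case the $N$-updated marking obtained is $\mu-\prefun_N(\theta(N))+\postfun_N(\theta(N))=\Pi^2_N(\rho)$. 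The subtle point, and the main obstacle of the whole proof, is that each $t_e$ consumes from the $N$-merged copy of $P$ but produces into the \emph{disjoint} $N$-updated copy; hence no firing of $t_e$ can re-enable another, the total consumption from $N$-merged along any interleaving is order-independent and equal to $\prefun_N(\theta(N))$, and an interleaving emptying all $p^{fire}_{e,t}$ exists iff $\mu\geq\prefun_N(\theta(N))$. I would prove this order-independence by a standard commutation/exchange argument on interleavings, using the disjointness of the two copies of $P$; if instead $\mu\not\geq\prefun_N(\theta(N))$, every maximal firing sequence deadlocks with some $p^{fire}_{e,t}$ token stuck, $\tau^{fin}_e$ never fires, and $\hat{M'}$ is unreachable within the cycle. (This is exactly where conservativity is needed: for a non-conservative EOS, $\tau^{merge}_e$ could merge a non-empty object of a destroyed type with no target block to redistribute it, whereas the EOS semantics demands a zero-check there --- cf. the remark following the construction.)

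\textbf{Distribution and assembly.} In $e$-id-creation, $t_e^{id}$ fires once, using $\nu$ variables to place $n$ \emph{fresh, pairwise-distinct} identifiers $a_1,\dots,a_n$ (distinct from all live $\Id$ values, by the $\Upsilon$-semantics) into $p^{new}_e$. Each $e$-move$(i)$ then selects one such $a$, moves a non-deterministic number of non-$\Id$ tokens from $N$-updated to the matching places of $\hat p_i$-block while renaming them to $a$ through the appropriate channels, and on passing the turn deposits $a$ on $\Id_{\hat p_i}$, finalizing one created nested token $\tup{\hat p_i,m_i}$; finally $e$-transfer moves all tokens still in $N$-updated into $\hat p_n$-block under $a_n$. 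A short argument shows the reachable final $\hat p$-block contents are exactly the encodings $\K(\rho')$ of nested markings $\rho'$ with $\Pi^1(\rho')=\postfun_{\hat N}(\tau)$ and $\Pi^2_N(\rho')=\Pi^2_N(\lambda)-\prefun_N(\theta(N))+\postfun_N(\theta(N))$ --- precisely the $\rho'$ permitted by Def.~\ref{def:bussy14_enable}: the forward direction chooses the per-move counts to realise the given $\rho$, and the backward direction reads $\rho$ off the final blocks, the identifiers $a_i$ partitioning tokens into the created objects. Combining the three phases with the control-token invariant yields: a single simulation cycle $\hat M\to^\ast\hat{M'}$ between encodings exists iff $\Phi(e,\lambda,\rho)$ holds and $\lambda\sqleq M$, i.e.\ iff $M\to^{(e,\lambda,\rho)}M'$ with $\hat{M'}=\hat M-\K(\lambda)+\K(\rho)$; the stated $\to^\ast$ equivalence then follows.
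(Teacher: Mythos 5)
Your proposal is correct and takes essentially the same route as the paper: a phase-by-phase walk through the merging, internal-firing, and distribution modules, matching each against the clauses of the EOS enabledness condition, with the control-sequence token serializing the phases and the explicit intermediate encodings computed along the way. The only substantive difference is that you spell out the converse direction (the single-control-token invariant, the decomposition of runs between encodings into simulation cycles, and the commutation/deadlock argument showing that the disjointness of the $N$-merged and $N$-updated copies enforces $\Pi^2_N(\lambda)\geq\prefun_N(\theta(N))$), which the paper only asserts to be ``analogous.''
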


Given an \(\os = (\hat{N},\N,d,\Theta)\), let the event $e=\tup{\tau,\theta}$ is enabled on the configuration $M=\fmset{\tup{\hat{p}_1,m_1}, \tup{\hat{p}_2,m_2}, \cdots, \tup{\hat{p}_{\abs{M}},m_{\abs{M}}}}$ with mode $(\lambda,\rho)$, and \(M'=\fmset{\tup{\hat{p}'_1,m'_1}, \tup{\hat{p}'_2,m'_2}, \cdots, \tup{\hat{p}'_{\abs{M'}},m'_{\abs{M'}}}}\) where $\lambda=\fmset{\tup{\hat{p}_{i_1},m_{i_1}}, \tup{\hat{p}_{i_2},m_{i_2}}, \cdots, \tup{\hat{p}_{i_{\abs{\lambda}}},m_{i_{\abs{\lambda}}}}}$ and $\rho=\fmset{\tup{\hat{p}_{j_1}',m'_{j_1}}, \tup{\hat{p}_{j_2}',m_{j_2}'}, \cdots, \tup{\hat{p}_{j_{\abs{\rho}}'},m_{j_{\abs{\rho}}}'}}$ where each \(i_k \in [\abs{M}]\) and \(j_{k'} \in [\abs{M'}]\). Recall that this proof is about the construction presented in Sec.~\ref{sec:fromEOS}, which in turn assumes that $\lambda$ and $\rho$ deal with a single system net type, i.e., $d(\hat{p}_{i_1}) = \cdots = d(\hat{p}_{i_{\abs{\lambda}}})= d(\hat{p}_{j_1})=\dots=d(\hat{p}'_{j_{\abs{\rho}}})=N$, for some $N\in\N$.

We define an injection \(H: [\abs{\lambda}] \rightarrow \Pi_1(\lambda) \times [\abs{\lambda}]\) that maps each pair of lambda to its system place and position in $\lambda$ when restricted only to that place, i.e., \(H(k)=(\hat{p}_{i_k},\ell)\) where \(\ell =|\{j \mid j\in\{1,\dots,k\}, \hat{p}_{i_j}=\hat{p}_{i_k}\}|\).

The transition $\tau_e^{merge}$ is enabled with mode $\varepsilon_{mer}$ such that :
\begin{itemize}
    \item $\varepsilon_{mer}(x_{cs})=\delta_{p^{init}}$. From now on, we will always associate the variable \(x_{cs}\) to the control sequence tuple.
    \item $\varepsilon_{mer}(x^{\hat{p}}_\ell)=\K(m_k)$ such that $H(k)=(\hat{p},\ell)$.
\end{itemize}
Note that we can write \(M = M'' + \lambda\) and \(M' = M'' + \rho\), for some \(M''\). 
After firing \(\tau_e^{merge}\) using the mode \(\varepsilon_{mer}\) on the configuration \(\hat{M} = \K(M'') + \K(\lambda) + \fmset{\delta_{p^{init}}}\), 
we get the resultant marking 
\[\hat{M}_{merged} = \K(M'') + \fmset{m_{merged}} + \fmset{\delta_{p^{merged}_e}}\]
where \(m_{merged} = \sum\limits_{\substack{<\hat{p},m> \in \lambda}} \sum\limits_{p \in P_{N}} m(p)\delta_{p^{mer}} + \delta_{\Id^m_N} = \K_{mer}(\Pi^2_{N}(\lambda))\). 
This is the configuration obtained after executing the module \textbf{$e$-merging}. 

Now, \(\delta_{p^{merged}_e}\) enables the execution of the \textbf{\(e\)-updating} module. After firing transitions \(\tau^{init}_e\), all the \(t_e\) transitions, and \(\tau^{fin}_e\) sequentially in the module \textbf{\(e\)-updating},  with a unique enabling mode (this mode exists since $(\lambda,\rho)$ is an enabling mode for \(e\)), we get the resultant marking 
\[\hat{M}_{updated} = \K(M'') + \fmset{m^0_{updated}} + \fmset{\delta_{p_e^{fin}}}\]
where 
\(m^0_{updated} = 
\sum\limits_{\substack{\tup{\hat{p}',m'} \in \rho}} \sum\limits_{p \in P_{N}}
m'(p)\delta_{p^{upd}} + \fmset{\delta_{\Id^u_{N}}}= \bar{\K}_{upd}(N,\Pi^2_{N}(\lambda) - \prefun_{N}(\theta(N)) + 
\postfun_{N}(\theta(N))) = 
\bar{\K}_{upd}(N,\Pi^2_{N}(\rho))\) by EOS semantics.

The token in $\delta_{p^{fin}_e}$ ensures that the control flows to the \textbf{$e$-distribution} module, consisting of sub-modules: \textbf{$e$-id-creation}, \textbf{$e$-move(i)} for each \(\tup{\hat{p}'_{i},n'_{i}} \in \rho\) and \textbf{$e$-transfer}, in that order). 

In the \textbf{$e$-id-creation} sub-module, on firing $t_e^{id}$,
a token is placed in $p^{new}_e$ for each new tuple generated, against each object net created by the event \(e\). Hence, we end up with the following resultant marking: 
\[\hat{M}_{created} = \K(M'') + \fmset{m^0_{updated}} + \fmset{c_1,c_2,\cdots,c_{\abs{\rho}}} + \fmset{\delta_{p^{move(1)}_e}}\]
where \(c_i = \delta_{p^{new}_e}\) for each \(i \in [\abs{\rho}]\).

Now, we can execute\textbf{ \(e\)-move($i$)} and \textbf{$e$-transfer} modules sequentially according to each \(r_i = \tup{\hat{p}'_{j_i},m'_{j_i}} \in \rho\). 
Specifically, for each $p\in P_N$, we fire $m'_{j_i}(p)$ times the transition of $e$-\textbf{move}-$(i)$ that consumes from $p^{upd}$.
After firing all the transitions in an \textbf{\(e\)-move($i+1$)} module for \(i \in \{0,\cdots,\abs{\rho}-2\}\), with the mode that instantiates \(x_{N}\) with \(m^i_{updated}\) and \(x_{new}\) with \(c_{i+1}\), we get,
\begin{align*}
    \hat{M}^{i+1}_{moved} =& \hat{M}^{i}_{moved} - \fmset{m^{i}_{updated}} + \fmset{m^{i+1}_{updated}} \\ 
    & - \fmset{c_{i+1}} + \K(r_{i+1})  - \fmset{\delta_{p^{move(i+1)}_e}} + \fmset{\delta_{p^{move(i+2)}_e}} 
\end{align*}
where $\hat{M}^{0}_{moved} = \hat{M}_{created}$, \(m^{i+1}_{updated} = \fmset{m^{i}_{updated} - \K_{upd}(r_{i+1})+\delta_{\Id^u_N}}\) 
and, with a slight abuse of notation, we define \(p^{move(\abs{\rho})}_e = p^{transfer}_e\).
Note that
\(\hat{M}_{moved}^i = \K(M'') + \fmset{m^i_{updated}} + \fmset{\K(r_1),\cdots,\K(r_i)} + \fmset{c_{i+1},\cdots,c_{\abs{\rho}}} + \fmset{\delta_{p^{move(i+1)}_e}}\)

Now, after executing module \textbf{\(e\)-move($\abs{\rho}-1$)} (or module $e$\textbf{-id-creation} if $\abs{\rho}=1$), \(\tau^{transfer}_e\) is enabled. After firing \(\tau^{transfer}_e\), we get the resultant marking:
\begin{align*}
    \hat{M}_{trans} =& \hat{M}^{\abs{\rho}-1}_{moved} - \fmset{m^{\abs{\rho}-1}_{updated}} + \K(r_{\abs{\rho}}) -\fmset{\delta_{p^{transfer}_e}} + \fmset{\delta_{p_{init}}} \\
=&\K(M'') + \fmset{\K(r_1),\cdots, \K(r_{\abs{\rho}})} + \fmset{\delta_{p^{init}}} \\
=&\K(M'') + \K(\rho) + \fmset{\delta_{p^{init}}} = \hat{M}'
\end{align*}
where the first equation follows from $m^{\abs{\rho}-1}_{updated} = m^0_{updated} - \sum\limits_{i \in \abs{\rho}-1}(\K_{upd}(r_{i})) = \K_{upd}(N,\Pi^2_{N}(\rho)) - \sum\limits_{i \in \abs{\rho}-1}(\K_{upd}(r_{i})) = \K_{upd}(r_{\abs{\rho}})$. Thus, $\hat{M}\rightarrow^\ast \hat{M}'$.

The other direction of the proof is analogous.
 
\section{Conclusions}\label{sec:conclusions}
\begin{figure}[t]
    \centering
    \scalebox{0.7}{
    \begin{tikzpicture}

\node[rectangle]at(-3,0)(h1){\textit{complexity classes}};

\node at (-3,-1)(top){$F_{\omega^\omega}$};
\node []at (-3,-1.5)(first){strictly intermediate};
\node at (-3,-2)(bot){$F_{\omega 2}$};

\node[rectangle]at(1,0)(h2){\textit{data perspective}};

\node at (1,-1)(UDN){UDN coverability};
\node [text=green!60!black]at (1,-1.5)(second){\GnPN/\rnPN coverability};
\node at (1,-2)(NU){\nPN coverability};

\node[rectangle]at(6,0)(h3){\textit{process perspective}};

\begin{scope}
\node at (6,-1.25)(prob1){EOS/cEOS lossy reachability};
\node at (6,-1.75)(prob3){cEOS (lossy) coverability};
\node[fit={(prob1)(prob3)}](probs){};

\draw [green!60!black,decorate,decoration={brace,amplitude=5pt}]
   (probs.south west) --(probs.north west);

\end{scope}

\draw[](3.5,.25) -- (3.5,-2.5);
\draw[](-1.25,.25) -- (-1.25,-2.5);
\draw[dashed,green!60!black] (first.east)--(second.west);
\draw[dashed,green!60!black] (second.east)--($(probs.west)-(.2,0)$);

\draw[dashed] (top.east)--(UDN.west);
\draw[dashed] (bot.east)--(NU.west);
\draw (h1.south west)--(h3.south east);

\end{tikzpicture}
    }
    \caption{Complexity of EOS coverability under the lens of data nets. Contributions of this paper are in green.}
    \label{fig:dataprocess}
\end{figure}

We have detected \GnPN as the data-aware counterpart of cEOS, when restricted to coverability.\footnote{The reduction from cEOS to \GnPN is a full-fledged simulation that works beyond the scope of coverability, unlike the reduction from \rnPN coverability to cEOS coverability.} \nPNs are captured by \GnPN which, in turn, are captured by UDNs. In fact, since we are interested only in coverability in a WSTS setting, we can seamlessly substitute the fresh-name creation of \nPN with the lossy-name creation of UDNs: if coverability is witnessed by a run where lossy-name creation deletes some tuple, then there is another run where lossy-name creation behaves as fresh-name creation.
This immediately yields three contributions: \begin{enumerate*}
    \item the insight that the essence of the computational power of cEOS coverability corresponds to anonymous unordered data (as in \nPN) plus selective transfer with renaming\footnote{while anonymous data means that data labels remain implicit, renaming of anonymous data means that tokens change their implicit name.}
    \item an $F_{\omega2}$ lower bound from standard \nPN coverability and
    \item an $F_{\omega^\omega}$ upper bound from UDN coverability.
\end{enumerate*}
Arguably, both these bounds are sub-optimal. On the one hand, by $F_{\omega2}$-completeness of \nPN coverability, either \begin{enumerate*}
    \item \nPN can express, up to coverability, selective transfers with renaming, which seems unlikely, or
    \item the complexity of cEOS coverability is higher than that of \nPNs.
\end{enumerate*}
On the other hand, the $F_{\omega^\omega}$-hardness of UDN in~\cite{Rosa-Velardo17} is obtained by taking advantage of arbitrary affine transformations (amounting to a general matrix $G$ in \GnPN) and broadcasts. Thus, either \begin{enumerate*}
    \item the notion of broadcast in UDN does not have any impact on the complexity of coverability, which seems again unlikely, or
    \item cEOS coverability is not $F_{\omega^\omega}$-complete.
\end{enumerate*}
Consequently, we conjecture that cEOS coverability is $F_{\alpha}$-complete for some ordinal $\alpha$ such that $\omega2<\alpha<\omega^\omega$.
As future works, we plan to obtain the precise complexity class of cEOS coverability. This endevour is now easier, since we can directly take advantage of the literature on coverability of various forms of data nets. Furthermore, we plan to study the impact of several iterations of nesting, as in general NWNs. Results of this kind would help us understand the precise effect of nesting (and iterated nesting) on complexity.

%
%
%
\bibliographystyle{splncs04}
\bibliography{biblio}

\appendix
\section{Complexity Equivalence of Lossy EOS and perfect cEOS coverability}\label{Sec:equivalence}
In this section, we show that lossy EOS coverability and cEOS coverability are polynomially inter-reducible, and, thus, are equivalent from a complexity standpoint. In turn, lossy EOS coverability coincides with lossy EOS reachability~\cite{OurPNSE24}. We begin by precisely stating these two coverability problems using the definitions below.

\begin{definition}    
    A \emph{$(\leq_f,\omega)$-run} in EOS $\os$ is a finite sequence of configurations $C_1,C_2,\dots$ such that, for each $i\in\mathbb{N}$, $C_1\rightarrow^e C_2$ for some event $e$, called \emph{standard steps}, or $C_2\leq_f C_1$, called \emph{lossy steps}.    
    The set of $(\leq_f,\omega)$-runs from $\mu_0$ is denoted by $\lruns[\omega]{\leq_f}{\mu_0}$.
\end{definition}

\begin{definition}[$(\leq_f,\omega)$-reachability/coverability for EOSs (cEOSs)]
    \begin{compactitem}
        \item[Input:] An EOS (cEOS) $E$, an initial marking $\mu_0$ and a target marking $\mu_1$ for $E$.
        \item[Output of reachability:] Is there a run $\sigma\in\lruns[\omega] {\leq_f}{\mu_0}$ from $\mu_0$ to $\mu_1$?
        \item[Output of coverability:] Is there a run $\sigma\in\lruns[\omega] {\leq_f}{\mu_0}$ from $\mu_0$ to some $\mu$ such that $\mu\geq\mu_1$?
    \end{compactitem}
\end{definition}

Now, with the definitions in place, we show that cEOS $\leq_f$-coverability and EOS $(\leq_f,\omega)$-coverability are inter-reducible.
The reduction from cEOS $\leq_f$-coverability to EOS $(\leq_f,\omega)$-coverability is immediate.

\begin{lemma}\label{lem:polynomialReductionTrivial}
    There is a polynomial reduction from cEOS $\leq_f$-coverability to EOS $(\leq_f,\omega)$-coverability.
\end{lemma}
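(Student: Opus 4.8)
The plan is to take the identity as the reduction and to derive its correctness from the fact that cEOSs are WSTSs. Concretely, given an instance $(\os,\mu_0,\mu_1)$ of cEOS $\leq_f$-coverability, I would output the very same triple, now read as an instance of EOS $(\leq_f,\omega)$-coverability; this is legitimate because every cEOS is in particular an EOS, and the map is trivially polynomial (indeed, it copies the input). It then remains to check that the two yes/no answers coincide, i.e., that $\mu_1$ is coverable from $\mu_0$ by a run of event firings with respect to $\leq_f$ if and only if $\mu_1$ is coverable from $\mu_0$ by some run in $\lruns[\omega]{\leq_f}{\mu_0}$.

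The left-to-right implication is immediate: a plain sequence of event firings is exactly a $(\leq_f,\omega)$-run that happens to use no lossy step, so any witness for cEOS $\leq_f$-coverability of $\mu_1$ from $\mu_0$ is already a witness for EOS $(\leq_f,\omega)$-coverability, reaching the same covering configuration. For the converse I would invoke monotonicity of the cEOS step relation with respect to $\leq_f$, which is part of the fact that cEOSs are WSTSs~\cite{kohler-busmeier_survey_2014}: whenever $C\leq_f C'$ and $C\rightarrow^e D$, there is $D'$ with $C'\rightarrow^e D'$ and $D\leq_f D'$. Starting from a $(\leq_f,\omega)$-run $C_0\cdots C_k$ with $C_0=\mu_0$ and $\mu_1\leq_f C_k$, I would scan it from left to right while maintaining a perfect run whose last configuration $C_i'$ satisfies $C_i\leq_f C_i'$: a standard step $C_i\rightarrow^e C_{i+1}$ is replayed from $C_i'$ by monotonicity, producing $C_{i+1}'$ with $C_{i+1}\leq_f C_{i+1}'$ which we append; a lossy step $C_{i+1}\leq_f C_i$ is simply dropped, taking $C_{i+1}'$ to be $C_i'$, for which $C_{i+1}\leq_f C_i\leq_f C_i'$. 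After processing all $k$ steps, the resulting perfect run starts at $\mu_0$ and ends at some $C_k'$ with $\mu_1\leq_f C_k\leq_f C_k'$, which witnesses cEOS $\leq_f$-coverability.

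I do not foresee a genuine obstacle; the delicate points are bookkeeping ones. First, one must make sure that the monotonicity statement used is the one that transfers a step across an arbitrary instance of $\leq_f$ — including the parts of $\leq_f$ that add whole nested tokens and that enlarge the inner markings of existing nested tokens — and not merely across the flat order on a fixed place set; this is exactly what conservativity buys, since a conservative event always has an $N$-typed post-place into which the surplus inner tokens of an enlarged consumed object can be poured, so an enabling mode on $C$ lifts to an enabling mode on the larger $C'$. Second, one should record that the target marking and the ordering witnessing ``covers'' are literally the same $\leq_f$ in both problems, so the reduction preserves the covering relation verbatim and no modification of the target is needed. With these observations in place the proof is complete.
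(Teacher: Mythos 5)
Your proposal is correct and matches the paper's proof: the reduction is the identity (a cEOS is an EOS), and correctness rests on the fact that on cEOSs $\leq_f$-coverability coincides with $(\leq_f,\omega)$-coverability. The paper merely asserts this coincidence, whereas you additionally spell out the underlying argument — strong compatibility of cEOS steps with $\leq_f$ (the WSTS property, enabled by conservativity) plus the left-to-right replay that drops lossy steps — which is a faithful elaboration rather than a different route.
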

\begin{proof}
    cEOS $\leq_f$-coverability coincides with cEOS $(\leq_f,\omega)$-coverability, which, in turn, is a special case of EOS $(\leq_f,\omega)$-coverability, since each cEOS is an EOS.
\end{proof}

We now show the converse of Lemma~\ref{lem:polynomialReductionTrivial}.
In the remainder of the section, except explicitly specified, we work with a fixed arbitrary EOS $\os=\tup{\hat{N},\N,d,\Theta}$ where $\hat{N}=\tup{\hat{P},\hat{T},\hat{F}}$. 
The intuition is that we can make each transition of $\os$ conservative, by adding to each transition $t$ a post-condition to a \textit{trash} place for each type \textit{destroyed} by $t$, that is, consumed but not produced back. The resulting 
EOS is called the \textit{conservative closure} of $t$ and of $\os$, respectively.

\begin{definition}
    A transition $t\in\hat{T}$ \textit{destroys} a type $N\in\N$ if $t$ consumes some token from a place of type $N$, but does not produce any token in any place of type $N$:
    \[(\exists p\in \hat{P}\,d(p)=N\land F(p,t)\neq 0) \land (\forall q \in \hat{P}\, d(q)=N\rightarrow F(t,q)= 0)\]
    The \textit{set of types destroyed by $t$} is denoted by $\dest(t)$.
\end{definition}

For technical reasons, we assume without loss of generality that, for each transition $t\in\hat{T}$, \begin{inparaenum}[\itshape (1)]
    \item $t$ is involved in one and only one event in $\Theta$ and
    \item if $\dest(t)\neq\emptyset$ then $t$ is involved in a system autonomous event.
\end{inparaenum}
In fact, if $t$ is involved in zero events, then it cannot fire and does not contribute to the set of reachable/coverable configuration. Thus, it can be removed. Instead, if it is involved in $\theta_1,\dots,\theta_n$ events, we can substitute $t$ in $\theta_i$ with a dedicated copy $t_i$ of $t$, for each $i\in\{1,\dots,n\}$. Moreover, if $\dest(t)\neq\emptyset$ but $t$ is involved in a synchronization event, then we can split $t$ into two transitions $t_\text{pre}$ and $t_\text{post}$ as depicted in Fig.~\ref{fig:destroySystemAutonomous}, where $\dest(t_\text{pre})=\emptyset$, $t_\text{pre}$ is involved in the corresponding synchronization event, and $t_\text{post}$ is involved in a system autonomous event. On the one hand, the firing of $t_\text{pre}$ followed by the firing of $t_\text{post}$ simulates a single firing of $t$. On the other, the \textit{enable} places force the two transitions to fire in turns. If lossiness affects some of the \textit{enable} places or some of the $\text{inter}$ places, then $t_\text{post}$ and, consequently, $t_\text{pre}$ get disabled forever and do not contribute anymore to the set of reachable/coverable configurations. Note that this transformation is polynomial.

\begin{figure}[t]

\begin{subfigure}[b]{0.39\textwidth}
\resizebox{1\columnwidth}{!}{
\begin{tikzpicture}

\node[place,label={[name=p1Lab]left:{\scriptsize $p_1$}}] (p1)at (0,0){};
\node [rotate=90](pDots)at (0,-0.625){$\dots$};
\node[place,label={[name=pmLab]left:{\scriptsize $p_m$}}] (pm)at (0,-1.25){};
\node [rotate=90](pDots)at (0,-0.625*3){$\dots$};
\node[place,label={[name=pnLab]left:{\scriptsize $p_n$}}] (pn)at (0,-2.5){};
\node[transvert,label={[name=tLab]above:\scriptsize $t$},label={[name=tSynchLab]below:\scriptsize $\tup{M}$}] (t)at (1.5,-1.25){};
\draw[->] (p1) -- (t);
\draw[->] (pm) -- (t);
\draw[->] (pn) -- (t);

\node[place,label={[name=q1Lab]right:{\scriptsize $q_1$}}] (q1)at (3,0){};
\node [rotate=90](qDots)at (3,-0.625){$\dots$};
\node[place,label={[name=qmLab]right:{\scriptsize $q_h$}}] (qm)at (3,-1.25){};
\draw[<-] (q1) -- (t);
\draw[<-] (qm) -- (t);

\draw [dashed] (pmLab.south east) -- (pmLab.south west) |- (pnLab.south west) -- (pnLab.south east);
\node[rotate=90] at (-1.25,-0.625*3-0.25){$\dest$};

\node[place,draw=white] (enablePre)at (0,.5){};
    \node[place,draw=white] (enablePost)at (0,-3){};

\end{tikzpicture}
}
    \caption{}
    \label{fig:destroySystemAutonomousPre}
\end{subfigure}
\hfill
\begin{subfigure}[b]{0.59\textwidth}
\resizebox{1\columnwidth}{!}{
\begin{tikzpicture}

    \node[place,label={[name=p1Lab]left:{\scriptsize $p_1$}}] (p1)at (0,0){};
    \node [rotate=90](pDots)at (0,-0.625){$\dots$};
    \node[place,label={[name=pmLab]left:{\scriptsize $p_m$}}] (pm)at (0,-1.25){};
    \node [rotate=90](ppDots)at (0,-0.625*3){$\dots$};
    \node[place,label={[name=pnLab]left:{\scriptsize $p_n$}}] (pn)at (0,-2.5){};
    \node[transvert,label={[name=tpreLab]above:\scriptsize $t_\text{pre}$},label={[name=tpreSynchLab]below:\scriptsize $\tup{M}$}] (tpre)at (1.5,-1.25){};
    \draw[->] (p1) -- (tpre);
    \draw[->] (pm) -- (tpre);
    \draw[->] (pn) -- (tpre);

    \node[place,label={[name=N1Lab]right:{\scriptsize $\text{inter}_1$}}] (N1)at (3,0){};
    \node [rotate=90](NDots)at (3,-1.25){$\dots$};
    \node[place,label={[name=NkLab]right:{\scriptsize $\text{inter}_k$}}] (Nk)at (3,-2.5){};
    \draw[<-] (N1) -- (tpre);
    \draw[<-] (Nk) -- (tpre);

    \node[transvert,label={[name=tpostLab]above:\scriptsize $t_\text{post}$}] (tpost)at (4.5,-1.25){};
    
    \node[place,label={[name=q1Lab]right:{\scriptsize $q_1$}}] (q1)at (6,0){};
    \node [rotate=90](qDots)at (6,-0.625){$\dots$};
    \node[place,label={[name=qmLab]right:{\scriptsize $q_h$}}] (qm)at (6,-1.25){};
    \draw[->] (N1) -- (tpost);
    \draw[->] (Nk) -- (tpost);
    \draw[<-] (q1) -- (tpost);
    \draw[<-] (qm) -- (tpost);

    \node[place,label={[name=enablePreLab] left:{\scriptsize $\text{enable}_\text{pre}$}}] (enablePre)at (1.5,.5){};
    \node at (enablePre){$\bullet$};
    \node[place,label={[name=enablePostLab]right:{\scriptsize $\text{enable}_\text{post}$}}] (enablePost)at (4.5,-3){};
    \draw[->] (enablePre.south) -- (tpreLab.north);
    \draw[<-] (enablePre.east) -| (tpostLab.north);
    \draw[->] (enablePost.north) -- (tpost.south);
    \draw[->] (tpreSynchLab.south) |- (enablePost.west);  

\end{tikzpicture}
}
    \caption{}
    \label{fig:destroySystemAutonomousPost}
\end{subfigure}

    \caption{(\subref{fig:destroySystemAutonomousPre}) A system-net transition involved in a synchronization event, where $\{d(p_1,)\dots,d(p_m)\}\cap\{d(p_{m+1},\dots,d_n)\}=\emptyset$ and $\dest(t)=\{d(p_{m+1}),\dots,d(p_n)\}$ and (\subref{fig:destroySystemAutonomousPost}) its simulation with two transitions $t_\text{pre}$ and $t_\text{post}$ such that $\dest(t_\text{pre})=\emptyset$, $\dest(t_\text{post})=\dest(t)$, $\{d(p_1),\dots,d(p_n)\}=\{\text{inter}_1,\dots,\text{inter}_k\}$, $d(\text{inter}_i)\neq d(\text{inter}_j)$ for $i\neq j$, $d(\text{enable}_\text{pre})=d(\text{enable}_\text{post})$ is a new type, and $t_\text{post}$ is involved in a system autonomous event.}
    \label{fig:destroySystemAutonomous}
\end{figure}

\begin{definition}\label{def:consEOS}
    The \textit{conservative closure $\os^\cons$ of $\os$} is the EOS 
    
    $\os^\cons=\tup{\hat{N}^\cons,\N,d^\cons,\Theta}$ such that 
    \begin{enumerate}
        \item $\hat{N}^\cons=\tup{\hat{P}^\cons,\hat{T},\hat{F}^\cons}$ where 
            \begin{enumerate}
                \item $\hat{P}^\cons=\hat{P}\cup\bigcup_{N\in\N}\{\mathit{trash}_{N}\}$, for new places $\mathit{trash}_{N}$ not already in $\hat{P}$, for each $N\in\N$.
               
                \item For each place $p\in \hat{P}^\text{cons}$ and transition $t\in\hat{T}$, we have $F^\cons(p,t)=F(p,t)$ and
                \[F^\cons(t,p)=\begin{cases}
                    F(t,p) &\text{if } p\in\hat{P}\\
                    1 & \text{if } p=\text{trash}_{d(q)}\text{ for some }q\in\dest(t)\\
                    0 & \text{otherwise}
                \end{cases}\]
            \end{enumerate}
            \item For each $p\in\hat{P}^\cons$:
            \[d^\cons(p)=\begin{cases}
                d(p)&\text{if }p\in\hat{P}\\
                d(N)&\text{if }p=\text{trash}_N\text{ for some}N\in\N
            \end{cases}\]
            
    \end{enumerate}
\end{definition}

Note that $\os^\cons$ is a cEOS. For clarity, when referring to a transition $t\in\hat{T}$ or event $e\in\Theta$ in the context of $\os^\cons$, we denote it by $t^\cons$ or $e^\cons$, respectively. For example, for each $t\in\hat{T}$, we have that $\dest(t)$ may be non-empty, while $\dest(t^\cons)=\emptyset$. 
\begin{example}
Fig.~\ref{fig:equivalenceNonConservativeTransitionPre} depicts a non-conservative EOS $\os$ with a configuration $m$. The conservative closure $\os^\cons$ of $\os$ is depicted in Fig.~\ref{fig:equivalenceConservativeTransitionPre}. The transition in $\os$ is enabled only after loosing two tokens in the $k_1$ places, so that the object tokens in $p_1$ and $p_2$ can be erased by the transition firing, reaching the configuration $m'$ in Fig.~\ref{fig:equivalenceNonConservativeTransitionPre}. Instead, the transition in $\os^\cons$ is enabled and its firing reaches the same configuration $m'$, up to a number of tokens in the \textit{trash} places.
\begin{figure}[t]
    \centering

\begin{subfigure}[b]{0.54\textwidth}
    \resizebox{.99\columnwidth}{!}{
\begin{tikzpicture}
\begin{scope}
        \node[place,label={[name=p1Lab]below:{\scriptsize $p_1$}}] (p1)at (0,0){};
        \node[place,label={[name=p2Lab]below:{\scriptsize $p_2$}}] (p2)at (0,-1.25){};
        \node[place,label={[name=p3Lab]below:{\scriptsize $p_3$}}] (p3)at (0,-2.5){};
        \node[transvert,label={[name=tLab]right:\scriptsize $t$}] (tran)at (1.5,-1.25){};
        \node (p3Token)at (p3) {$\bullet$};
        \node (p1Token)at (p1) {$\bullet$};
        \node (p2Token)at (p2) {$\bullet$};

        \draw[->] (p1.east) -| (tran.north); 
        \draw[->] (p2.east) -- (tran.west); 
        \draw[->] (p3.east) -| ($(tran.south)+(-.1,0)$); 

        \node[place,label={[name=p4Lab]below:{\scriptsize $p_4$}}] (p4) at (3,-2.5){};
        \draw[->] ($(tran.south)+(.1,0)$) |- (p4.west);

        \begin{scope}[xshift=-3cm,yshift=-2.5cm]
        \node[place,label={[name=q1Lab]below:{\scriptsize $q_1$}}] (q1) at (0,0){};
        \node[place,label={[name=q2Lab]below:{\scriptsize $q_2$}}] (q2)at (2,0){};
        \node[transvert,label={[name=t1Lab]below:\scriptsize $t_2$}] (t1)at (1,0){};
        \node [draw,fit={(q1)(q1Lab)(q2)(q2Lab)(t1)(t1Lab)}] (obj1) {};
        \draw[->] (q1.east) -- (t1.west); 
        \draw[<-] (q2.west) -- (t1.east); 
        \draw[dashed] (obj1) -- (p3Token); 
        \node (q1Token)at (q1) {$\bullet$};
        \node (q1Token)at (q2) {$\bullet$};
        \end{scope}

        \begin{scope}[xshift=-5cm,yshift=0cm]
        \node[place,label={[name=k1Lab]below:{\scriptsize $k_1$}}] (k1)at (1,0){};
        \node[transvert,label={[name=t2Lab]below:\scriptsize $t_1$}] (t2)at (2,0){};
        \node [draw,fit={(k1)(k1Lab)(t2)(t2Lab)}] (obj2) {};
        \draw[->] (k1.east) -- (t2.west); 
        \draw[dashed] (obj2) -- (p1Token); 
        \node (k1Token)at (k1) {$\bullet$};
        \end{scope}

        \begin{scope}[xshift=-3cm,yshift=-1.25cm]
        \node[place,label={[name=k1bisLab]above:{\scriptsize $k_1$}}] (k1bis)at (1,0){};
        \node[transvert,label={[name=t2bisLab]above:\scriptsize $t_1$}] (t2bis)at (2,0){};
        \node [draw,fit={(k1bis)(k1bisLab)(t2bis)(t2bisLab)}] (obj3) {};
        \draw[->] (k1bis.east) -- (t2bis.west); 
        \draw[dashed] (obj3) -- (p2Token); 
        \node (k1bisToken)at (k1bis) {$\bullet\bullet$};
        \end{scope}
    \end{scope}
\end{tikzpicture}
    }
    \caption{}
    \label{fig:equivalenceNonConservativeTransitionPre}    
\end{subfigure}
\hfill
\begin{subfigure}[b]{0.45\textwidth}
    \resizebox{.99\columnwidth}{!}{
\begin{tikzpicture}
\begin{scope}[xshift=4cm]
        \node[place,label={[name=p1secondLab]below:{\scriptsize $p_1$}}] (p1second)at (0,0){};
        \node[place,label={[name=p2secondLab]below:{\scriptsize $p_2$}}] (p2second)at (0,-1.25){};
        \node[place,label={[name=p3secondLab]below:{\scriptsize $p_3$}}] (p3second)at (0,-2.5){};
        \node[transvert,label={[name=tLabsecond]right:\scriptsize $t$}] (transecond)at (1.5,-1.25){};

        \draw[->] (p1second.east) -| (transecond.north); 
        \draw[->] (p2second.east) -- (transecond.west); 
        \draw[->] (p3second.east) -| ($(transecond.south)+(-.1,0)$); 

        \node[place,label={[name=p4Labsecond]below:{\scriptsize $p_4$}}] (p4second) at (3,-2.5){};
        \draw[->] ($(transecond.south)+(.1,0)$) |- (p4second.west);
        \node (p4Tokensecond)at (p4second) {$\bullet$};

        \begin{scope}[xshift=4cm,yshift=-2.5cm]
        \node[place,label={[name=q1Labsecond]below:{\scriptsize $q_1$}}] (q1second) at (0,0){};
        \node[place,label={[name=q2Labsecond]below:{\scriptsize $q_2$}}] (q2second)at (2,0){};
        \node[transvert,label={[name=t1Labsecond]below:\scriptsize $t_2$}] (t1second)at (1,0){};
        \node [draw,fit={(q1second)(q1Labsecond)(q2second)(q2Labsecond)(t1second)(t1Labsecond)}] (obj1second) {};
        \draw[->] (q1second.east) -- (t1second.west); 
        \draw[<-] (q2second.west) -- (t1second.east); 
        \draw[dashed] (obj1second) -- (p4Tokensecond); 
        \node (q2Tokensecond)at (q2second) {$\bullet$};
        \node (q1Token)at (q1second) {$\bullet$};
        \end{scope}

    \end{scope}
\end{tikzpicture}
    }
    \caption{}
    \label{fig:equivalenceNonConservativeTransitionPost}
\end{subfigure}
    \caption{(\subref{fig:equivalenceNonConservativeTransitionPre}) A configuration $m$ of a non-conservative EOS $\os$ with two types $N_1=d(p_1)=d(p_2)$ and $N_2=d(p_3)=d(p_4)$, and 
    (\subref{fig:equivalenceNonConservativeTransitionPost}) the configuration $m'$ obtained by, first, loosing all tokens from the places $k_1$ and, second, firing $t$. 
    Note that $t$ could have fired without the preliminary lossy step only if the objects in $p_1$ and $p_2$ had no internal token.
    }
    \label{fig:equivalenceNonConservativeTransition}
\end{figure}

 
\begin{figure}[t]
    \centering
\begin{subfigure}[b]{0.54\textwidth}
    \resizebox{.99\columnwidth}{!}{
\begin{tikzpicture}
    \begin{scope}
        \node[place,label={[name=p1Lab]below:{\scriptsize $p_1$}}] (p1)at (0,0){};
        \node[place,label={[name=p2Lab]below:{\scriptsize $p_2$}}] (p2)at (0,-1.25){};
        \node[place,label={[name=p3Lab]below:{\scriptsize $p_3$}}] (p3)at (0,-2.5){};
        \node[transvert,label={[name=tLab]right:\scriptsize $t^\cons$}] (tran)at (1.5,-1.25){};
        \node (p3Token)at (p3) {$\bullet$};
        \node (p1Token)at (p1) {$\bullet$};
        \node (p2Token)at (p2) {$\bullet$};

        \draw[->] (p1.east) -| ($(tran.north)+(-.1,0)$); 
        \draw[->] (p2.east) -- (tran.west); 
        \draw[->] (p3.east) -| ($(tran.south)+(-.1,0)$); 

        \node[place,label={[name=p4Lab]below:{\scriptsize $p_4$}}] (p4) at (3,-2.5){};
        \draw[->] ($(tran.south)+(.1,0)$) |- (p4.west);

        \node[place,label={[name=trashLab]below:{\scriptsize $\mathit{trash}_{N_1}$}}] (trash) at (3,0){};
        \node[place,label={[name=trashLabsecond]below:{\scriptsize $\mathit{trash}_{N_2}$}}] (trash2second) at (3,-1.25){};
        \draw[<-] (trash.west) -| ($(tran.north)+(.1,0)$); 

        \begin{scope}[xshift=-3cm,yshift=-2.5cm]
        \node[place,label={[name=q1Lab]below:{\scriptsize $q_1$}}] (q1) at (0,0){};
        \node[place,label={[name=q2Lab]below:{\scriptsize $q_2$}}] (q2)at (2,0){};
        \node[transvert,label={[name=t1Lab]below:\scriptsize $t_2$}] (t1)at (1,0){};
        \node [draw,fit={(q1)(q1Lab)(q2)(q2Lab)(t1)(t1Lab)}] (obj1) {};
        \draw[->] (q1.east) -- (t1.west); 
        \draw[<-] (q2.west) -- (t1.east); 
        \draw[dashed] (obj1) -- (p3Token); 
        \node (q1Token)at (q1) {$\bullet$};
        \node (q2Token)at (q2) {$\bullet$};
        \end{scope}

        \begin{scope}[xshift=-5cm,yshift=0cm]
        \node[place,label={[name=k1Lab]below:{\scriptsize $k_1$}}] (k1)at (1,0){};
        \node[transvert,label={[name=t2Lab]below:\scriptsize $t_1$}] (t2)at (2,0){};
        \node [draw,fit={(k1)(k1Lab)(t2)(t2Lab)}] (obj2) {};
        \draw[->] (k1.east) -- (t2.west); 
        \draw[dashed] (obj2) -- (p1Token); 
        \node (k1Token)at (k1) {$\bullet$};
        \end{scope}

        \begin{scope}[xshift=-3cm,yshift=-1.25cm]
        \node[place,label={[name=k1bisLab]above:{\scriptsize $k_1$}}] (k1bis)at (1,0){};
        \node[transvert,label={[name=t2bisLab]above:\scriptsize $t_1$}] (t2bis)at (2,0){};
        \node [draw,fit={(k1bis)(k1bisLab)(t2bis)(t2bisLab)}] (obj3) {};
        \draw[->] (k1bis.east) -- (t2bis.west); 
        \draw[dashed] (obj3) -- (p2Token); 
        \node (k1bisToken)at (k1bis) {$\bullet\bullet$};
        \end{scope}
    \end{scope}
\end{tikzpicture}
}
\caption{}
\label{fig:equivalenceConservativeTransitionPre}
\end{subfigure}
\hfill
\begin{subfigure}[b]{0.45\textwidth}
    \resizebox{.99\columnwidth}{!}{
\begin{tikzpicture}
    \begin{scope}[xshift=4cm]
        \node[place,label={[name=p1secondLab]below:{\scriptsize $p_1$}}] (p1second)at (0,0){};
        \node[place,label={[name=p2secondLab]below:{\scriptsize $p_2$}}] (p2second)at (0,-1.25){};
        \node[place,label={[name=p3secondLab]below:{\scriptsize $p_3$}}] (p3second)at (0,-2.5){};
        \node[transvert,label={[name=tLabsecond]right:\scriptsize $t^\cons$}] (transecond)at (1.5,-1.25){};

        \draw[->] (p1second.east) -| ($(transecond.north)+(-.1,0)$); 
        \draw[->] (p2second.east) -- (transecond.west); 
        \draw[->] (p3second.east) -| ($(transecond.south)+(-.1,0)$); 

        \node[place,label={[name=p4Labsecond]below:{\scriptsize $p_4$}}] (p4second) at (3,-2.5){};
        \draw[->] ($(transecond.south)+(.1,0)$) |- (p4second.west);
        \node (p4Tokensecond)at (p4second) {$\bullet$};

        \node[place,label={[name=trashLabsecond]below:{\scriptsize $\mathit{trash}_{N_1}$}}] (trashsecond) at (3,0){};
        \draw[<-] (trashsecond.west) -| ($(transecond.north)+(.1,0)$); 
        \node[place,label={[name=trashLabsecond]below:{\scriptsize $\mathit{trash}_{N_2}$}}] (trash2second) at (3,-1.25){};
        \draw[<-] (trashsecond.west) -| ($(transecond.north)+(.1,0)$); 

        \node (trashsecondToken)at(trashsecond){$\bullet$};

        \begin{scope}[xshift=4cm,yshift=-2.5cm]
        \node[place,label={[name=q1Labsecond]below:{\scriptsize $q_1$}}] (q1second) at (0,0){};
        \node[place,label={[name=q2Labsecond]below:{\scriptsize $q_2$}}] (q2second)at (2,0){};
        \node[transvert,label={[name=t1Labsecond]below:\scriptsize $t_2$}] (t1second)at (1,0){};
        \node [draw,fit={(q1second)(q1Labsecond)(q2second)(q2Labsecond)(t1second)(t1Labsecond)}] (obj1second) {};
        \draw[->] (q1second.east) -- (t1second.west); 
        \draw[<-] (q2second.west) -- (t1second.east); 
        \draw[dashed] (obj1second) -- (p4Tokensecond); 
        \node (q2Tokensecond)at (q2second) {$\bullet$};
        \node (q1Tokensecond)at (q1second) {$\bullet$};
        \end{scope}

        \begin{scope}[xshift=4cm,yshift=0cm]
        \node[place,label={[name=k1Labsecond]below:{\scriptsize $k_1$}}] (k1second)at (2,0){};
        \node[transvert,label={[name=t2Labsecond]below:\scriptsize $t_1$}] (t2second)at (1,0){};
        \node [draw,fit={(k1second)(k1Labsecond)(t2second)(t2Labsecond)}] (obj2second) {};
        \draw[<-] (k1second.west) -- (t2second.east); 
        \draw[dashed] (obj2second) -- (trashsecondToken); 
        \node [yshift=.065cm]
        (k1Tokensecond)at (k1second) {$\bullet \bullet$};
        \node [yshift=-.085cm](k1MoreTokensecond)at (k1second) {$\bullet$};
        \end{scope}
    \end{scope}
\end{tikzpicture}
}
\caption{}
\label{fig:equivalenceConservativeTransitionPost}
\end{subfigure}
    \caption{(\subref{fig:equivalenceConservativeTransitionPre}) The conservative closure $\os^\cons$ of the EOS $\os$ in Fig.~\ref{fig:equivalenceNonConservativeTransitionPre} with the same configuration $m$, and (\subref{fig:equivalenceConservativeTransitionPost}) the configuration $m'$ obtained by firing $t^\cons$ (without any lossy step). The configuration $m'$ covers the configuration in \ref{fig:equivalenceNonConservativeTransitionPost}. 
    }
    \label{fig:equivalenceConservativeTransition}
\end{figure}
 
\end{example}

We can use $\os^\cons$ to solve $(\leq_f,\omega)$-coverability problems in $\os$. In fact, we now show that, given an initial configuration $\iota$ and a target configuration $\tau$ of $\os$, we have that $\tau$ is $(\leq_f,\omega)$-coverable from $\iota$ in $\os$ if and only if $\tau$ is $\leq_f$ coverable from $\iota$ in $\os^\cons$. 

\begin{definition}
    A configuration $k$ of $\os^\cons$ is \textit{over the trash places} if $k=\sum_{i=1}^k\tup{\text{trash}_{N_i},\mu_i}$ for some $k\in\mathbb{N}$ and $N_1,\dots,N_k\in\N$.
\end{definition}
Note that each configuration $m$ of $\os^\cons$ can be written, in a unique way, as $m=m'+k'$ where $m'$ is a configuration of $\os$ and $k'$ is a configuration of $\os^\cons$ over the trash places.
\begin{lemma}\label{lemma:stepEOScEOS}
    Let $e=\tup{t,\theta}\in\Theta$. If $m$ and $m'$ are configurations of $\os$ such that $m\rightarrow^e m'$ in $\os$, then there is some configuration $k$ of $\os^\cons$ over the trash places such that $m\rightarrow^{e^\cons} m'+k$ in $\os^\cons$.
\end{lemma}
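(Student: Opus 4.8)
The plan is to fire $e^\cons$ in $\os^\cons$ with essentially the same mode that already witnesses $m\rightarrow^e m'$ in $\os$, adjusted only to feed the new trash post-conditions of $t$. First I would fix a mode $(\lambda,\rho)$ with $\Phi(e,\lambda,\rho)$ and $\lambda\sqleq m$, so that $m'=m-\lambda+\rho$. Then I would set $k=\sum_{N\in\dest(t)}\tup{\mathit{trash}_N,\varepsilon}$ and $\rho^\cons=\rho+k$; by construction $k$ is a configuration over the trash places. The statement then reduces to showing that $(\lambda,\rho^\cons)$ enables $e^\cons$ on $m$ in $\os^\cons$: its firing produces $m-\lambda+\rho^\cons=(m-\lambda+\rho)+k=m'+k$, as required.

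The heart of the argument is deriving $\Phi(e^\cons,\lambda,\rho^\cons)$ from $\Phi(e,\lambda,\rho)$. The conservative closure leaves the object-net family $\N$, the object component $\theta$, and the preset of $t$ untouched ($\prefun_{\hat{N}^\cons}(t)=\prefun_{\hat{N}}(t)$, and $\prefun_N(\theta(N))$, $\postfun_N(\theta(N))$ are unchanged), so the conjuncts $\Pi^1(\lambda)=\prefun_{\hat{N}^\cons}(t)$ and $\Pi^2_N(\lambda)\geq\prefun_N(\theta(N))$, as well as $\lambda\sqleq m$, transfer verbatim (note $m$ carries no trash tokens and $\lambda$ mentions only original places). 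For the postset conjunct, $\postfun_{\hat{N}^\cons}(t)=\postfun_{\hat{N}}(t)+\sum_{N\in\dest(t)}\mathit{trash}_N$ while $\Pi^1(\rho^\cons)=\Pi^1(\rho)+\sum_{N\in\dest(t)}\mathit{trash}_N$, so equality is preserved. The only delicate point is the conjunct $\Pi^2_{N'}(\rho^\cons)=\Pi^2_{N'}(\lambda)-\prefun_{N'}(\theta(N'))+\postfun_{N'}(\theta(N'))$ for each $N'\in\N$. For $N'\notin\dest(t)$, $t^\cons$ puts nothing on $\mathit{trash}_{N'}$, hence $\Pi^2_{N'}(\rho^\cons)=\Pi^2_{N'}(\rho)$ and the identity is inherited. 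For $N'\in\dest(t)$, the very definition of $\dest$ gives that $t$ has no postset place of type $N'$ in $\hat{N}$, so $\Pi^2_{N'}(\rho)=\varepsilon$; then $\Phi(e,\lambda,\rho)$ forces $\Pi^2_{N'}(\lambda)-\prefun_{N'}(\theta(N'))+\postfun_{N'}(\theta(N'))=\varepsilon$. On the other side $\rho^\cons$ contributes to type $N'$ only the fresh token $\tup{\mathit{trash}_{N'},\varepsilon}$, with empty internal marking, so $\Pi^2_{N'}(\rho^\cons)=\varepsilon$ as well, and the identity holds. The standing assumption that a destroying transition is system-autonomous makes this cleaner still: then $\theta(N')=\emptyset$, so the identity is simply $\Pi^2_{N'}(\lambda)=\varepsilon$, i.e., the consumed type-$N'$ objects were already empty.

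I expect the main obstacle to be purely bookkeeping: tracking how $\Pi^1$, $\Pi^2_{N'}$, $\prefun_{\hat{N}}$, and $\postfun_{\hat{N}}$ change (or do not) under the conservative closure, and in particular noticing that the trash post-conditions must receive nested tokens with \emph{empty} inner markings --- which is exactly what matches the vanishing merged marking of a destroyed type that $\Phi(e,\lambda,\rho)$ already guarantees. Once $\Phi(e^\cons,\lambda,\rho^\cons)$ is verified, the firing step $m\rightarrow^{e^\cons}m'+k$ and the fact that $k$ lies over the trash places are immediate.
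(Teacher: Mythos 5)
Your proof is correct and follows essentially the same route as the paper's: fire $e^\cons$ with the unchanged $\lambda$ and with $\rho$ augmented by one empty nested token on each $\mathit{trash}_N$ for $N\in\dest(t)$, then verify the conjuncts of the enabledness condition $\Phi$ one by one, obtaining $m\rightarrow^{e^\cons}m-\lambda+\rho+k=m'+k$. The only (harmless) difference is that for destroyed types you derive $\Pi^2_{N'}(\rho)=\varepsilon$ directly from the definition of $\dest$ and the postset conjunct of $\Phi(e,\lambda,\rho)$, whereas the paper invokes the standing assumption that destroying transitions are system autonomous ($\theta=\varepsilon$); both give the same conclusion.
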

\begin{proof}
If $\dest(t)=\emptyset$, then $t^\cons$ and $t$ enforce the same dynamics. Thus, the statement holds when setting $k=\varepsilon$.
Otherwise, $e$ is system autonomous, , i.e., $\theta=\varepsilon$. In this case the statement holds for $k=\sum_{N\in\dest(t)}\tup{\text{trash}_N,0}$. In fact, since $m\rightarrow^{t,\lambda,\rho} m'$ for some mode $(\lambda,\rho)$, we have $m'=m-\lambda+\rho$ and $\lambda\leq_s m$. By EOS semantics, since $e$ is system autonomous, $\Pi^2_N(\lambda)=\Pi^2_N(\rho)=0$ for each $N\in\dest(t)$. Let $\rho'=\rho + k$. We have:
\begin{itemize}
    \item $\Pi^1(\lambda)=\prefun_{\hat{N}}(t)=\prefun_{\hat{N}^\cons}(t^\cons)$.
    \item For each $p\in\hat{P}^\cons$, we have 
    \[\Pi^1(\rho')(p)=\Pi^1(\rho)(p)+\begin{cases}1&\text{if }p=\text{trash}_{d(p)}\\ 0&\text{otherwise}\end{cases}=\postfun_{\hat{N}^\cons}(t^\cons)(p)\]
    \item For each $N\in\dest(t)$, $\Pi^2_N(\rho')= \Pi^2_N(\rho)+\Pi^2_N(k)=0+0=\Pi^2_N(\lambda)$.
\end{itemize}
Thus, in $\os^\cons$, we have that $e^\cons$ is enabled on $m$ with mode $(\lambda,\rho')$ and $m\rightarrow^{e^\cons} m-\lambda+\rho'=m'+k$.
\end{proof}

By EOS semantics, we immediately obtain that, for each configuration $k$ of $\os^\cons$ over the trash places, $m\rightarrow^e m+k$ in $\os$ implies $m+k\rightarrow^{e^\cons} m' + k'+k$ in $\os^\cons$. Thus, Lemma~\ref{lemma:stepEOScEOS} can be generalized to the transitive closure of $\rightarrow$ in $\os$ and $\os^\cons$. Hence, if there is some $m\geq_f\tau$ $(\leq_f,\omega)$-reachable from $\iota$ in $\os$, then, there is some configuration $k$ of $\os^\cons$ over the trash places such that $m+k\geq_f\tau$ is reachable from $\iota$ in $\os^\cons$. We have obtained the following theorem.

\begin{theorem}\label{theorem:equivalenceEOScEOS}
    If $\tau$ is $(\leq_f,\omega)$-coverable from $\iota$
 in $\os$, then $\tau$ is $\leq_f$-coverable from $\iota$
 in $\os^\cons$.
 \end{theorem}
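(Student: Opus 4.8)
The plan is to translate a witnessing $(\leq_f,\omega)$-run of $\os$, step by step, into a witnessing $(\leq_f,\omega)$-run of $\os^\cons$, maintaining the invariant that the configuration currently reached in $\os^\cons$ equals the one currently reached in $\os$ plus a configuration over the trash places. Concretely, I would fix a run $\sigma=c_0\to c_1\to\cdots\to c_n$ of $\os$ with $c_0=\iota$ and $c_n\geq_f\tau$, each step being either a standard step $c_i\rightarrow^{e_i}c_{i+1}$ with $e_i=\tup{t_i,\theta_i}\in\Theta$ or a lossy step $c_{i+1}\leq_f c_i$, and build by induction on $i$ a $(\leq_f,\omega)$-run of $\os^\cons$ from $\iota$ reaching $c_i+k_i$ for some configuration $k_i$ over the trash places that is nondecreasing in $i$.

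For the base case take $k_0=\varepsilon$. For the inductive step, assume $\os^\cons$ reaches $c_i+k_i$. If the $i$-th step of $\sigma$ is a standard step $c_i\rightarrow^{e_i}c_{i+1}$, then Lemma~\ref{lemma:stepEOScEOS}, generalised to account for the inert trash tokens already present (trash places occur in no precondition of $\os^\cons$, so their tokens never disable $e_i^\cons$), yields $c_i+k_i\rightarrow^{e_i^\cons}c_{i+1}+k_{i+1}$ in $\os^\cons$ with $k_{i+1}\geq k_i$ over the trash places; indeed $k_{i+1}=k_i$ when $\dest(t_i)=\emptyset$, and otherwise $e_i$ is a system autonomous event and $k_{i+1}$ is $k_i$ with one further empty nested token on $\mathit{trash}_N$ for each $N\in\dest(t_i)$. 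If the $i$-th step is a lossy step $c_{i+1}\leq_f c_i$, then adding the common component $k_i$ preserves $\leq_f$, so $c_{i+1}+k_i\leq_f c_i+k_i$ and thus $c_i+k_i\to c_{i+1}+k_i$ is a legal lossy step of $\os^\cons$; set $k_{i+1}=k_i$.

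At $i=n$ this gives a $(\leq_f,\omega)$-run of $\os^\cons$ from $\iota$ reaching $c_n+k_n$. Since $k_n$ only adds nested tokens, $c_n+k_n\geq_f c_n\geq_f\tau$ by transitivity of $\leq_f$, so $\tau$ is $(\leq_f,\omega)$-coverable from $\iota$ in $\os^\cons$. As $\os^\cons$ is a cEOS (Def.~\ref{def:consEOS}), hence a WSTS~\cite{kohler-busmeier_survey_2014}, the lossy steps of this run can be elided one by one, each elision only raising the reached configuration w.r.t.\ $\leq_f$ by monotonicity of cEOS firing; this produces a standard $\leq_f$-run of $\os^\cons$ from $\iota$ reaching some configuration $\geq_f\tau$, i.e.\ $\tau$ is $\leq_f$-coverable from $\iota$ in $\os^\cons$. (This last passage is exactly the coincidence of $(\leq_f,\omega)$- and $\leq_f$-coverability on cEOSs already used in Lemma~\ref{lem:polynomialReductionTrivial}.)

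The main obstacle is the bookkeeping of the invariant across the two kinds of steps rather than any single computation: one has to check that Lemma~\ref{lemma:stepEOScEOS}, phrased for a single step out of a trash-free configuration, still applies after trash tokens have accumulated — which is monotonicity of EOS firing with respect to places outside preconditions — and that lossy steps survive the addition of the trash component — which is compatibility of $\leq_f$ with multiset addition of nested tokens. The other point to state with care is the closing reduction from $(\leq_f,\omega)$- to $\leq_f$-coverability on the conservative EOS $\os^\cons$, i.e.\ the already-cited equivalence of lossy and perfect coverability on cEOSs.
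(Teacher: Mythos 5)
Your proposal is correct and follows essentially the same route as the paper: the witnessing run of $\os$ is transferred step by step to $\os^\cons$ via Lemma~\ref{lemma:stepEOScEOS}, generalised to configurations carrying an accumulated trash component (which never interferes with enabledness since trash places occur in no precondition), and the extra trash tokens only increase the reached configuration w.r.t.\ $\leq_f$, so the target is still covered. The only difference is one of explicitness: you spell out the treatment of the lossy steps (mapping them to lossy steps of $\os^\cons$ and then eliding them via the coincidence of $\leq_f$- and $(\leq_f,\omega)$-coverability on cEOSs used in Lemma~\ref{lem:polynomialReductionTrivial}), which the paper leaves implicit.
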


We now show the converse of Theorem~\ref{theorem:equivalenceEOScEOS}.
\begin{lemma}\label{lemma:stepcEOSEOS}
    Let $e=\tup{t,\theta}\in\Theta$. If $m$ and $m'$ are configurations of $\os$ and $k$ is a configuration of $\os^\cons$ over the trash places such that $m\rightarrow^{e^\cons} m'+k$ in $\os^\cons$, then there is some configuration $m''$ of $\os$ such that $m\geq_f m '' \rightarrow^{t^\cons} m'$ in $\os$.
\end{lemma}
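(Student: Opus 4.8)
The plan is to split on whether $t$ destroys a type. If $\dest(t)=\emptyset$, then by Def.~\ref{def:consEOS} the transition $t^\cons$ is literally $t$ (no post-condition to a trash place is added), so $e^\cons=e$; since $m$ carries no trash token and $t$ produces none, the unique decomposition of the target forces $k=\varepsilon$ and $m\rightarrow^e m'$ already holds in $\os$, and one takes $m''=m$. So the interesting case is $\dest(t)\neq\emptyset$, where the standing assumption gives that $e$ is system autonomous, i.e., $\theta=\varepsilon$ and $\prefun_N(\theta(N))=\postfun_N(\theta(N))=0$ for all $N$.

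First I would unpack the hypothesis $m\rightarrow^{e^\cons}m'+k$: it is witnessed by a mode $(\lambda,\rho^\cons)$ with $\lambda\sqleq m$, $m'+k=m-\lambda+\rho^\cons$, and $\Phi(e^\cons,\lambda,\rho^\cons)$. Since $\Pi^1(\rho^\cons)=\postfun_{\hat N^\cons}(t^\cons)=\postfun_{\hat N}(t)+\sum_{N\in\dest(t)}\fmset{\mathit{trash}_N}$ and the trash places are disjoint from $\hat P$, $\rho^\cons$ decomposes uniquely as $\rho^\cons=\rho+k$ with $\Pi^1(\rho)=\postfun_{\hat N}(t)$ and $k=\sum_{N\in\dest(t)}\tup{\mathit{trash}_N,\mu_N}$, and comparing $\hat P$- and trash-components in $m'+k=m-\lambda+\rho+k$ yields $m'=m-\lambda+\rho$. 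From $\Phi(e^\cons,\lambda,\rho^\cons)$ with $\theta=\varepsilon$ we get $\Pi^2_N(\lambda)=\Pi^2_N(\rho^\cons)$ for every $N$; for $N\in\dest(t)$ the set $\postfun_{\hat N}(t)$ has no place of type $N$, hence $\Pi^2_N(\rho)=0$ and $\mu_N=\Pi^2_N(\rho^\cons)=\Pi^2_N(\lambda)$ --- so the trash content $k$ is exactly the internal content of the destroyed objects consumed by $\lambda$.

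Next I would construct $m''$. Write $\lambda=\lambda_{\mathrm d}+\lambda_{\mathrm r}$, where $\lambda_{\mathrm d}$ collects the nested tokens of $\lambda$ sitting on a system place whose type lies in $\dest(t)$, let $\lambda_{\mathrm d}^0$ replace every $\tup{\hat p,\mu}$ in $\lambda_{\mathrm d}$ by $\tup{\hat p,0}$, write $m=\lambda+m_{\mathrm r}$ (using $\lambda\sqleq m$), and set $m''=\lambda_{\mathrm d}^0+\lambda_{\mathrm r}+m_{\mathrm r}$. Then $m\geq_f m''$, because $m$ is obtained from $m''$ by re-adding inner tokens to already-present nested tokens, one of the two moves permitted by $\leq_f$. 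It remains to check that $e$ is enabled on $m''$ with the mode $(\lambda'',\rho)$ where $\lambda''=\lambda_{\mathrm d}^0+\lambda_{\mathrm r}$: emptying inner markings preserves $\Pi^1$, so $\Pi^1(\lambda'')=\prefun_{\hat N}(t)$ and $\Pi^1(\rho)=\postfun_{\hat N}(t)$; for the object balance $\Pi^2_N(\rho)=\Pi^2_N(\lambda'')$ (recall $\theta=\varepsilon$) one distinguishes $N\notin\dest(t)$, where $\lambda''$ and $\lambda$ agree on type-$N$ objects and $\Pi^2_N(k)=0$, so $\Pi^2_N(\lambda'')=\Pi^2_N(\lambda)=\Pi^2_N(\rho^\cons)=\Pi^2_N(\rho)$, from $N\in\dest(t)$, where both sides are $0$. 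Finally $m''=\lambda''+m_{\mathrm r}$, so firing is legal and produces $m''-\lambda''+\rho=m_{\mathrm r}+\rho=(m-\lambda)+\rho=m'$, giving $m\geq_f m''\rightarrow^e m'$ in $\os$, as required.

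The main obstacle is precisely the bookkeeping in this enabledness check: one must argue that the internal tokens $t^\cons$ diverts to the trash are exactly those erased by the preliminary lossy step, so that the object-net balance equation of $\Phi$ still closes in $\os$ once the post-conditions to the trash places are dropped. This is where the \emph{without loss of generality} normalizations earn their keep: because a type-destroying transition is assumed system autonomous, $\theta$ imposes no firing on the destroyed objects, so no extra interaction has to be accounted for, and because each transition belongs to a unique event, the translation $e\mapsto e^\cons$ is unambiguous.
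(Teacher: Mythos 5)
Your proposal is correct and follows essentially the same route as the paper's proof: in the destroying case you zero out the internal markings of the consumed objects of destroyed types (your $\lambda''=\lambda_{\mathrm d}^0+\lambda_{\mathrm r}$ is exactly the paper's $\overline{\lambda}$), set $m''=m-\lambda+\lambda''$, split the conservative-closure post-mode into its $\os$-part plus the trash part $k$, and re-verify the enabledness condition $\Phi$ with $\theta=\varepsilon$ before computing the firing result. The only difference is cosmetic: you additionally note that the trash content equals the internal markings of the destroyed objects, a sanity check the paper leaves implicit.
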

\begin{proof}
If $\dest(t)=\emptyset$, then $t^\cons$ and $t$ enforce the same dynamics. Thus, the statement trivially holds for $k=0$. 
Otherwise, $t$ is involved in a system autonomous event, i.e., $\theta=\varepsilon$.

We now define a mode $(\lambda'',\rho'')$ for $e$ by taking advantage of the following notation. Given a configuration $\mu=\sum_{i=1}^k\tup{p_i,M_i}$, we denote by $\overline{\mu}$ the configuration $\overline{\mu}=\sum_{i=1}^k\tup{p_i,\overline{M_i}}$ where 
\[\overline{M_i}=\begin{cases}
    0&\text{if }d(p_i)\in\dest(t)\\
    1&\text{otherwise}
\end{cases}\]
For each configuration $\mu$, we have $\mu\geq_o\overline{\mu}$. We set $\lambda''=\overline{\lambda}$.
Let $\rho''$ and $k''$ be the unique configuration of $\os$ and configuration of $\os^\cons$ over the trash places such that $\rho=\rho''+k''$. Note that, since $m'+k=m-\lambda+\rho''+k''$ and $m'$, $m$, $\lambda$, and $\rho''$ are configurations of $\os$,\footnote{Recall that the pre-conditions of $t^\cons$ do not involve the trash places.} while $k$ and $k''$ are over the trash places, $k''=k$. Moreover, for each place $p\in\hat{P}$ such that $d(p)\in\dest(t)$, we have $\rho''(p)=\rho(p)=0$.

We now show that $(\lambda'',\rho'')$ enables $e$ over $m''=m-\lambda+\lambda''$ in $\os$. Since $(\lambda,\rho)$ enables $e$ over $m$ in $\os^\cons$:
\begin{itemize}
    \item $\Pi^1(\lambda'')=\Pi^1(\lambda)=\prefun_{\hat{N}^\cons}(t)(t)=\prefun_{\hat{N}}(t)$.
    \item For each $p\in\hat{P}$, we have $\Pi^1(\rho'')(p)=\Pi^1(\rho)(p)-\Pi^1(k'')(p)=\Pi^1(\rho)(p)=\postfun_{\hat{N}^\cons}(t)(p)=\postfun_{\hat{N}}(t)(p)$
    \item If $N\in\dest(N)$, then $\Pi^2_N(\rho'')=0=\Pi^2_N(\lambda'')$.
    \item If $N\notin\dest(N)$, then $\Pi^2_N(\rho'')=\Pi_N(\rho)=\Pi^2_N(\lambda)=\Pi^2_N(\lambda'')$.
\end{itemize}
Moreover, $m''\geq_s m-\lambda+\lambda''\geq_s\lambda''$. Thus, $(\lambda'',\rho'')$ enables $e$ over $m''$ in $\os$.

Finally, $m''\rightarrow^{e,\lambda'',\rho''} m'$ because, for each place $p\in\hat{P}$, we have $m''(p)-\lambda''(p)+\rho''(p)= m(p)-\lambda(p)+\overline{\lambda}(p)-\overline{\lambda}(p)+\rho(p)=m(p)+\lambda(p)-\rho(p)=m'(p)$.

\end{proof}

For each transition $t^\cons$ of $\os^\cons$ and $N\in\N$, we have that $F(\text{trash}_N,t)=0$. Thus, by EOS semantics, for each configuration $k$ of $\os^\cons$ over the trash places, we conclude that, if $m+k\rightarrow m'+k'$ in $\os^\cons$ for some configurations $m$, $m'$, and $k'$ as in the statement of Lemma~\ref{lemma:stepcEOSEOS}, then $k'\geq_s k$ and $m\rightarrow m'+ k' - k$. We can now apply Lemma~\ref{lemma:stepcEOSEOS} and obtain again that there is some configuration $m''$ of $\os$ such that $m\geq_f m'' \rightarrow m'$. This fact can be generalized to the transitive closure of $\rightarrow$ in $\os^\cons$, which yields that if a configuration $m$ of $\os$ is reachable in $\os^\cons$ from a configuration $\iota$ of $\os$, then it is $(\leq_f,\omega)$-reachable in $\os$.

\begin{theorem}\label{theorem:equivalencecEOSEOS}
    If $\iota$ and $\tau$ are configurations of $\os$ and $\tau$ is $\leq_f$-coverable in $\os^\cons$, then $\tau$ is $(\leq_f,\omega)$-coverable in $\os$.
\end{theorem}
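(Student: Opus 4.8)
The plan is to lift a covering computation of $\os^\cons$ back to $\os$ one step at a time, paying a single lossy step of $\os$ per conservative-closure step, with the per-step content supplied by Lemma~\ref{lemma:stepcEOSEOS}. Assume $\tau$ is $\leq_f$-coverable from $\iota$ in $\os^\cons$; by the definition of cEOS coverability this is witnessed by a run of event firings $\iota=m_0\rightarrow^{e_1^\cons}m_1\rightarrow^{e_2^\cons}\cdots\rightarrow^{e_n^\cons}m_n$ in $\os^\cons$ with $m_n\geq_f\tau$. First I would write each $m_i$ in its unique form $m_i=\bar{m}_i+k_i$, where $\bar{m}_i$ is a configuration of $\os$ and $k_i$ is a configuration of $\os^\cons$ over the trash places; here $k_0=\varepsilon$, because $\iota$ is a configuration of $\os$.

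Next I would induct on the run length. Since no transition $t^\cons$ consumes from a trash place ($F(\mathit{trash}_N,t)=0$ for every $N\in\N$), the step $m_i\rightarrow^{e_{i+1}^\cons}m_{i+1}$ forces $k_{i+1}\geq_s k_i$ and, with the tokens of $k_i$ never touched, $\bar{m}_i\rightarrow^{e_{i+1}^\cons}\bar{m}_{i+1}+(k_{i+1}-k_i)$ in $\os^\cons$. Applying Lemma~\ref{lemma:stepcEOSEOS} to this step yields a configuration $m_i''$ of $\os$ with $\bar{m}_i\geq_f m_i''\rightarrow^{e_{i+1}}\bar{m}_{i+1}$ in $\os$. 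Concatenating over $i\in\{0,\dots,n-1\}$, the sequence $\iota=\bar{m}_0,\,m_0'',\,\bar{m}_1,\,m_1'',\,\dots,\,m_{n-1}'',\,\bar{m}_n$ alternates a lossy step $m_i''\leq_f\bar{m}_i$ with a standard step $m_i''\rightarrow^{e_{i+1}}\bar{m}_{i+1}$, hence is a $(\leq_f,\omega)$-run of $\os$ from $\iota$ to $\bar{m}_n$.

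Finally I would close with the covering step: $\tau$ is a configuration of $\os$, so it places no nested token on any trash place, and any witness of $\tau\leq_f m_n=\bar{m}_n+k_n$ therefore maps every nested token of $\tau$ into $\bar{m}_n$, giving $\bar{m}_n\geq_f\tau$. Hence the run just built covers $\tau$, so $\tau$ is $(\leq_f,\omega)$-coverable from $\iota$ in $\os$. This is the converse of Theorem~\ref{theorem:equivalenceEOScEOS}, so the two together establish the claimed equivalence.

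I do not expect a genuine obstacle. The substantive work is already isolated in Lemma~\ref{lemma:stepcEOSEOS}, whose statement is precisely that a conservative-closure step equals a lossy step of $\os$ followed by an $\os$-step. What remains is bookkeeping: the uniqueness of the decomposition $m_i=\bar{m}_i+k_i$ (so that $k_{i+1}-k_i$ is well defined and over the trash places), the fact that each descent $m_i''\leq_f\bar{m}_i$ is a legitimate lossy step of a $(\leq_f,\omega)$-run, and the routine induction on run length that carries the per-step statement of Lemma~\ref{lemma:stepcEOSEOS} to the transitive closure. All three are immediate from the definitions.
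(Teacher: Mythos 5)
Your proposal is correct and follows essentially the same route as the paper: decompose each reached configuration of $\os^\cons$ into its $\os$-part plus a trash-part, use that trash places are never preconditions to get $k_{i+1}\geq_s k_i$ and a step $\bar{m}_i\rightarrow\bar{m}_{i+1}+(k_{i+1}-k_i)$, apply Lemma~\ref{lemma:stepcEOSEOS} per step, and concatenate the resulting lossy-then-fire pairs into a $(\leq_f,\omega)$-run. Your explicit final observation that $\tau$, having no trash tokens, is covered already by $\bar{m}_n$ is left implicit in the paper but is the right closing step.
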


The combination of Theorem~\ref{theorem:equivalenceEOScEOS} and Theorem~\ref{theorem:equivalencecEOSEOS} allows us to solve any instance $(\iota,\tau,\os)$ of $(\leq_f,\omega)$-coverability by solving the instance $(\iota, \tau,\os^\cons)$ of $\leq_f$-coverability. This provides the desired reduction, which is polynomial because the construction of $\os^\cons$ is polynomial.
\begin{theorem}\label{thm:equivalenceReduction}
    There is a polynomial reduction from EOS $(\leq_f,\omega)$-coverability to cEOS $\leq_f$-coverability.
\end{theorem}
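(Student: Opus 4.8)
The plan is to take the \emph{conservative closure} of Definition~\ref{def:consEOS} as the reduction map itself. Given an instance $(\os,\iota,\tau)$ of EOS $(\leq_f,\omega)$-coverability, I would first apply the two normalisations that were argued to be without loss of generality: duplicate each system transition so that it occurs in exactly one event of $\Theta$, and split each transition $t$ with $\dest(t)\neq\emptyset$ sitting in a synchronisation event into the pair $t_\text{pre},t_\text{post}$ of Fig.~\ref{fig:destroySystemAutonomous}, after which every transition with non-empty destroy set is system autonomous. The reduction then outputs $(\os^\cons,\iota,\tau)$. Since each output arc to a $\mathit{trash}$ place lands on a place whose type is consumed by $t^\cons$, the inclusion $d(\support{\prefun_{\hat{N}^\cons}(t^\cons)})\subseteq d(\support{\postfun_{\hat{N}^\cons}(t^\cons)})$ holds for every $t^\cons$, so $\os^\cons$ is a genuine cEOS and $(\os^\cons,\iota,\tau)$ is a legal instance of cEOS $\leq_f$-coverability.

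Correctness is then assembled from the two equivalence theorems already established. By Theorem~\ref{theorem:equivalenceEOScEOS}, if $\tau$ is $(\leq_f,\omega)$-coverable from $\iota$ in $\os$ then it is $\leq_f$-coverable from $\iota$ in $\os^\cons$; by Theorem~\ref{theorem:equivalencecEOSEOS} the converse holds. Hence the two instances have the same yes/no answer, and $(\os,\iota,\tau)\mapsto(\os^\cons,\iota,\tau)$ is a valid many-one reduction.

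For the resource bound I would check that every step is polynomial. The first normalisation blows a transition up into at most $\abs{\Theta}$ copies; the second replaces one transition by two, adding one fresh $\mathit{inter}$ place per destroyed type, two $\mathit{enable}$ places, and one new object type, all local to the transition's neighbourhood. The closure itself only appends $\abs{\N}$ trash places and at most one arc per pair $(t,N)$ with $N\in\dest(t)$, leaving the object nets, typing function, and event set otherwise untouched, while $\iota$ and $\tau$ are copied verbatim. Thus $\abs{\os^\cons}$ is polynomial in $\abs{\os}$ and the construction runs in polynomial time, which together with the previous paragraph yields Theorem~\ref{thm:equivalenceReduction}.

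The real content sits in the two equivalence theorems, and there the points one must get right — and which I would treat as the main obstacle if proving this from scratch — are: that the trash places exactly absorb whatever a lossy step would otherwise erase, so that $m\rightarrow^{e^\cons}m'+k$ with $k$ over the trash places simulates $m\rightarrow^{e}m'$ (Lemma~\ref{lemma:stepEOScEOS}); and conversely that a run of $\os^\cons$ is mirrored in $\os$ by prefixing the lossy step $m\geq_f m''$ that empties the objects about to be destroyed (Lemma~\ref{lemma:stepcEOSEOS}), with the $\mathit{enable}$/$\mathit{inter}$ gadget of Fig.~\ref{fig:destroySystemAutonomous} ensuring that any lossiness on those auxiliary places permanently disables the split transition and so cannot manufacture a spurious cover. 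Since those lemmas are in hand, only the bookkeeping described above remains for the present theorem.
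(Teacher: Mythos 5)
Your proposal is correct and follows essentially the same route as the paper: the reduction is the map $(\os,\iota,\tau)\mapsto(\os^\cons,\iota,\tau)$ via the conservative closure of Definition~\ref{def:consEOS} (after the same without-loss-of-generality normalisations), with correctness delegated to Theorems~\ref{theorem:equivalenceEOScEOS} and~\ref{theorem:equivalencecEOSEOS} and a routine check that the construction is polynomial. Your added remarks on why $\os^\cons$ satisfies the conservativity condition and on the size of the normalisation are consistent with the paper's argument and introduce no gap.
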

 
\newcommand{\channelId}[1]{\text{Ch}_{#1}}
\newcommand{\transform}{\mathcal{T}}

\section{Proof of \texorpdfstring{\rnPN}{rnPN} capturing \texorpdfstring{\GnPN}{GnPN}} \label{app:GnPN-rnPN}
This section shows that \GnPN can be simulated using \rnPN, i.e., given a \GnPN \(N\), there is a polynomial-time constructible \rnPN \(N'\) instance that simulates \(N\).

For \(n_1,n_2 \in \mathbb{N}\) where \(n_1 \leq n_2\), \([n_1\cdots n_2]\) is the set \(\{n_1, \cdots,n_2\}\). 

We give a formal definition of \rnPN below:
We now formally define the syntax of \rnPN.
\begin{definition}\label{def:rnpns}
    A \rnPN is a tuple $N=(P,T,F,T',F',R')$ where:
    \begin{enumerate}
        \item $(P,T,F)$ and $(P,T',F')$ are \nPNs where $T\cap T'=\emptyset$.
        \item $R':T'\rightarrow (P\times P)$. For each $t'\in T'$, we denote $R'(t')$ by $(r^{t'}_1,r^{t'}_2)$.\footnote{$R$ stands for \textit{rename} and is used to select the places $p_1$ and $p_6$, in Fig.~\ref{fig:selectiveTransfer}.}
        \item For each $t'\in T'$, $\var(t)=\{x_0,x_1,x_2\}$ and there is a set $P_{t'}=\subseteq P\setminus\{r^{t'}_1,r^{t'}_2\}$ of four distinct places, say  $p_2,p_3,p_4,p_5\in P$, such that, for each $q\in P$:

\begin{scriptsize}
\begin{align*}
F'_{x_0}(q,t')=\begin{cases}
1 &\text{if }q=p_2\\
0 &\text{otherwise}
\end{cases}
\,\,&\,
F'_{x_1}(q,t')=\begin{cases}
1 &\text{if }q=p_3\\
0 &\text{otherwise}
\end{cases}
\,&\,
F'_{x_2}(q,t')=\begin{cases}
1 &\text{if }q=p_4\\
0 &\text{otherwise}
\end{cases}
\\         
F'_{x_0}(t',q)=\begin{cases}
1 &\text{if }q=p_5\\
0 &\text{otherwise}
\end{cases}
\,\,&\,
F'_{x_1}(t',q)=\begin{cases}
1 &\text{if }q=p_3\\
0 &\text{otherwise}
\end{cases}   
\,&\,
F'_{x_2}(t',q)=\begin{cases}
1 &\text{if }q=p_4\\
0 &\text{otherwise}
\end{cases}   
\end{align*}
\end{scriptsize}
    \end{enumerate}
\end{definition}

\begin{definition}
    A configuration of a \rnPN $N=(P,T,F,T',F',R')$ is a configuration for $(P,T,F)$. A transition $t' \in T'$ is \textit{enabled} on a configuration $M=\fmset{m_1,\dots,m_\ell}$ with \nPN mode $e':\var(t')\rightarrow\{1,\dots,\ell\}$ if $t'$ is both enabled on $M$ with mode $e'$ as a transition of $(P,T',F')$.\footnote{I.e., if it is enabled according to $F'$, disregarding $R'$.}
In this case, its \textit{firing} in $N$ returns the configuration $M'$, denoted by $M\rightarrow^{t',e'} M'$, such that $M\rightarrow^{t',e'}M''$ in $(P,T',F')$ and $M'$ is obtained from $M''$ by firing the transfers:
    \begin{align*}
        M'=M'' - \fmset{m_{e'(x_1)},m_{e'(x_2)}}
    &+ \fmset{ m_{e'(x_1)}-m_{e'(x_1)}(r^{t'}_1)\delta_{r^{t'}_1}}\\
    &+ \fmset{m_{e'(x_2)}+m_{e'(x_1)}(r^{t'}_1)\delta_{r^{t'}_2}}
    \end{align*}
    For $t\in T\cup T'$, we write $M\rightarrow^t M'$ in $N$ if there is some mode $e$ such that either $t\in T$ and $M\rightarrow^{t,e}M'$ in $(P,T,F)$, or $t\in T'$ and $M\rightarrow^{t,e}M'$ in $N$.
\end{definition}

\begin{example}
The \rnPN configuration in Fig.~\ref{fig:selectiveTransferPost} is reached from the configuration in Fig.~\ref{fig:selectiveTransferPre} by firing $t$ with mode $e'$ such that $e'(x_0)=\tup{1,2,0,2,0,0}$, $e'(x_1)=\tup{2,0,1,0,0,0}$, and $e'(x_2)=\tup{0,0,0,1,0,0,0}$, i.e., graphically, $e'$ binds  $x_0$ to $c$, $x_1$ to $a$, and $x_2$ to $d$. Note that $e'$ enables the special transition $t$ since $e'(x_0)(p_2)=2\geq 1 = F_{x_0}(p_2,t)$, $e'(x_1)(p_3)=1= F_{x_1}(p_3,t)$, and $e'(x_2)(p_4)=1=F_{x_2}(p_4,t)$, and for all other combinations of variable $x\in\var\{t\}$ and place $q$, $F_x(q,t)=0$.
\end{example}

\mainGnPNrnPN*
\paragraph*{Proof Sketch}Given a \GnPN \(N = (P, T, F, G)\), we will construct a \rnPN \(\bar{N} = (\bar{P}, \bar{T}, \bar{F}, \bar{T}', \bar{F}', \bar{R}')\) in polynomial time that simulates \(N\).

Take any two configurations \(M_i, M_j\) in \(N\), such that for some \(t \in T\) and mode \(e\), 
\(M_i \rightarrow^{e,t} M_j\). 
The aim is to show that, for the configurations that correspond to \(M_i,M_j\) 
(i.e., \(\transform(M_i), \transform(M_j)\) by Def.~\ref{def:transform}) 
\(\transform(M_i) \rightarrow^\ast \transform(M_j)\) and vice-versa.

\begin{definition} \label{def:transform}
    Given a configuration \(M = \fmset{m_1,m_2,\cdots,m_{\abs{M}}}\) in \GnPN \(N = (P,T,F,G)\), we define a transformation function \(\transform: (\mathbb{N}^{\abs{P}})^\oplus \rightarrow (\mathbb{N}^{\abs{\bar{P}}})^\oplus\) that transforms \(M\) to a corresponding configuration \(\transform(M)\) in \rnPN \(\bar{N} = (\bar{P},\bar{T},\bar{F},\bar{T'},\bar{F}',\bar{R}')\) such that \(\transform(M) = \fmset{\bar{m}_1,\bar{m}_2,\cdots,\bar{m}_{\abs{M}},\bar{m}_{cs}}\) in \rnPN \(\bar{N}\) where for all \(i \in [\abs{M}]\), 
    \begin{enumerate*}
        \item for all \(p \in [|P|]\), \(\bar{m}_i{\downarrow_p} = m_i{\downarrow_p}\),
        \item for all \(p' \in [|P|+1 \cdots \bar{P}]\), \(\bar{m}_i{\downarrow_{p'}} = 0\)
        \item \(\bar{m}_{cs}{\downarrow_{|\bar{P}|}} = 1\) and 
        \item for all \(p'' \in [|\bar{P}|-1]\), \(\bar{m}_{cs}{\downarrow_{p''}} = 0\) where \(m\downarrow_{i}\) denotes the \(i^{th}\) component of the tuple \(m\).
    \end{enumerate*}
      
\end{definition}

Intuitively, we show that firing \(t \in T\) is simulated by firing a sequence of transitions in \(\bar{T}\). \(q_{select} \in \bar{P}\) is a special place that will choose which transition \(t \in T\) to simulate in \(\bar{N}\) (Notice how we use \(q_{select}\) as an indicator in order to correspond to a configuration in \(N\) in Def.~\ref{def:transform}). The set of selective transfers defined by \(G_t\) will be decomposed into one selective transfer per transition in \(\bar{N}\), such that each resulting transition belongs to \(\bar{T}'\). Before performing these selective transfers sequentially, for each transition \(t \in \bar{T}'\), we provide a gadget—named \emph{move block}—that extracts all the tokens from the place \(p \in P\) of the tuple instantiated by \(x_i\) and puts them in an intermediate place \(\bar{p} \in \bar{P}\) if there is a selective transfer depicted by \(G_t(x_i,x_j)[p,q]=1\) for some $x_j \in \X(t)$, $q \in P$. This is important because there might be cyclic dependencies like, from $p$ to $q$ we have a selective transfer, and then one from $q$ to $p$, all of them being labeled with the same variable. This will result in a swapping of tokens between $p$ and $q$ in \GnPN which needs to be captured in \rnPN. If we don't put the tokens in intermediate places and try to do the selective transfer sequentially then we will be adding tokens and then transferring instead of swapping. One other way would be to produce fresh tuples and use them as intermediate labels to do the transfers and then relabel them to simulate swapping. 

The sequence of transitions begin with the preconditions (defined by the gadget \emph{pre block}), followed by all the selective transfers with or without renaming (in an order determined by the indexing function in \cref{def:channelId} defined by the gadget \emph{rename block} and \emph{no-rename block}), and finally the post-conditions(defined by the gadget \emph{post block}).

\begin{definition} \label{def:channelId}
    Given a \GnPN \(N = (P,T,F,G)\), for each \(t\in T\), we define 
    \[k_t = \sum\limits_{\substack{x_i,x_j \in \X(t) \\ x_i \neq x_j}} \sum\limits_{\substack{p,q \in P \\ p \neq q}} G_t(x_i,x_j)[p,q]\]
    to be the number of selective transfers and a partial injective function \(\channelId{t}: (\X \times \X \times P \times P) \rightarrow [k_t]\) such that \(\channelId{t}(x_i,x_j,p,q)=i\) if \(G_t(x_i,x_j)[p] = \delta_q\) that is \(\channelId{t}\) maps each selective transfer operation to a unique index $i$. 
\end{definition}

To enforce this sequential execution, we introduce a sequence of places named using the letter \(q\)—referred to as the \emph{control sequence}—that enables firing these decomposed transitions in the desired order. Moreover, the control sequence ensures that all other transitions not involved in simulating \(t\) remain disabled during this process. As a result, it guarantees a deterministic sequence of transition firings that faithfully simulate the firing of \(t\). In fact, we have a dedicated tuple \(\bar{m}_{cs}\) for the control sequence (see Def.~\ref{def:transform}), which will always be captured using a special variable \(x_{cs}\) by construction. The construction is such that the tuple has a token precisely in one place at any point of the execution(this simulates a DFA that sequentializes the transitions).

\begin{figure}[t]
    \begin{subfigure}{0.5\linewidth}
        \centering
        \scalebox{0.7}
{\begin{tikzpicture}

\node[triangle,label={[name=qselLab]above:\scriptsize $q_{select}$}](qsel)at (-2.2,2){};

\path [rectangle, fill=gray!10](-2,-1) to (-1,-1) to (-1,1.2) to (-2,1.2);
\node[label={[name=preLab]above:\scriptsize \tiny $\text{pre block}$}](preB)at(-1.5,1){};
\node[empty] (predot)at (-1.5,-0.5){\scriptsize \tiny $\vdots$};
\node[transvert,label={[name=tpreLab]above:\scriptsize  $\tau_t^{pre}$}] (tpre)at (-1.5,0){};
\node[triangle,label={[name=qtLab]above:\scriptsize $q_{t}^{move}$}](qtmove)at (-0.5,0){};
\path [rectangle, fill=gray!10](0,-1) to (1,-1) to (1,1.2) to (0,1.2);
\node[label={[name=preLab]above:\scriptsize \tiny $\text{move block}$}](moveB)at(0.5,1){};
\node[empty] (empty)at (0.5,0.5){};

\node[triangle,label={[name=qt1Lab]above:\scriptsize $q_{t}^{1}$}](qt1)at (1.5,0){};
\path [rectangle, fill=gray!10](2,-1) to (3,-1) to (3,1.2) to (2,1.2);
\node[label={[name=tBlockLab]above:\scriptsize \tiny $\text{transfer block 1}$}](tBlock)at(2.5,1){};
\node[transvert,label={[name=t1Lab]above:\scriptsize  $\tau_t^{1}$}] (t1)at (2.5,0){};
\node[empty] (tdots)at (2.5,-0.5){\scriptsize \tiny $\vdots$};
\node[empty] (dots)at (3.5,0){\scriptsize \tiny $\cdots$};
\node[triangle,label={[name=qtktLab]above:\scriptsize $q_{t}^{k_t}$}](qtkt)at (4.5,0){};
\path [rectangle, fill=gray!10](5,-1) to (6,-1) to (6,1.2) to (5,1.2);
\node[label={[name=tktBlockLab]above:\scriptsize \tiny $\text{transfer block $k_t$}$}](tktBlock)at(5.5,1){};
\node[transvert,label={[name=tktLab]above:\scriptsize  $\tau_{t}^{k_t}$}] (tkt)at (5.5,0){};
\node[empty] (tktdots)at (5.5,-0.5){\scriptsize \tiny $\vdots$};
\node[triangle,label={[name=qtpostLab]above:\scriptsize $q_{t}^{post}$}](qtpost)at (6.5,0){};
\path [rectangle, fill=gray!10](7,-1) to (8,-1) to (8 ,1.2) to (7,1.2);
\node[transvert,label={[name=tpostLab]above:\scriptsize  $\tau_{t}^{post}$}] (tpost)at (7.5,0){};
\node[label={[name=postLab]above:\scriptsize \tiny $\text{post block}$}](postB)at(7.5,1){};
\draw[->] (qsel) |- node [midway,below] {$x_{cs}$} (tpre);
\draw[->] (tpre)-- node[midway,below]{$x_{cs}$} (qtmove);
\draw[->] (qtmove)-- node[midway,below]{$x_{cs}$} ($(empty)-(0.2,0.5)$);
\draw[->] ($(empty)+(0.2,-0.5)$)-- node[midway,below]{$x_{cs}$} (qt1);
\draw[->] (qt1)-- node[midway,below]{$x_{cs}$} (t1);
\draw[->] (t1)-- node[midway,below]{$x_{cs}$} (dots);
\draw[->] (dots)-- node[midway,below]{$x_{cs}$} (qtkt);
\draw[->] (qtkt)-- node[midway,below]{$x_{cs}$} (tkt);
\draw[->] (tkt)-- node[midway,below]{$x_{cs}$} (qtpost);
\draw[->] (qtpost)-- node[midway,below]{$x_{cs}$} (tpost);
\draw[->] ($(tpost.east)$)--($(tpost.east)+(0.6,0)$) |- node[pos=0.8,above]{$x_{cs}$}(qsel);

\end{tikzpicture}
}        
        \caption{Control Sequence}
        \label{fig:CNuPN_cs}
    \end{subfigure}
    \begin{subfigure}{0.4\linewidth}
        \centering
        \scalebox{0.7}{\begin{tikzpicture}

\node[place,label={[name=p1Lab]above:\scriptsize $p_{pre(1)}$}](p1)at (0,0){};
\node[empty](txt) at (0,-0.5){$\vdots$};
\node[place,label={[name=pnLab]below:\scriptsize $p_n$}](pn)at (0.5,-1){};

\node[transvert,label={[name=tLab]above:\scriptsize  $\tau_t^{pre}$}] (t)at (1.5,0){};

\node[place,label={[name=px1Lab]above:\scriptsize $p_{x_1}$}](px1)at (3,0){};
\node[empty](txt) at (3,-0.5){$\vdots$};
\node[place,label={[name=pxtLab]below:\scriptsize $p_{x_{\lvert \X(t)\rvert}}$}](pxt)at (2.5,-1){};

\node[triangle,label={[name=qselLab]above:\scriptsize $q_{select}$}](qsel)at (0.5,1.5){};

\node[triangle,label={[name=qt1Lab]above:\scriptsize $q_{t}^{move}$}](qt1)at (2.5,1.5){};

\draw[->] (p1) to node[above,midway,sloped] {\scriptsize $m_1$}(t);

\draw[->] (pn) to node[above,midway,sloped] {\scriptsize $m_n$}($(t.west)-(0,0.2)$);

\draw[->](t.east) to node[above,midway] {\scriptsize$x_1$}(px1);
\draw[->]($(t.east)-(0,0.2)$) to node[above,midway,sloped] {\scriptsize$x_{\lvert \X(t)\rvert}$}(pxt);

\draw[->](qsel) to node[below,midway,sloped] {\scriptsize$x_{cs}$}(t);
\draw[->](t) to node[below,midway,sloped] {\scriptsize$x_{cs}$}(qt1);

\end{tikzpicture}
}
        \caption{pre<$t$> block}
        \label{fig:CNuPN_pre}
    \end{subfigure}
    \begin{subfigure}{0.5\linewidth}
        \centering
        \scalebox{0.7}{\begin{tikzpicture}

\node[place,label={[name=pLab]left:\scriptsize $\bar{p}$}](p)at (0,0){};

\node[triangle,label={[name=qtiLab]above:\scriptsize $q_{t}^{i}$}](qti)at (0,1.5){};
\node[place,label={[name=pxLab]above:\scriptsize $p_x$}](px)at (1,1.75){};
\node[place,label={[name=pyLab]above:\scriptsize $p_y$}](py)at (2,1.75){};
\node[transvert,label={[name=tLab]below:\scriptsize  $\tau_t^{i}$}] (t)at (1.5,0){};
\node[place,label={[name=qLab]right:\scriptsize $q$}](q)at (3,0){};
\node[triangle,label={[name=qti1Lab]above:\scriptsize $q_{t}^{i+1}$}](qti1)at (3,1.5){};

\draw[->,myDouble] (p) to node[below,midway,sloped] {\scriptsize $x$}(t);
\draw[->](qti) to node[below,midway,sloped] {\scriptsize$x_{cs}$}(t);
\draw[<->] (px) to node[below,midway,sloped] {\scriptsize $x$}(t);
\draw[<->] (py) to node[above,midway,sloped] {\scriptsize $y$}(t);
\draw[->](t) to node[below,midway,sloped] {\scriptsize$x_{cs}$}(qti1);
\draw[->,myDouble](t.east) to node[below,midway] {\scriptsize$y$}(q);
\end{tikzpicture}
}
        \caption{rename<$i,t$> block}
        \label{fig:CNuPN_rename}
    \end{subfigure}
    \begin{subfigure}{0.5\linewidth}
        \centering
        \scalebox{0.7}{\begin{tikzpicture}

\node[place,label={[name=px1Lab]above:\scriptsize $p_{1}$}](px1)at (0,0){};
\node[place,label={[name=pxtLab]below:\scriptsize $p_{x_{\lvert \X(t)\rvert}}$}](pxt)at (0.5,-1.5){};

\node[transvert,label={[name=tLab]above:\scriptsize  $\tau_t^{post}$}] (t)at (1.5,0){};

\node[place,label={[name=p1Lab]above:\scriptsize $p_{post(1)}$}](p1)at (3,0){};
\node[place,label={[name=pnLab]below:\scriptsize $p_{post(n)}$}](pn)at (2.5,-1.5){};

\node[triangle,label={[name=qpostLab]above:\scriptsize $q_{post}$}](qpost)at (0.5,1.5){};
\node[triangle,label={[name=qselLab]above:\scriptsize $q_{select}^1$}](qsel)at (2.5,1.5){};
\draw[->] (px1) to node[above,midway] {\scriptsize $x_1$}(t);
\draw[] (px1) to node[below,midway] {\scriptsize $\vdots$}(t);
\draw[->] (pxt) to node[below,midway,sloped] {\scriptsize $x_{\lvert \X(t)\rvert}$}(t);
\draw[->](t.east) to node[above,midway] {\scriptsize$x_1$}(p1);
\draw[](t.east) to node[below,midway] {\scriptsize$\vdots$}(p1);
\draw[->](t) to node[below,midway,sloped] {\scriptsize$x_n$}(pn);
\draw[->](qpost) to node[below,midway,sloped] {\scriptsize$x_{cs}$}(t);
\draw[->](t) to node[below,midway,sloped] {\scriptsize$x_{cs}$}(qsel);

\end{tikzpicture}
}       
        \caption{post<$t$> block}
        \label{fig:CNuPN_post}
    \end{subfigure}
    
    \begin{subfigure}{0.5\linewidth}
         \centering
         \scalebox{0.7}{\begin{tikzpicture}

\node[place,label={[name=pLab]left:\scriptsize $\bar{p}$}](p)at (0,0){};

\node[triangle,label={[name=qtiLab]above:\scriptsize $q_{t}^{i}$}](qti)at (0,1.5){};
\node[place,label={[name=pxLab]above:\scriptsize $p_x$}](px)at (1,1.75){};
\node[place,label={[name=pyLab]above:\scriptsize $p_z$}](py)at (2,1.75){};
\node[transvert,label={[name=tLab]below:\scriptsize  $\bar{\tau}_t^{i}$}] (t)at (1.5,0){};
\node[place,label={[name=qLab]below:\scriptsize $\bar{p}^{i}_{t}$}](pti)at (3,0){};
\node[triangle,label={[name=qti1Lab]above:\scriptsize $q_{t}^{-i}$}](qt-i)at (3,1.5){};

\draw[->,myDouble] (p) to node[below,midway,sloped] {\scriptsize $x$}(t);
\draw[->](qti) to node[below,midway,sloped] {\scriptsize$x_{cs}$}(t);
\draw[<->] (px) to node[below,midway,sloped] {\scriptsize $x$}(t);
\draw[<->] (py) to node[above,midway,sloped] {\scriptsize $z$}(t);
\draw[->](t) to node[below,midway,sloped] {\scriptsize$x_{cs}$}(qti1);
\draw[->,myDouble](t.east) to node[below,midway] {\scriptsize$z$}(pti);

\node[transvert,label={[name=tLab]below:\scriptsize  ${\tau}_t^{i}$}] (bart)at (4.5,0){};
\draw[->,myDouble](pti.east) to node[below,midway] {\scriptsize$z$}(bart);
\draw[->](qt-i) to node[below,midway,sloped] {\scriptsize$x_{cs}$}(bart);

\node[place,label={[name=pxLab]above:\scriptsize $p_x$}](pxr)at (4,1.75){};
\node[place,label={[name=pyLab]above:\scriptsize $p_z$}](pyr)at (5,1.75){};
\node[triangle,label={[name=qti1Lab]above:\scriptsize $q_{t}^{i+1}$}](qti1)at (6,1.5){};
\draw[<->] (pxr) to node[below,midway,sloped] {\scriptsize $x$}(bart);
\draw[<->] (pyr) to node[above,midway,sloped] {\scriptsize $z$}(bart);
\draw[->](bart) to node[below,midway,sloped] {\scriptsize$x_{cs}$}(qti1);
\node[place,label={[name=qLab]below:\scriptsize $q$}](q)at (6,0){};
\draw[->,myDouble](bart) to node[below,midway] {\scriptsize$z$}(q);
\end{tikzpicture}
}
         \caption{no-rename<$i,t$> Block }
        \label{fig:norename}
    \end{subfigure}
    \begin{subfigure}{0.5\linewidth}
          \centering
         \scalebox{0.7}{\begin{tikzpicture}
\node[place,label={[name=pLab]above:\scriptsize $\bar{p}$}](p)at (0,0){};
\node[transvert,label={[name=tLab]above:\scriptsize  $\tau_t^{i}$}] (t)at (1.5,0){};
\node[place,label={[name=qLab]above:\scriptsize $\bar{p}^{i}_{t}$}](pti)at (3,0){};
\node[transvert,label={[name=tLab]above:\scriptsize  $\bar{\tau}_t^{i}$}] (bart)at (4.5,0){};
\node[place,label={[name=qLab]above:\scriptsize $q$}](q)at (6,0){};
\node[triangle,label={[name=qtiLab]above:\scriptsize $q_{t}^{i}$}](qti)at (0,1){};
\node[triangle,label={[name=qti1Lab]above:\scriptsize $q_{t}^{-i}$}](qt-i)at (3,1){};
\node[triangle,label={[name=qti1Lab]above:\scriptsize $q_{t}^{i+1}$}](qti1)at (6,1){};
\draw[->,myDouble] (p) to node[above,midway,sloped] {\scriptsize $x$}(t);
\draw[->,myDouble](t.east) to node[above,midway] {\scriptsize$z$}(pti);
\draw[->,myDouble](pti.east) to node[above,midway] {\scriptsize$z$}(bart);
\draw[->,myDouble](bart) to node[above,midway] {\scriptsize$z$}(q);
\draw[->](qti) to node[above,midway,sloped] {\scriptsize$x_{cs}$}(t);
\draw[->](t) to node[above,midway,sloped] {\scriptsize$x_{cs}$}(qt-i);
\draw[->](qt-i) to node[above,midway,sloped] {\scriptsize$x_{cs}$}(bart);
\draw[->](bart) to node[above,midway,sloped] {\scriptsize$x_{cs}$}(qti1);
\node[place,label={[name=pxLab]below:\scriptsize $p_x$}](px)at (2,-1.75){};
\node[place,label={[name=pyLab]below:\scriptsize $p_z$}](pz)at (4,-1.75){};

\draw[<->] (px) to node[below,midway,sloped] {\scriptsize $x$}(t);
\draw[<->] (px) to node[below,near start,sloped] {\scriptsize $x$}(bart);
\draw[<->] (pz) to node[above,near start,sloped] {\scriptsize $z$}(t);
\draw[<->] (pz) to node[above,midway,sloped] {\scriptsize $z$}(bart);
\end{tikzpicture}
}
         \caption{move<$i,t$> Block}
            \label{fig:enter-label}
    \end{subfigure}
    \caption{Schematic diagrams representing the simulation of \GnPN by \rnPN} 
    \label{fig:CNuPN_ex}
\end{figure}

We provide a schematic diagram of the construction in \cref{fig:CNuPN_ex}. We assume that for each transition \(t \in T\), \(\X(t) = \{x_1,x_2,\cdots,x_{\abs{\X(t)}}\}\) without loss of generality. From now on, for a transition \(t\), the function \(\bar{F}\) returns 0, w.r.t. any place or variable for which \(\bar{F}\) is not defined.
For each transition \(t \in T\), we define the following gadgets (refer to the diagrams provided in \cref{fig:CNuPN_ex}):

\textbf{Pre-block<t>} make use of the the places \(\{p \mid p \in \prefun(t)\}\), auxiliary places \(\{q_{select},q^{move}_t\} \cup \{p_{x_i} \mid i \in \abs{\X(t)}\}\), and the transition \(\tau^{pre}_t\). When \(\tau^{pre}_t\) is fired, it does the following:
\begin{enumerate*}
    \item Move a token from \(q_{select}\) to \(q^{move}_t\), i.e., \(\bar{F}_{x_{cs}}(q_{select},\tau^{pre}_t) = \bar{F}_{x_{cs}}(\tau^{pre}_t,q^{move}_t) = 1\)
    \item simulates the firing of the pre-conditions of \(t \in T\), i.e., for all \(x_i \in \X(N)\),
    $\bar{F}_{x_i}(p,\tau^{pre}_t) = F_{x_i}(p,t) \text{ if } p \in P \\ $
    \item put a token in the tuple instantiated by the variable \(x_i \in \X(t)\) to the place \(p_{x_i}\), i.e., for all \(x_i \in \X(t)\), \(\bar{F}_{x_i}(\tau^{pre}_t,p_{x_i}) = 1\). This amounts to selecting the mode for the transition \(t\).
\end{enumerate*}

\textbf{Post-block<t>} uses the places \(\{p \mid p \in \postfun(t)\}\), auxiliary places \(\{q_{select},q^{post}_t\} \cup \{p_{x_i} \mid i \in \abs{\X(t)}\}\), and the transition \(\tau^{post}_t\). When \(\tau^{post}_t\) is fired, it does the following:
\begin{enumerate*}
    \item Move a token from \(q^{post}_t\) to \(q_{select}\), i.e., \(\bar{F}_{x_{cs}}(q^{post}_t,\tau^{post}_t) = \bar{F}_{x_{cs}}(\tau^{post}_t,q_{select}) = 1\)
    \item simulates the firing of the post-conditions of \(t \in T\), i.e., for all \(x_i \in \X(N)\),
    $\bar{F}_{x_i}(\tau^{post}_t,p) = F_{x_i}(t,p) \text{ if } p \in P $
    \item remove the token in the tuple instantiated by the variable \(x_i \in \X(t)\) from the place \(p_{x_i}\), i.e., for all \(x_i \in \X(t)\), \(\bar{F}_{x_i}(p_{x_i},\tau^{pre}_t) = 1\).
\end{enumerate*}

For each \(i = \channelId{t}(x,y,p,q)\), we define a \textbf{move<i,t> block} that uses the place \(p\) from \(P\), auxiliary places \(\bar{p},\bar{p}^t_i,q^t_{i-1},\bar{q}_{i}^t,p_x,p_z,q_{i}^t\) for some \(z \in \X(t)\setminus \{x\}\), and the transitions \(\tau^t_i,\bar{\tau}^t_i\). Sequentially firing the transitions \(\tau^t_i,\bar{\tau}^t_i\) results in the following:
\begin{enumerate*}
    \item Moves a token from \(q_{i-1}^t\) to \(q_{i}^t\)
    \item instantiates \(x\) and \(z\) by the tuples having a token in \(p_x\) and \(p_z\) respectively when executing both the transitions
    \item the tokens in the place \(p\) of the tuple instantiated by the variable \(x\) are moved to \(\bar{p}\) in the same tuple.
\end{enumerate*} 

For each \(i = \channelId{t}(x,y,p,q)\),
\begin{enumerate}
    \item if \(x \neq y\), then we define a \textbf{rename<i,t> block} that uses the place \(q\) from \(P\), auxiliary places \(\bar{p},q^i_t,p_x,p_y,q^{i+1}_t\), and the transition \(\tau^i_t\). When \(\tau^i_t\) is fired, it performs the following:
    \begin{enumerate*}
        \item Moves a token from \(q^{i}_t\) to \(q^{i+1}_t\), i.e., \(\bar{F}_{x_{cs}}(q^{i}_t,\tau^{i}_t) = \bar{F}_{x_{cs}}(\tau^{i}_t,q^{i+1}_{t}) = 1\)
        \item instantiates \(x\) and \(y\) by the tuples having a token in \(p_x\) and \(p_y\) respectively
        \item performs a selective transfer of the tokens from the tuple instantiated by \(x\) in the place \(p\) to the place \(q\) in the tuple instantiated by \(y\).
    \end{enumerate*}

    \item if \(x=y\), then we define a \textbf{no-rename<i,t> block} which has the same underlying structure as move<i> block with the places and transitions relabeled as \(p \mapsto \bar{p}\), \(\bar{p} \mapsto q\), \(q^t_{i-1} \mapsto q^{i}_t\), \(q^t_i \mapsto q^{i+1}_t\), \(\bar{q}^t_{i} \mapsto \bar{q}^{i}_t\), \(\bar{p}^t_i \mapsto \bar{p}^{i}_t\), \(\bar{p}^t_i \mapsto \bar{p}^{i}_t\), \(p_x \mapsto p_x\), \(p_z \mapsto p_z\), \(\tau^t_i \mapsto \bar{\tau}^{i}_t\) and \(\bar{\tau}^t_i \mapsto \tau^{i}_t\).
\end{enumerate}
Notice in Fig.~\ref{fig:CNuPN_cs}, how the individual blocks sequentially arrange themselves thanks to the control sequence. We will write <block-name><i> for <block-name><$i,t$>, when \(t\) is clear from the context. To simulate \(M_i \rightarrow^{t,e} M_j\) in the \GnPN \(N\) using the \rnPN \(\bar{N}\), that we constructed above, \(\T(M_i)\), \(q_{select}\) selects the transition \(\tau^{pre}_t\) to execute the pre<t> block. 

Note that we will use the mode \(\bar{e}\) defined as \(\bar{e}(x) = \begin{cases}
    e(x) &\text{ if }x \in \X(t)\\
    m_{cs}  &\text{ if }x=x_{cs}
\end{cases}\) for all the transitions fired in \(\bar{N}\) to simulate \(t\). 

After firing the pre<t> block, the control sequence ensures execution of the following blocks in sequence:
\begin{enumerate}
    \item \label{item:move}for all \(i = \channelId{t}(x,y,p,q)\), move<i> block 
    \item \label{item:transfer}for all \(i = \channelId{t}(x,y,p,q)\), if \(x\neq y\), rename<i> block else norename <i> block
\end{enumerate}
Then, the control sequence fires the post<t> block which essentially yields \(\T(M_j)\).

Note that for some \(i=\channelId{t}(x,y,p,q)\), there are two effects on the marking \(M_i\), \begin{enumerate*}
    \item The value at the position corresponding to \(p\) on the tuple selected by \(e(x)\) becomes \(0\)
    \item The value at the position corresponding to \(q\) on the tuple selected by \(e(y)\) is added with the value at the position corresponding to \(p\) on the tuple selected by \(e(x)\)
\end{enumerate*} 
In fact, we get the same resultant effect on the marking \(\T(M_i)\) after firing the blocks referred in item \ref{item:move} and \ref{item:transfer}.

\begin{lemma} \label{lemma:GnPn=rnPn}
    Given two configurations \(M_i\) and \(M_j\) in \(N\), for some transition \(t \in T\) and mode \(e\), \(M_i \rightarrow^{t,e} M_j\) iff for the corresponding configurations \(\transform(M_i)\) and \(\transform(M_j)\) in \(\bar{N}\), there is a sequence of transitions \(t_1,t_2,\cdots,t_\ell\) and modes \(e_1,e_2,\cdots,e_\ell\) such that \(\transform(M_i) = \bar{M}_0 \rightarrow^{t_1,e_1} \bar{M}_1 \rightarrow^{t_2,e_2} \cdots \rightarrow^{t_\ell,e_\ell} \bar{M}_\ell = \transform(M_j)\).
\end{lemma}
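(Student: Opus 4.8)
The plan is to treat one \emph{round} of $\bar N$ — the consecutive firing of $\tau_t^{pre}$, then the move-blocks, then the transfer-blocks, then $\tau_t^{post}$ of a single $t\in T$ — as the exact counterpart of one step $M_i\to^{t,e}M_j$ of $N$, and to prove the equivalence by tracking the configuration of $\bar N$ after each block.

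First I would isolate the structural invariants that hold along every run of $\bar N$ starting from a configuration of the form $\transform(M)$, by a routine induction on the run length using the block shapes in Fig.~\ref{fig:CNuPN_ex}: there is a unique \emph{control tuple} carrying a single token on exactly one control-sequence place ($q_{select}$, $q_t^{move}$, $q_t^i$, $q_t^{-i}$, $q_t^{post}$, \dots), and every transition consumes $x_{cs}$ from one such place and produces it on the next, so that $\bar N$ runs deterministically modulo the choice of $t$ when the control token is on $q_{select}$ and the choice of data-tuple mode; moreover the scratch places $p_{x_i}$, $\bar p$, and $\bar p^i_t$ are empty precisely when the control token is on $q_{select}$ and inside a round hold exactly what the construction prescribes; and a data tuple not bound by the active mode is untouched for the entire round.

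For the forward direction, write $M_i=M''+\sum_{x\in\X(t)}\fmset{m_{e(x)}}$ and use the mode $\bar e$ extending $e$ by $\bar e(x_{cs})=\bar m_{cs}$. Firing $\tau_t^{pre}$ removes the preconditions, so each selected tuple becomes $m''_{e(x)}=m_{e(x)}-F_x(P,t)$ plus a token on $p_x$. Firing the move-block for each $i=\channelId{t}(x,y,p,q)$, in increasing index order, relocates all tokens on $p$ of the current $x$-tuple into its $\bar p$-component; then firing the corresponding transfer-block moves them to the $q$-component of the $y$-tuple, renaming $x\mapsto y$ when $x\ne y$ and routing through a scratch tuple $z$ in the no-rename case $x=y$ (a $\rnPN$ transfer only ever acts between two distinct tuples). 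Because each $G_t(x_i,x_j)[p]$ is $0$ or $\delta_q$ with the column constraints of Def.~\ref{def:CNUPN}, every source entry $(x_i,p)$ is redirected to exactly one target entry, while entries with $G_t(x_i,x_i)[p]=\delta_p$ are moved by no block and hence carried through unchanged; performing all move-blocks before any transfer-block is what makes this correct even when channels form cycles among co-labelled places (the ``swap'' case). Thus after the last transfer-block every selected tuple equals $m'_{e(x_j)}-F_{x_j}(t,P)$, with $m'$ as in Def.~\ref{def:stepCNUPN}, and then $\tau_t^{post}$ adds the postconditions $F_{x_j}(t,P)$, creates the fresh tuples $F_\nu(t,P)$ for $\nu\in\Upsilon(t)$, clears the $p_{x_j}$-tokens, and returns the control token to $q_{select}$, giving exactly $\transform(M_j)$.

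For the converse I would invoke the control-sequence invariant: at $\transform(M_i)$ only the transitions $\tau_t^{pre}$ are enabled, and once one fires the control sequence forces the run through that same $t$'s move- and transfer-blocks and back to $q_{select}$; since $\transform(M_j)$ again places the control token on $q_{select}$ with all scratch places empty, any run between the two encodings decomposes into complete rounds, and reading off the data-tuple effect of a round (computed above) yields $M_i\to^{t,e}M_j$ when the run is a single round, and $M_i\to^\ast M_j$ in general — which is exactly the reachability correspondence needed for $\bar N$ to simulate $N$. The step I expect to be the main obstacle is the move-block analysis: setting up the bookkeeping so that extracting \emph{all} channel sources into intermediate places before any redistribution faithfully realizes the \emph{simultaneous} transfer semantics encoded by $G_t$, and verifying that the single-$1$/no-duplication restrictions on $G_t$ make the per-channel indexing $\channelId{t}$ well defined.
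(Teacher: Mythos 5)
Your proposal takes essentially the same route as the paper's proof: one firing of $t$ is simulated by the control-sequence-enforced round $\tau_t^{pre}$, then all move-blocks, then the rename/no-rename blocks, then $\tau_t^{post}$, with the intermediate configurations tracked block by block exactly as in the paper (including the reason for extracting channel sources into the auxiliary places before any redistribution, to handle swaps). Your handling of the converse via the control-token invariant and decomposition into complete rounds is in fact more explicit than the paper, which writes out only the forward computation and leaves the backward direction to the determinism imposed by the control sequence.
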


\paragraph*{Proof of \cref{lemma:GnPn=rnPn}}
Given \GnPN \(N = (P,T,F,G)\), two configurations \(M_i\) and \(M_j\) such that \(M_i \rightarrow^{t,e} M_j\) for some transition \(t\) and mode \(e\) . We have 
\(M_i = M'' + \sum\limits_{x \in \X(t)} \fmset{m_{e(x)}} = \fmset{m_1,m_2,\cdots,m_{\abs{M_i}}}\) and \(M_j = M'' + \out + \sum\limits_{x\in \X(t)} \fmset{m_{e(x)}'} \)  where \(m_{e(x_\ell)}' = \sum\limits_{x_k \in \X(t)} \left( m_{e(x_k)}'' \ast G_t(x_k,x_\ell) \right) + F_{x_\ell} (t,P) \), \(m_{e(x_k)}'' = m_{e(x_k)} - F_{x_k}(P,t)\).

Now, take the \rnPN \(\bar{N} = (\bar{P}, \bar{T}, \bar{F}, \bar{T'}, \bar{F'}, \bar{R'})\) obtained using the above construction. Consider the configurations \(\transform(M_i)\), \(\transform(M_j)\) by the Def.~\ref{def:transform}. We show that \(\transform(M_i) \rightarrow^{\ast} \transform(M_j)\).
Take \(\transform(M_i) = \bar{M}'' + \sum\limits_{x \in \X(t)} \fmset{\bar{m}_{e(x)}} + \bar{m}_{cs}\).
Take the transition \(\tau^{pre}_t\), and the mode \(\bar{e}\).
Notice that the transition \(\tau^{pre}_t\) is enabled in mode \(\bar{e}\) for \(\transform({M}_i)\), since all the pre-conditions are satisfied, i.e., \(F_{x_{cs}}({q_{start}, \tau^{pre}_t})=1\) and for all \( t \in T, p \in P, x \in \X(t)\ \bar{F}_{x}(p,\tau^{pre}_t)=F_x(p,t)\) and \(t\) is enabled in mode \(e\) for the configuration \(M_i\). 
Now, after firing the transition \(\tau^{pre}_t\), we will get a resulting configuration \(\bar{M}_{pre} = \bar{M}'' + \sum\limits_{x \in \X(t)} \fmset{\bar{m}_{\bar{e}(x)}''} + \bar{m}_{cs}^{pre}\) where \(\bar{m}_i \in \bar{M}''\) iff \(m_i \in M''\), $\bar{m}_{cs}^{pre} = \delta_{q^{move}_t}$, \(\bar{m}_{\bar{e}(x)}''(P) = m''_{e(x)}\) and \((\bar{m}_{\bar{e}(x)}''(\bar{P}\setminus P) = \textbf{0}\). Now, after firing all the move<i> blocks, we get the configuration \(\bar{M}_{pre} = \bar{M}'' + \sum\limits_{x \in \X(t)} \fmset{\bar{m}_{\bar{e}(x)}^{move}} + \bar{m}_{cs}^{move}\), where for all \(x \in \X(t)\), \(p \in P\),\\
if  \(G_t(x,x)[p,p]=1\) then
\(\bar{m}_{\bar{e}(x)}^{move}(p) = m''_{e(x)}(p)\), \(\bar{m}_{\bar{e}(x)}^{move}(\bar{p})=0\)\\
else if for \(y \in \X(t)\), \(q \in P\), \(G_t(x,y)[p,q]=1\) and \(x\neq y\) or \(p \neq q\) then \(\bar{m}_{\bar{e}(x)}^{move}(\bar{p}) = m''_{e(x)}(p)\), \(\bar{m}_{\bar{e}(x)}^{move}(p) =0\)
and \(m''_{cs} = \delta_{q^1_t}\).
Now, the transitions <block-name><i>, for all \(i \in \channelId{t}(x,y,p,q)\) where block-name = rename if \(x \neq y\) else block-name = no-rename, are fired.
We get the configuration \(\bar{M}_{trans} = \bar{M}'' + \sum\limits_{x \in \X(t)} \fmset{\bar{m}_{\bar{e}(x)}^{trans}} + \bar{m}_{cs}^{trans}\) 
where \(\bar{m}_{\bar{e}(x)}^{trans}(P) = m'_{e(x)} + F_{e(x)}(t,P) = \sum\limits_{y \in \X(t)}(m''_{e(y)} \ast G_t(y,x))\), 
\(\bar{m}_{\bar{e}(x)}^{trans}(\bar{P}\setminus P) = 0\) and \(\bar{m}_{\bar{e}(x_{cs})}^{trans} = \delta_{q_{post}}\). 
Now, after firing the transition \(\tau^{post}_t\), we will get a resulting configuration \(\bar{M}_{post} = \bar{M}'' + \sum\limits_{x \in \X(t)} \fmset{\bar{m}_{\bar{e}(x)}'} + \bar{m}_{cs}^{post} = \T(M_j)\) where $\bar{m}_{cs}^{post} = \delta_{q_{select}}$, \(\bar{m}_{\bar{e}(x)}'(P) = m'_{e(x)}\) and \((\bar{m}_{\bar{e}(x)}'(\bar{P}\setminus P) = \textbf{0}\).

\end{document}